
\documentclass[11pt,a4paper]{article}

\usepackage[margin=1in]{geometry}
\usepackage{algpseudocode}
\usepackage{algorithm}
\usepackage{amsmath}
\usepackage{amssymb}
\usepackage{amsthm}
\usepackage{mathrsfs}
\usepackage{amsbsy}
\usepackage{upgreek}
\usepackage[table]{xcolor}
\usepackage{enumerate}
\usepackage{todonotes}
\usepackage{graphicx}
\usepackage{subcaption}
\usepackage[bookmarks=true,pdfborder={0 0 0}]{hyperref}
\bibliographystyle{plain}

\def\S{{\mathcal S}}

\def\G{{\mathcal G}}

\newcommand{\UR}{\textsf{\sc UseRnwMove}}
\newcommand{\DL}{\textsf{\sc DnwLseMove}}
\newcommand{\STR}{\textsf{\sc STRnwMove}}

\usepackage{authblk}
\usepackage{mathtools}
\DeclareMathOperator{\interior}{int}
\usepackage{lipsum}
\usepackage{lineno}





 \newtheorem{claim}{Claim}

\newtheorem{theorem}{Theorem}
 \newtheorem{lemma}{Lemma}

\newtheorem{observation}{Observation}

\newcommand{\IR}{\mathbb{R}}


\providecommand{\keywords}[1]
{
  \small	
  \textbf{Keywords:} #1
}

\algrenewcommand\algorithmicrequire{\textbf{Input:}}
\algrenewcommand\algorithmicensure{\textbf{Output:}}

\usepackage{color}

\def\S{{\mathcal S}}
\def\Q{{\cal Q}}

\def\R{{\mathcal R}}

\def\I{{\mathcal I}}

\def\P1{{\mathcal P_1}}
\def\P2{{\mathcal P_2}}
\def\P{{\mathcal P}}

\def\O{{\mathcal O}}
\title{Online Class Cover Problem\thanks{
      This paper has been accepted in Computational Geometry: Theory and Applications. DOI: \url{https://doi.org/10.1016/j.comgeo.2024.102120}. Work on this paper by M. De has been partially supported by SERB MATRICS Grant MTR/2021/000584, work by A. Maheshwari has been supported by NSERC, and work by R. Mandal has been supported by CSIR, India, File Number- 09/0086(13712)/2022-EMR-I.}}

\author[]
       {Minati De$^1$, Anil Maheshwari$^2$,  and Ratnadip Mandal$^{1}$
       \\
       $^1$Dept. of Mathematics, Indian Institute of Technology Delhi, New Delhi, India\\
       minati@maths.iitd.ac.in, maz218522@iitd.ac.in\\ 
       $^2$School of Computer Science, Carleton University, Ottawa, Canada\\
       anil@scs.carleton.ca\\ 
       }

 

\setlength {\marginparwidth }{2cm} 

\begin{document}

\maketitle

\begin{abstract}
    In this paper, we study the online class cover problem where a (finite or infinite) family $\cal F$ of geometric objects and a set ${\cal P}_r$ of red points in $\mathbb{R}^d$ are given a prior, and blue points from $\mathbb{R}^d$ arrives one after another. Upon the arrival of a blue point, the online algorithm must make an irreversible decision to cover it with objects from $\cal F$ that do not cover any points of ${\cal P}_r$. The objective of the problem is to place a minimum number of objects. When $\cal F$ consists of axis-parallel unit squares in $\mathbb{R}^2$, we prove that the competitive ratio of any deterministic online algorithm is $\Omega(\log |{\cal P}_r|)$, and also propose an $O(\log |{\cal P}_r|)$-competitive deterministic algorithm for the problem.
\end{abstract}

\keywords{Class Cover, Online Algorithm, Squares, Lower Bound, Competitive Ratio.}

\section{Introduction}
Let $\P_b$ and $\P_r$ be two sets of blue and red points in $\mathbb{R}^d$, where $|\P_b|=n$ and $|\P_r|=m$. In this scenario, the set $\P=\P_b \cup\P_r$ is called a bi-colored point set, and $\P_b$ and $\P_r$ are referred to as the blue and red point classes, respectively. A geometric object (for example, ball, hypercube, etc.)   is considered \emph{$\P_r$-empty} if it does not contain any point from the set $\P_r$. For a given  (finite or infinite) family $\cal F$ of geometric objects in $\IR^d$ and a bi-colored set $\P=\P_b\cup\P_r$, the objective of the \emph{class cover problem} is to find a minimum cardinality subset  $\cal S\subseteq \cal F$, consisting of only $\P_r$-empty geometric objects,  that covers the blue point class (i.e., every point in $\P_b$ is contained in at least one of the objects without covering any points in $\P_r$).

For our research, we are considering the  \emph{online class cover problem} where we only know the red point class $\P_r$ and don't know the blue point class in advance. The blue points will come one by one, and upon the arrival of a blue point, we need to make an irreversible decision to cover it with  $\P_r$-empty objects. Again, the objective of the problem is to find a minimum cardinality set of $\P_r$-empty objects.
 In this paper, we consider the problem when $\cal F$ is the family of axis-parallel unit squares
 in $\mathbb{R}^2$, and $\P_r$ is a set of points in $\IR^2$.

The quality of our online algorithm is analyzed using competitive analysis \cite{Borodin98}. The \emph{competitive ratio} of an online algorithm is  $\sup_{\I} \left(\frac{\text{ALG}(\I)}{\text{OPT}(\I)}\right)$, where $\I$ is an input sequence, and $\text{OPT}(\I)$ and $\text{ALG}(\I)$ are the cost of an optimal offline algorithm and the solution produced by the online algorithm, respectively, over $\I$. Here, the supremum is taken over all possible input sequences $\I$.

Online computation models capture many real-world phenomena: the irreversibility of time (decisions), the inaccessibility of complete data, and the ambiguity of the future. As a result, online models of computation have attracted many researchers. In computational geometry, much work has been conducted on the online model for various problems such as the set cover problem \cite{CharikarCFM04, DumitrescuGT20, DumitrescuT22}, hitting set problem \cite{DeS24, EvenS14,  ArindamLRSW23}, independent set problem~\cite{CaragiannisFKP07, OliverHKSV14}, coloring problem~\cite{ChenFKLMMPSSWW07,ChenKS09} and the dominating set problem~\cite{DeKS23, Eidenbenz}. There is a lack of literature on the online class cover problem. This is a natural problem, and it has various applications in data mining, pattern recognition, scientific computation, visualization and computer graphics~\cite{AgarwalS98, Devinney03, Mitchell93}.
An application example of the online class cover problem inspired by Alon et al.~\cite{AlonAABN09} is as follows. Consider, for instance, network servers that provide a service. A set of clients may require the service, and each server can provide it to a subset of those clients. The requests of clients arrive one by one. The network administrator must determine where to deploy a server upon the introduction of a client so that the client receives the requested service.  However, there may be some customers who utilize the services of a competitor. Therefore, the network administrator must install the server to exclude rival clients and cover the area where a potential future client may arrive.


\subsection{Related Work}
We aren't aware of any work in the online setting of the class cover problem. But in the offline setting, some of the key results are highlighted below. Inspired by the Cowen–Priebe method for the classification of high-dimensional data (see~\cite{cannon1998approximate,cowen1997randomized}), Cannon and Cowen~\cite{CannonC04} introduced the class cover problem as follows.
\begin{align*}
\text{Minimize} \quad
&k \\
\text{Subject to,} \quad
&\max_{u\in \P_b} d(u,S) \leq \min_{u\in \P_r} d(u,S), \\
&\text{where}\ S\subseteq \P_b,|S|=k.
\end{align*}

Here the point-to-set distance $d(u, S)$ is defined as $\min_{s\in S} d(u, s)$. In other words, given a bi-colored set $\P=\P_b \cup\P_r$, the objective is to find a minimum cardinality set of blue centers to cover the blue points with a set of blue-centered
balls of equal radius, such that no red points lie in these balls. They showed that this problem is NP-hard and gave an $(1 + \ln |\P|)$-approximation algorithm for general metric spaces that runs in cubic time. They presented a polynomial-time approximation scheme (PTAS) in the Euclidean setting for constant dimension.

In~\cite{BeregCDPSV12}, Bereg et al. considered the class cover problem for axis-aligned rectangles (i.e., boxes) and called it the \emph{Boxes Class Cover problem} (BCC problem). They proved the NP-hardness of the problem and gave an $O(\log |\text{OPT}|)$-approximation algorithm, where OPT is an optimal covering. They also study some variants of the problem. If the geometric objects are axis-parallel strips and half-strips oriented in the same direction, they gave an exact algorithm running in $O(m \log m +n \log n + \sqrt{mn})$-time and $O((m + n)\log(\min\{m, n\}))$-time, respectively. However, if half-strips are oriented in any of the four possible orientations, the problem is NP-hard. But in this case, they showed that there exists an $O(1)$-approximation algorithm. In the last variant of the BCC problem, they considered axis-aligned squares. Here, they proved the NP-hardness of the problem and presented an $O(1)$-approximation algorithm. Shanjani~\cite{Shanjani20} proved that the BCC problem is APX-hard, whereas a PTAS  exists for the variant of the problem for disks and axis-parallel squares~\cite{AschnerKMY13}.

In~\cite{CardinalDI21}, Cardinal et al. considered a symmetric version of the BCC problem and called it \emph{Simultaneous BCC problem} (SBCC problem), which is defined as follows. Given a set $\P$ of points in the plane, each colored red or blue, find the smallest cardinality set of axis-aligned boxes, which together cover $\P$ such that each box cover only points of the same color and no box covering a red point intersects the interior of a box covering a blue point \cite[Dfn.~2]{CardinalDI21}. They showed that this problem is APX-hard and gave a constant-factor approximation algorithm.

 If the red point class $\P_r$ is empty, and $\cal F$ consists of all possible translated copies of an object, then the online class cover problem is equivalent to the online unit cover problem. 
 Charikar et al.~\cite{CharikarCFM04} proposed an $O(2^dd\log d)$ competitive algorithm for the online unit cover for balls in $\IR^d$.  Later, Dumitrescu et al.~\cite{DumitrescuGT20} obtained $O({1.321}^d)$ competitive algorithm and proved that the lower bound is $\Omega(d+1)$. In particular, for balls in $\IR^2$ and $\IR^3$, they obtained competitive ratios of 5 and 12, respectively. 
Dumitrescu and Tóth~\cite{DumitrescuT22} studied the unit cover problem for hyper-cubes in $\mathbb{R}^d$. They proved that the lower bound of this problem is at least $2^d$, and also this ratio is optimal, as it is obtainable through a simple deterministic algorithm that allocates points to a predefined set of hyper-cubes \cite{ChanZ09}. Recently, De et al.~\cite{DeJKS24} gave a deterministic online algorithm of the unit cover problem for $\alpha$-aspect$_\infty$  objects in $\mathbb{R}^d$ with a competitive ratio at most $\left(\frac{2}{\alpha}\right)^d((1+\alpha)^d-1)\log_{(1+\alpha)} \left(\frac{2}{\alpha}\right) +1$. To summarise existing results related to the online unit cover problem using translates of an object of various shapes, we refer to {Table of~\cite{DeJKS24}}.

Consider the online class cover problem when  $\cal F$ is an infinite family of intervals on $\mathbb{R}$ and $\P_r$ is a set of $m$ red points in $\mathbb{R}$. Observe that the $m$ red points partition $\mathbb{R}$ into $m+1$ open intervals. As a result of this, any $\P_r$-empty interval must lie inside one of the $m+1$ open intervals. Therefore, the online class cover problem for intervals in $\mathbb{R}$ reduces to $m+1$ distinct online set cover problems for intervals in $\mathbb{R}$. For this case, if  $\cal F$ is the set of all possible translations of an interval, then an optimum $2$-competitive algorithm for the problem can be obtained~\cite{ChanZ09, CharikarCFM04}.

\subsection{Our Contributions}
This paper primarily presents two results on the online class cover problem when $\cal F$ is the set of all possible axis-parallel unit squares in $\mathbb{R}^2$, and $\P_r$ is a set of $m$ points in $\IR^2$. We show, first, that the competitive ratio of every deterministic online algorithm of the problem is $\Omega(\log m)$ (Theorem~\ref{theo:lowerbound}). To prove this, we consider an adaptive adversary. Initially, the adversary constructs a set $\P_r$ of $m$ red points such that a unit square contains $\P_r$. In each of $O(\log m)$ rounds, the adversary maintains a subset $P \subseteq \P_r$ of red points and places a blue point in a grid cell, obtained by using the points in $P$, such that any online algorithm must place a new square to cover it,  whereas an offline optimal covers all the blue points by a single square.

Next, we provide a deterministic online algorithm with a competitive ratio of $O(\log m)$ (Theorem~\ref{theo:upperbound} and Theorem~\ref{theo:upperbound_improved}). Our algorithm, in a nutshell, is as follows. When a new blue point is introduced to the algorithm that is uncovered by the previously selected squares, the algorithm chooses at most five squares, called candidate squares (see Section~\ref{sec:algo} for definition) to cover it. The candidate squares are constructed using a binary search over potential squares. The main challenge and the novelty is to show that the candidate squares are sufficient to establish the competitive ratio. As stated previously, if $\P_r= \emptyset$, the online class cover problem for translated copies of an object reduces to the online unit cover problem. In that case, for axis-parallel squares, our algorithm works like the well-known algorithm, namely \texttt{Algorithm Centered}, and thus achieves the optimal competitive ratio of 4 \cite{ChanZ09, DumitrescuT22}. For simplicity, throughout the paper, we consider that the set $\P_r$ contains at least one point, i.e., $m\geq 1$, unless mentioned.

\subsection{Outline of the Paper}
In Section~\ref{sec:nota}, we introduce terminology that will be used in subsequent sections. Section~\ref{sec:lower} demonstrates the lower bound for the online class cover problem for axis-parallel unit squares. Then, in Section~\ref{sec:algo}, we present an online algorithm for the problem. In Section~\ref{sec:ana_algo}, we give our algorithm's correctness and competitive analysis. 
Finally, we conclude in Section~\ref{sec:conclud}.


\section{Notation and Preliminaries}\label{sec:nota}

\begin{figure}[htbp]
    \centering
     \begin{subfigure}[b]{.32\textwidth}
        \centering
        \includegraphics[page=1, width=48mm]{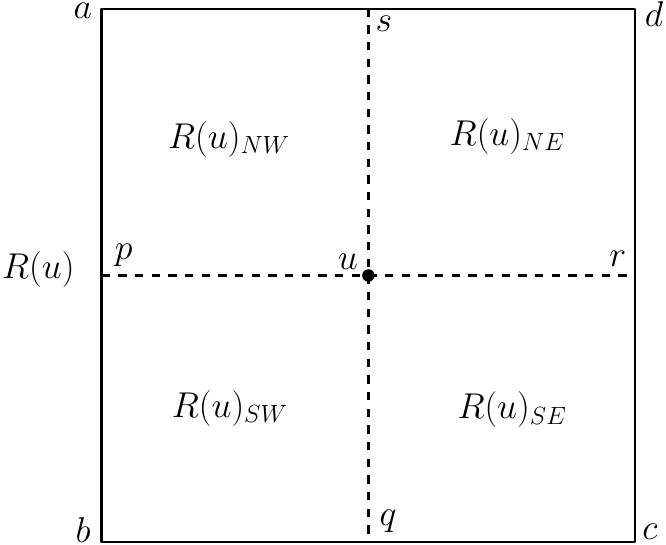}
        \caption{}
        \label{fig:unit_square}
    \end{subfigure}
    \hfill
        \begin{subfigure}[b]{.30\textwidth}
        \centering
        \includegraphics[page=2, width=47mm]{Preliminaries.pdf}
        \caption{}
        \label{fig:hori_trans}
    \end{subfigure}
    \hfil
    \begin{subfigure}[b]{.32\textwidth}
        \centering
        \includegraphics[page=3, width=48mm]{Preliminaries.pdf}
        \caption{}
        \label{fig:grid}
    \end{subfigure}
    \caption{(a) An axis-parallel unit square $R(u)=\square abcd$ centered at $u$. (b) The square $P+(v_1, 0)$ (red) is a horizontal translated copy of the square $P$ (black). (c) An $8\times 8$ {grid} $\G(Q)$ using $9$ points of $Q=\{u_1, u_2,\dots, u_9\}$ in a plane.}
    \label{fig:unit_square_1}    
\end{figure}

We use $[n]$ to represent the set  $\{1,2,\ldots,n\}$ where $n$ is a positive integer. We use the phrase `\emph{$X$ contains $Y$}' to signify that $Y\subseteq X$.
For any point $u$, we use $u(x)$ and $u(y)$ to denote the $x$ and $y$-coordinates of $u$, respectively. A point $u$ is said to be lying on the left side of a vertical line segment $l$ if $u$ lies on the left side of the line $L$, obtained by extending the line segment $l$ in both directions. Similarly, we define the other cases such as $u$ lying on the right, the above and the bottom side of a horizontal/vertical line segment~$l$. Let $R(u)=\square abcd$ be an axis-parallel unit square centred at a point $u$ (see Figure~\ref{fig:unit_square}). Let $p,q,r$ and $s$ be the midpoints of the edges $ab, bc, cd$ and $ad$ of $R(u)$, respectively. The vertices $a, b, c$ and $d$ will be called $NW, SW, SE$ and $NE$ vertices of $R(u)$, respectively. The edges $ab, bc, cd$ and $ad$ will be called the left, the bottom, the right and the top edge of $R(u)$, respectively. Now, the two line segments $\overline{pr}$ and $\overline{qs}$ partition $R(u)$ into four \emph{sub-squares} $R(u)_{NW}=\square apus, R(u)_{SW}=\square pbqu, R(u)_{SE}=\square uqcr$ and $R(u)_{NE}=\square surd$ each of side length $\frac{1}{2}$. The boundary of $R(u)$ will be denoted by $\partial R(u)$, and the interior by $\interior(R(u)) = \{v\in R(u)\ |\ v \notin \partial R(u)\}$. A point $v$ is said to be \emph{covered/contained} by a square $R(u)$ if $v \in \interior{(R(u))}$. Throughout the paper, a square means an axis-parallel unit square unless mentioned.

A square $P$ is said to be a \emph{translated copy} of a square $P'$ if $P=P'+v=\{w+v\ |\ w\in P'\}$, where  $v=(v_1, v_2)$ is a vector in $\mathbb{R}^2$. We say that the square $P$ is a \emph{horizontal translated copy} of $P'$ if the vector $v$ is of the form $(v_1, 0)\in \mathbb{R}^2$ (see Figure~\ref{fig:hori_trans}). Similarly, if the vector $v$ is of the form $(0, v_2)\in \mathbb{R}^2$, the square $P$ is said to be a \emph{vertical translated copy} of $P'$. A square $P$ is moved \emph{rightwards}, meaning that we consider a horizontal translated copy $P'=P+(v_1, 0)$ of $P$, where the value of $v_1>0$ and it is continuously increasing. Similarly, we can define the movement of $P$ in other directions, such as leftwards, upwards, and downwards.

Let $Q=\{u_1, u_2,\dots, u_q\}$ be a set of points in a plane having distinct $x$ and $y$-coordinates. Consider the horizontal and vertical lines passing through each point $u\in Q$. They form a $(q-1)\times(q-1)$ grid, denoted by $\G(Q)$ (see~Figure~\ref{fig:grid}). We consider the bottom-most row of $\G(Q)$ as $1$st row and the right-most column of $\G(Q)$ as $1$st column. Let $G_{ij}$ be the cell in the grid that appears in the $i$th row and the $j$th column for $i, j\in [q-1]$. Note that $G_{11}$ and $G_{(q-1)(q-1)}$ are the $SE$ and the $NW$ corner cells in the grid. {The \emph{height} (respectively, \emph{width}) of the grid $\G(Q)$ means the vertical\footnote{For any two points $a, b\in \mathbb{R}^2$, the vertical distance between $a$ and $b$ is defined by $|b(y)-a(y)|$. Similarly, the horizontal distance between $a$ and $b$ is $|b(x)-a(x)|$.} (respectively, horizontal) distance between the top-most and the bottom-most points (respectively, the left-most and the right-most points) of $Q$}.
A grid cell $G_{ij}$ is said to be \emph{cover-free} with respect to a set $\{R_1,\dots,R_k\}$ of squares, if there exists a point $w\in G_{ij}$ such that $w\notin \cup_{i=1}^{k} R_i$.

\begin{figure}[htbp]
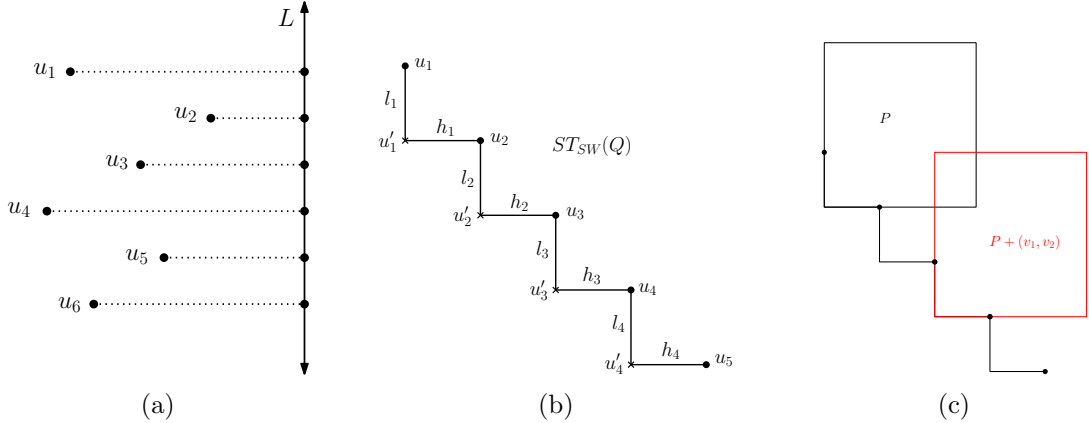

    \centering
    \begin{subfigure}[b]{.33\textwidth}
        \centering
        \includegraphics[page=4, width=40mm]{Preliminaries.pdf}
        \caption{}
        \label{fig:nearest_point}
    \end{subfigure}
    \hfil
    \begin{subfigure}[b]{.30\textwidth}
        \centering
        \includegraphics[page=5, width=47mm]{Preliminaries.pdf}
        \caption{}
        \label{fig:staircase}
    \end{subfigure}
    \hfil
    \begin{subfigure}[b]{.33\textwidth}
        \centering
        \includegraphics[page=6, width=35mm]{Preliminaries.pdf}
        \caption{}
        \label{fig:staircase_movement}
    \end{subfigure}
    \caption{ (a) The point $u_2\in Q$ is nearest to the line $L$, where $Q=\{u_1, u_2,\dots, u_6\}$. (b) Here, $ST_{SW}(Q)$ is the sequence $\sigma=\{l_1, h_1, \dots,$ $l_{4}, h_{4}\}$, where $Q=\{u_1, u_2,\dots, u_5\}$. (c) The square $P+(v_1, v_2)$ (red) is obtained by moving the square $P$ (black) rightwards along the staircase.}
    \label{fig:unit_square_2}    
\end{figure}

For any two points $a, b\in \mathbb{R}^2$, we use $d(a, b)$ to represent the distance between the points $a$ and $b$ under the Euclidean norm. Let $L$ be any horizontal or vertical line. Define, $d(a, L)=\min_{u\in L} d(a,u)$. Consider a set $Q$ of points. Now, a point $u\in Q$ is said to be the \emph{nearest point} to the line $L$, if $d(u, L)= \min_{a\in Q} d(a, L)$ (see Figure~\ref{fig:nearest_point}). With slight abuse, a point $u\in Q$ is said to be the nearest point to a line segment $l$ if $u$ is the nearest point to the line $L$, obtained by extending the line segment $l$ in both directions.

For a set $Q$ of points, define the set $D_{SW}(Q)=\{q\in Q\ |\ \nexists\ q'\in Q$ such that $q(x)<q'(x)$ and $q(y)<q'(y)\}$. We call $D_{SW}(Q)$ the \emph{set of $SW$ dominating points} of $Q$. Similarly, we can define the other dominating sets of $Q$ such as $D_{NW}(Q), D_{NE}(Q)$ and $D_{SE}(Q)$. Let $Q'=D_{SW}(Q)=\{u_1, u_2,\dots, u_q\}$ be such that $u_i(x)\leq u_{i+1}(x)$ and $u_i(y)\geq u_{i+1}(y)$ for $i\in [q-1]$. Now, we define a \emph{SW staircase} of $Q$, denoted by $ST_{SW}(Q)$, as an alternating sequence $\{l_1, h_1, \dots,$ $l_{q-1}, h_{q-1}\}$ of vertical and horizontal line segments such that $l_i=\overline{u_iu'_i}$ and $h_i=\overline{u'_{i}u_{i+1}}$, where $u'_i=(u_{i}(x), u_{i+1}(y))\in \mathbb{R}^2$ for $i\in [q-1]$ (see Figure~\ref{fig:staircase}). The two points $u_1$ and $u_q$ are called the \emph{initial point} and the \emph{terminal point} of $ST_{SW}(Q)$, respectively. Similarly, we can define the other staircases of $Q$ such as $ST_{NW}(Q)$, $ST_{NE}(Q)$ and $ST_{SE}(Q)$. Let $P$ be a square such that the $SW$ vertex of $P$ coincides with $u'_i$ for some $i\in [q-1]$. Now, $P$ is moved \emph{rightwards along the staircase $ST_{SW}(Q)$}, meaning that we consider a translated copy $P'=P+(v_1, v_2)$ of $P$ with $v_1\geq 0, v_2\leq 0$ such that the $SW$ vertex of $P'$ always lies either on $l_j$ or on $h_j$ for some $j\in [q-1]$, and either the value of $v_1$ is continuously increasing or the value of $v_2$ is continuously decreasing (see Figure~\ref{fig:staircase_movement}). Similarly, we can define the movement of $P$ along the staircase $ST_{SW}(Q)$ in the leftward direction.


\section{Lower Bound Construction}\label{sec:lower}
In this section, we present a lower bound of the online class cover problem for squares.

\begin{theorem}\label{theo:lowerbound}
The competitive ratio of every deterministic online algorithm for the class cover problem for squares is at least $\max\{4, \lfloor \log_2 m\rfloor + 1\}$, where $m\ (\geq 1)$ is the number of red points.
\end{theorem}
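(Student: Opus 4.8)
The bound has two parts. The easy part is the "4": this follows because the online class cover problem generalizes the online unit cover problem for axis-parallel squares (take $\P_r$ to consist of a single point placed very far away, so effectively no square is forbidden). Since $\Omega(4)$ is known to be a lower bound for the online unit cover problem for squares in $\IR^2$ \cite{ChanZ09, DumitrescuT22}, the same adversary works here verbatim once we check that the single far-away red point never interferes with the squares the adversary forces. So I would dispose of this in a sentence or two.

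The substantive part is the $\lfloor \log_2 m \rfloor + 1$ lower bound, and here I would run the adaptive adversary sketched in the introduction. The plan is: first fix a set $\P_r$ of $m$ red points all lying inside some unit square $U$, in "general position" (distinct $x$- and $y$-coordinates) and arranged so that their grid $\G(\P_r)$ is well-structured — e.g. place them so that successive rows/columns of the grid are spaced to make the recursion go through (a geometric-like spacing so that a sub-grid on a "quarter" of the points still fits the recursion). The adversary then proceeds in $r = \lfloor \log_2 m\rfloor$ rounds, maintaining a current subset $P \subseteq \P_r$ (starting with $P = \P_r$) and an invariant that all previously forced blue points, together with all of $P$, can be enclosed by a single $\P_r$-empty unit square — this single square is what the offline optimum uses, so $\opt = 1$ throughout. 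In each round: form the grid $\G(P)$; locate a grid cell $G_{ij}$ that the online algorithm's current squares leave cover-free (such a cell exists because the online squares placed so far cover only a bounded portion near the previously-forced blue points, while $\G(P)$ has many cells away from them); place the new blue point $b$ deep inside that cell. Any $\P_r$-empty square covering $b$ cannot reach the previous blue points (there are red points of $P$ separating them — this is exactly why we put $b$ in a fresh grid cell of $\G(P)$), so the algorithm is forced to open a brand-new square, giving $\alg \ge (\text{round number})$. Finally, replace $P$ by the subset of $P$ lying in the appropriate quadrant relative to $b$ — roughly half the points — so that $|P|$ halves each round; after $r$ rounds $|P| \ge 1$, and the single enclosing square still exists by the invariant. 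Hence $\alg \ge r+1$ while $\opt = 1$, giving the ratio $\lfloor \log_2 m\rfloor + 1$.

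The key steps in order: (i) specify the initial red configuration and the spacing so the recursive "halving" of $P$ keeps all remaining red points plus all forced blue points inside one unit square; (ii) prove the cover-free cell exists in each round (count how many grid cells the $\le t$ squares opened so far can possibly intersect, versus the number of cells of $\G(P)$, using that $|P|$ is still large and the blue points are confined to a small region); (iii) prove that a $\P_r$-empty square covering the newly placed blue point is disjoint from (the interiors containing) all earlier blue points, hence must be new — this uses that between $b$'s cell and the earlier blue points lie red points of the current $P$; (iv) prove the offline optimum is $1$ by exhibiting the single enclosing unit square and checking it is $\P_r$-empty (all red points other than those "inside" it were discarded in earlier rounds precisely because they'd block it — so one must be careful that the discarded red points lie strictly outside the final enclosing square).

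The main obstacle I expect is step (iv) coupled with step (i): making the geometry consistent so that (a) the forced blue points are always far enough apart that no $\P_r$-empty square covers two of them, yet (b) a single unit square can still enclose all of them together with the surviving red points, and (c) the red points discarded along the way genuinely lie outside that final square. Balancing "spread out enough to force new squares" against "compact enough for $\opt=1$" is the delicate point, and it is what dictates the precise (presumably geometric-ratio) spacing of the initial red point set; getting the constants so that exactly $\lfloor \log_2 m\rfloor + 1$ rounds survive (rather than losing a constant factor in the log) is the part that needs the most care.
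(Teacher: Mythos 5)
Your handling of the constant $4$ is fine and matches the paper (run the Dumitrescu--T\'oth adversary far away from the red points; for general $m$ you simply place all $m$ red points far away rather than a single one). The gap is in the $\lfloor\log_2 m\rfloor+1$ part, where your forcing mechanism is internally inconsistent. In step (iii) you claim that any $\P_r$-empty square covering the newly placed blue point is disjoint from the earlier blue points because red points of the current set $P$ ``separate'' them; this is incompatible with your own step (iv), where a single $\P_r$-empty unit square covers \emph{all} the forced blue points (which you need, since $\opt=1$). Likewise, your invariant that the blue points ``together with all of $P$'' are enclosed by a single $\P_r$-empty square is self-contradictory: a $\P_r$-empty square cannot contain points of $P\subseteq\P_r$. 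In the actual construction the red points never separate blue points from one another; the online algorithm is forced to open a new square solely because the adversary places each new blue point at a location not covered by the squares the algorithm has \emph{already} placed, and the existence of such a location is maintained as an explicit invariant (every cell of the current grid $\G(P)$ is cover-free with respect to the algorithm's squares), not obtained by counting how many cells the algorithm's squares can meet.

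Maintaining that invariant is exactly where the halving rule matters, and your rule (``replace $P$ by the subset lying in the appropriate quadrant relative to $b$'') is not enough, because the choice must depend on where the \emph{algorithm} put its new square. In the paper the red points are simply placed on a diagonal of a unit square (no geometric spacing is needed, so the delicate balancing you anticipate in (i) and (iv) never arises). After the algorithm answers with a red-point-free square $R_k$, one looks at the grid cell containing the $NW$ vertex of $R_k$ and forms the two half-planes to the left of its right edge and above its bottom edge; since $R_k$ contains no red point and the red points span at most a unit square, every point of the current set lies in at least one of the two half-planes, so the larger of the two subsets has size at least half, and every cell of the new (sub)grid remains cover-free with respect to $R_k$ and all earlier squares. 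The next blue point is placed in the $SE$ corner cell of the new grid, and the squares $O_i$ anchored at the left and top edges of these corner cells are nested enough to show $\opt=1$. Without replacing your step (iii) and your $b$-based halving by this algorithm-dependent choice, the argument does not go through.
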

\begin{figure}[htbp]
    \centering
    \begin{subfigure}[b]{.49\textwidth}
        \centering
        \includegraphics[page=1, width=65mm]{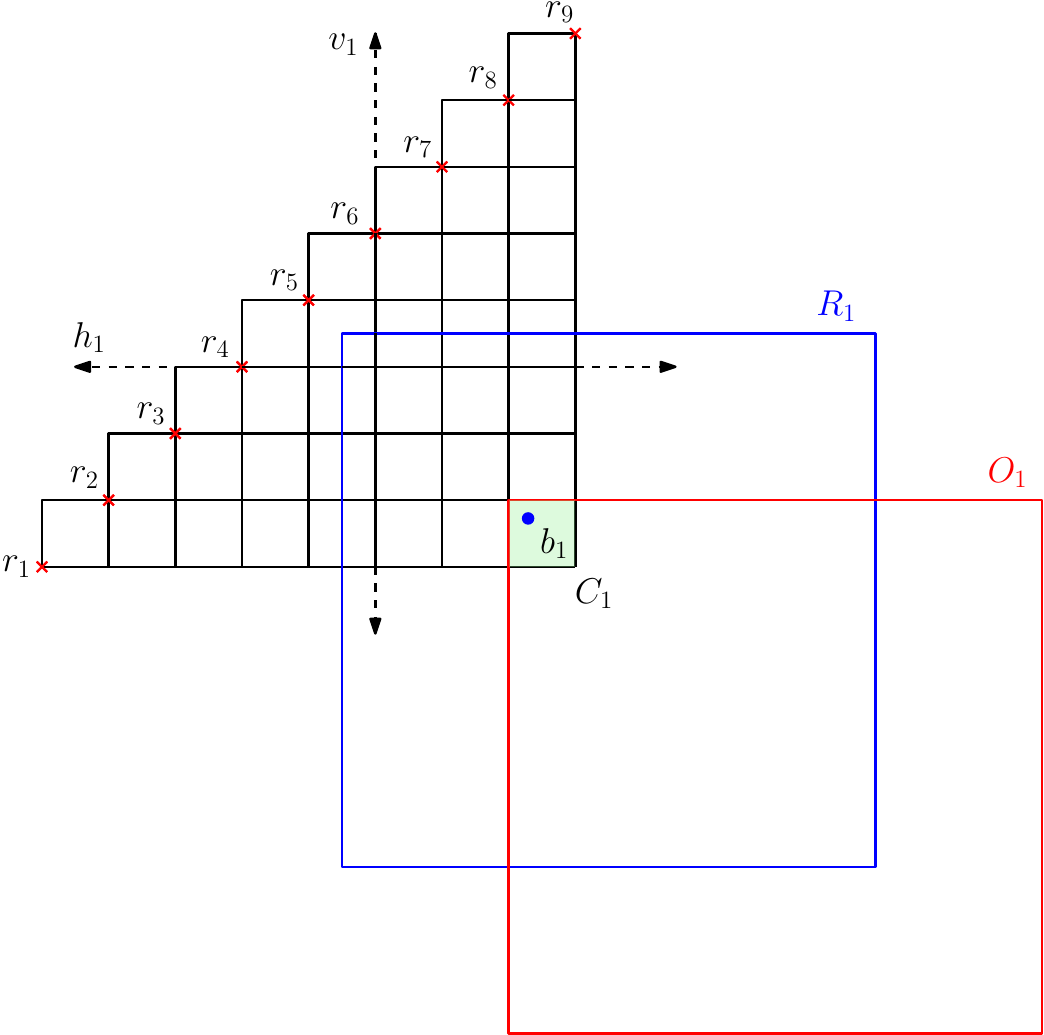}
        \caption{}
        \label{fig:lower_ex_1}
    \end{subfigure}
    \hfill
    \begin{subfigure}[b]{.49\textwidth}
        \centering
        \includegraphics[page=2, width=70mm]{Lower_Bound.pdf}
        \caption{}
        \label{fig:lower_ex_2}
    \end{subfigure}
    \caption{Example of the lower bound of the online class cover problem using axis-parallel unit square where $m=9$. (a) $O_1$ is the red-colored square. The algorithm places the square $R_1$ (blue) to cover $b_1$. (b) $O_2$ is the red-colored square that covers both $b_1$ and $b_2$. The algorithm places the square $R_2$ (green) to cover $b_2$.}
    \label{fig:lower_ex}
\end{figure}


\begin{proof}
     Dumitrescu and Tóth~\cite{DumitrescuT22} proved that the competitive ratio of every deterministic online algorithm for the set cover problem using axis-parallel unit squares in a plane is at least~$4$. {The same lower bound construction can be done for the class cover problem, where the adversary can completely ignore the set of red points as follows. The adversary will place blue points far enough from the given set of red points so that any square covering a blue point never contains a red point. As a result, the lower bound of~\cite{DumitrescuT22} applies to the online class cover problem for any value of $m\geq 1$.} 
     Next, to prove the theorem, we construct a lower bound $\lfloor \log_2 m\rfloor + 1$ of the problem for $m\geq 3$.
     
     
     Let $\P_r=\{r_1, r_2, \dots, r_m\}$ be a set of $m\ (\geq 3)$ red points in a plane lying on the line $x=y$ such that the distance between $r_1$ and $r_m$ is $\sqrt{2}$, and $r_i(x)< r_{i+1}(x)$ for $i\in [m-1]$. Let $p=\lfloor \log_2 m\rfloor$. To prove the lower bound, we construct an input sequence of $p + 1$ many blue points $b_1, b_2, \dots, b_{p+1} \in \P_b$ using an adaptive adversary; for which an offline optimal algorithm needs just one square, whereas any online algorithm needs at least $p+1$ many squares to cover them without containing any red point. See Figure~\ref{fig:lower_ex} for an example when $m=9$. For each $i\in [p+1]$, the adversary maintains a set $\P_r^{(i)} \subseteq \P_r$, and places the blue point $b_i$ inside a cell $C_i$ of the grid $\G(\P_r^{(i)})$. {Let $O_i$ be a square such that the left and top edges of $C_i$ are contained in the left and top edges of $O_i$, respectively.} We set $\P_r^{(0)}=\P_r$ and $O_0=\emptyset$. For each $i\in[p+1]$, let $R_i$ be a square placed by an online algorithm in $i$th iteration to cover $b_i$ and $R_0=\emptyset$. We construct the sequence $\{b_1, b_2, \dots, b_{p+1}\}$ inductively so that the following invariants will always be satisfied for $i = 1, 2, \dots, p+1$.
    \begin{itemize}
        \item[(1)] $\P_r^{(i)} \subseteq \P_r^{(i-1)}$ and $|\P_r^{(i)}|\geq \frac{m}{2^{i-1}}$.
        \item[(2)] Each grid cell of $\G(\P_r^{(i)})$ is cover-free with respect to the set $\{R_0,R_1,\dots,R_{i-1}\}$ of squares.
        \item[(3)] The blue point $b_i$ is placed inside the $SE$ corner cell, say $C_i$, of the grid $\G(\P_r^{(i)})$ such that $b_i\notin\cup_{j=0}^{i-1} R_j$.
        \item[(4)] All blue points $b_1,\dots,b_{i}$ are covered by the square $O_i$ that contains no red points.
    \end{itemize}

    In the first iteration, we consider $\P_r^{(1)}=\P_r$, and it is easy to see that invariant (1) holds. Now, the adversary places the blue point $b_1$ inside the $SE$ corner grid cell, say $C_1$, of $\G(\P_r^{(1)})$. Thus invariants (2) and (3) hold. Here, the square $O_1$ contains $b_1$. Therefore, invariant~(4) also holds.

    Let us assume that all the invariants hold for $i=1,2,\dots,k$ (where, $k \leq p$). Now, we will prove that the invariants hold for $i=k+1$.  Let $a$ be the $NW$ vertex of $R_k$. If $a$ does not lie on any grid edge, then it lies inside a cell of $\G(\P_r^{(k)})$. Let $G_{uv}$ be that cell. Otherwise, we define $u$ and $v$ as follows: $u = \max\{i\in [m-1] \ |\ a\in G_{ij}\}$ and $v = \max\{j\in [m-1]\ |\ a\in G_{ij}\}$. Suppose $v_k$ (respectively, $h_k$) is the vertical line (respectively, horizontal line) passing through the right edge (respectively, bottom edge) of $G_{uv}$  (see Figure~\ref{fig:hori_verti}). Let $V_k$ (respectively, $H_k$) be the half-plane lying on the left side (respectively, top side) of the line $v_k$ (respectively, $h_k$). 
    Now, the adversary considers two subsets $\R_V$ and $\R_H$ of $\P_r^{(k)}$ as follows: $\R_V=\{r_i\in\P_r^{(k)}\ |\ r_i$ lies on the half-plane $V_k \}$, and $\R_H=\{r_i\in\P_r^{(k)} \ |\ r_i$ lies on the half-plane $H_k\}$.

    \begin{itemize}

    \begin{figure}
        \centering
        \includegraphics[page=3, width=40mm]{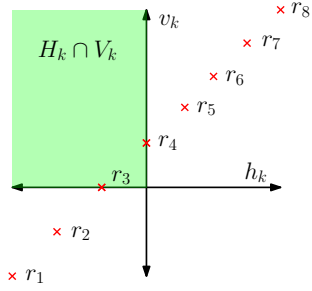}
        \caption{Here, $\R_V=\{r_1, r_2, r_3, r_4\}$ and $\R_H=\{r_3, r_4, r_5, r_6, r_7, r_8\}$. The green shaded region denotes the common region of the two half-planes $H_k$ and $V_k$. Also, $\R_V\cup\R_H=\P_r^{(k)}=\{r_1, r_2, \dots, r_8\}$.}
        \label{fig:hori_verti}
    \end{figure}

        \item  Let $\P_r^{(k+1)}$ be the set among $\R_V$ and $\R_H$ with the maximum size. Since $R_k$ contains no red points, the $NW$ vertex $a$ of $R_k$ must lie in $V_k\cap H_k$, i.e., $a \in V_k\cap H_k$. As a result, we have that for each $r\in \P_r^{(k)}$, either $r\in \R_V$ or $r\in \R_H$. Therefore, $\R_V\cup\R_H=\P_r^{(k)}$  (see Figure~\ref{fig:hori_verti}). Hence, either $|\R_V|\geq \frac{\P_r^{(k)}}{2}$ or $|\R_H|\geq \frac{\P_r^{(k)}}{2}$. This implies that $|\P_r^{(k+1)}|\geq \frac{\P_r^{(k)}}{2}\geq \frac{m}{2^{k}}$, and invariant~(1) holds.
        

        \item Since $\P_r^{(k+1)}\subseteq\P_r^{(k)}$ and each grid cell of $\G(\P_r^{(k)})$ is cover-free with respect to the set $\{R_1, R_2,\dots,$ $R_{k-1}\}$, each of the grid cell of $\G(\P_r^{(k+1)})$ is also cover-free with respect to $\{R_1, R_2,\dots,R_{k-1}\}$. {Now, from the construction of $\P_r^{(k+1)}$, we can see that if $\P_r^{(k+1)} = \R_V$ (respectively, $\P_r^{(k+1)} = \R_H$), then all the cell $G_{iv}$ for $i\in [u]$ (respectively, $G_{uj}$ for $j\in [v]$) are the only cells  of the grid $\G(\P_r^{(k+1)})$ that may have some non-empty intersection with $R_k$.} But each of the cell $G_{iv}$ (respectively, $G_{uj}$) is cover-free with respect to the set $\{R_k\}$. As a result, each cell of the grid $\G(\P_r^{(k+1)})$ is cover-free with respect to $\{R_k\}$. Therefore, invariant~(2) holds.
        \item Due to invariant (2), a blue point $b_{k+1}$ can be placed inside the $SE$ corner cell, say $C_{k+1}$, of the grid $\G(\P_r^{(k+1)})$ such that $b_{k+1}\notin\cup_{j=0}^{k} R_j$. Therefore, invariant~(3) holds.
        \item Since $O_{k+1}$ covers $C_{k+1}$, we have, $b_{k+1}\in O_{k+1}$. Observe that if $\P_r^{(k+1)} = \R_V$ (respectively, $\P_r^{(k+1)} = \R_H$), then $C_{k}$ and $C_{k+1}$ will be in the same row (respectively, column) of the grid $\G(\P_r^{(k)})$. {Also, note that the height and width of the grid $\G(\P_r^{(k)})$ is at most $1$.} As a result of this, we have, $O_k\cap \G(\P_r) \subseteq O_{k+1}\cap \G(\P_r)$. Thus, $O_{k+1}$ also contains $b_1,\dots,b_k$. Therefore, invariant~(4) holds.
    \end{itemize}
    Hence, all the invariants also hold for $i=k+1$. Therefore, the theorem follows.
\end{proof}


\section{An Online Algorithm}\label{sec:algo}

In this section, we present an online algorithm for the class cover problem for the axis-parallel unit squars.
If a blue point $u$, introduced to the algorithm, is already covered by previously selected squares, the algorithm doesn't introduce additional squares in this step. Otherwise, we define at most five squares, called the \emph{candidate squares},  $\S(u)=\{R_1, R_2, R_3, R_4, R_5\}$ for the point $u$ as given in Section~\ref{sec:all_candidate_square}, and our algorithm places each of these candidate squares to cover the point $u$. Though any of the candidate squares is sufficient to cover $u$, we need all the candidate squares to derive the bound on the competitive ratio. Before illustrating the set of candidate squares, for a blue point $u$, we demonstrate an important notion of \emph{staircase squares} of $u$ in Section~\ref{sec:possible_candidate_square}. 
Then we illustrate how to construct three candidate squares $R_1, R_2$ and $R_3$ from the staircase squares in Section~\ref{sec:type1}. Later on, using this tool, we demonstrate how to obtain all five candidate squares in Section~\ref{sec:all_candidate_square}.
 In Section~\ref{sec:algo_desc}, we formally describe our algorithm. Moreover, we characterize the candidate squares in two types, \emph{Type 1} and \emph{Type 2}, depending on whether it is constructed by using the staircase squares or not. We will construct the candidate squares so that $R_1, R_2$ and $R_3$ are always Type 1 candidate squares, and $R_4$ and $R_5$ are always Type 2 candidate squares.

Let $u$ be a blue point, and $R(u) =\square abcd$ be the square centred at $u$. Let $R(u)_{NW}=\square apus,$ $ R(u)_{SW}=\square pbqu,$ $R(u)_{SE}=\square uqcr$ and $R(u)_{NE}=\square surd$ be the four sub-squares of $R(u)$ (as defined in Section~\ref{sec:nota}). The construction of the candidate squares solely relies on the following observation.

\begin{observation}\label{obs:main}
    Let $u$ be any blue point and $P$ be any square such that $u\in P$. Then, $P$ must contain at least one sub-square of $R(u)$.
\end{observation}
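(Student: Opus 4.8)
The plan is to prove Observation~\ref{obs:main} by a direct geometric argument about how a unit square $P$ containing $u$ must be positioned relative to $R(u)$. Since both $P$ and $R(u)$ are axis-parallel unit squares and $u\in P$, the center of $P$ lies within horizontal distance $\tfrac12$ and vertical distance $\tfrac12$ of $u$ (because $u$ is inside a unit square centered at $P$'s center). Equivalently, if $c_P=(c_1,c_2)$ is the center of $P$, then $|c_1-u(x)|\le\tfrac12$ and $|c_2-u(y)|\le\tfrac12$.

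First I would split into four cases according to the sign of $c_1-u(x)$ and $c_2-u(y)$; by symmetry it suffices to treat one case, say $c_1\ge u(x)$ and $c_2\ge u(y)$, i.e. the center of $P$ lies weakly to the NE of $u$. In this case I claim $P$ contains the sub-square $R(u)_{NE}=\square surd$, whose corners are $s=(u(x)-\tfrac12,\,u(y)+\tfrac12)$-ish (more precisely $s,u,r,d$ with $u$ the SW corner, $d=(u(x)+\tfrac12,u(y)+\tfrac12)$ the NE corner). The square $P=[c_1-\tfrac12,c_1+\tfrac12]\times[c_2-\tfrac12,c_2+\tfrac12]$ has left edge at $c_1-\tfrac12\le u(x)$ and right edge at $c_1+\tfrac12\ge u(x)+\tfrac12$ (since $c_1\ge u(x)$); similarly its bottom edge is $\le u(y)$ and its top edge is $\ge u(y)+\tfrac12$. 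Hence $P\supseteq [u(x),u(x)+\tfrac12]\times[u(y),u(y)+\tfrac12]=R(u)_{NE}$, which is exactly the claim. The three remaining sign patterns are handled identically, giving $R(u)_{SE}$, $R(u)_{NW}$, $R(u)_{SW}$ respectively.

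The only subtlety to address carefully is the boundary convention: the paper defines ``contained'' via the interior, so I should state the observation's ``contain'' as set inclusion of the closed sub-square (which is what the downstream arguments use), or note that since $u\in\interior(P)$ there is slack and one actually gets a sub-square in the closure of $P$; I would match whichever convention the subsequent sections need, most naturally closed inclusion $R(u)_\bullet\subseteq P$. I expect no real obstacle here — the argument is a one-paragraph case analysis — so the ``hard part'' is merely being scrupulous about which corner of $P$ is pinned and confirming the inequalities $c_1-\tfrac12\le u(x)$, $c_1+\tfrac12\ge u(x)+\tfrac12$ hold with the right (non-strict) directions, and then invoking symmetry cleanly so the write-up stays short.

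In summary: reduce to the location of $P$'s center relative to $u$ (four quadrants, one up to symmetry), show the appropriate edges of $P$ sandwich the appropriate half-unit interval on each axis, and conclude the matching sub-square of $R(u)$ sits inside $P$.
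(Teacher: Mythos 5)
Your argument is correct: pinning the center $c_P$ of $P$ by $|c_1-u(x)|\le\tfrac12$, $|c_2-u(y)|\le\tfrac12$ and then case-splitting on the quadrant of $c_P$ relative to $u$ immediately gives $P\supseteq[u(x),u(x)+\tfrac12]\times[u(y),u(y)+\tfrac12]=R(u)_{NE}$ (and symmetrically for the other three sign patterns), which is exactly the content of the observation; the paper itself states this as an observation with no proof, so you are simply supplying the standard one-paragraph justification it leaves implicit. The only blemishes are cosmetic: your parenthetical coordinate for $s$ is off (in the paper's labelling $s=(u(x),u(y)+\tfrac12)$ is the midpoint of the top edge, not $(u(x)-\tfrac12,u(y)+\tfrac12)$), but the set you actually use for $R(u)_{NE}$ is the right one, and your remark about the open-versus-closed containment convention matches the level of care the paper applies to the same point.
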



\subsection{Staircase Squares}\label{sec:possible_candidate_square}
Consider a blue point $u$ and the square $R(u)$ centered at it. Suppose that $R(u)_{SW}$ contains some red points, but $R(u)_{NE}$ is red point free. Now, if $R(u)_{NW}\cap \P_r\neq \emptyset$, let $r_1 \in R(u)_{NW}\cap \P_r$ be a red point nearest to the line segments $\overline{us}$; otherwise, let $r_1=a$, the $NW$ vertex of $R(u)$. If $R(u)_{SE}\cap \P_r\neq \emptyset$, let $r_2 \in R(u)_{SE}\cap \P_r$ be a red point nearest to the line segments $\overline{ur}$; otherwise, let $r_2=c$, the $SE$ vertex of $R(u)$. Let $R'$ be a square whose left edge and bottom edge pass through the points $r_1$ and $r_2$, respectively (see Figure~\ref{fig:possible_square_staircase}).

    \begin{figure}[htbp]
    \centering
    \begin{subfigure}[b]{.49\textwidth}
        \centering
        \includegraphics[page=1, width=75mm]{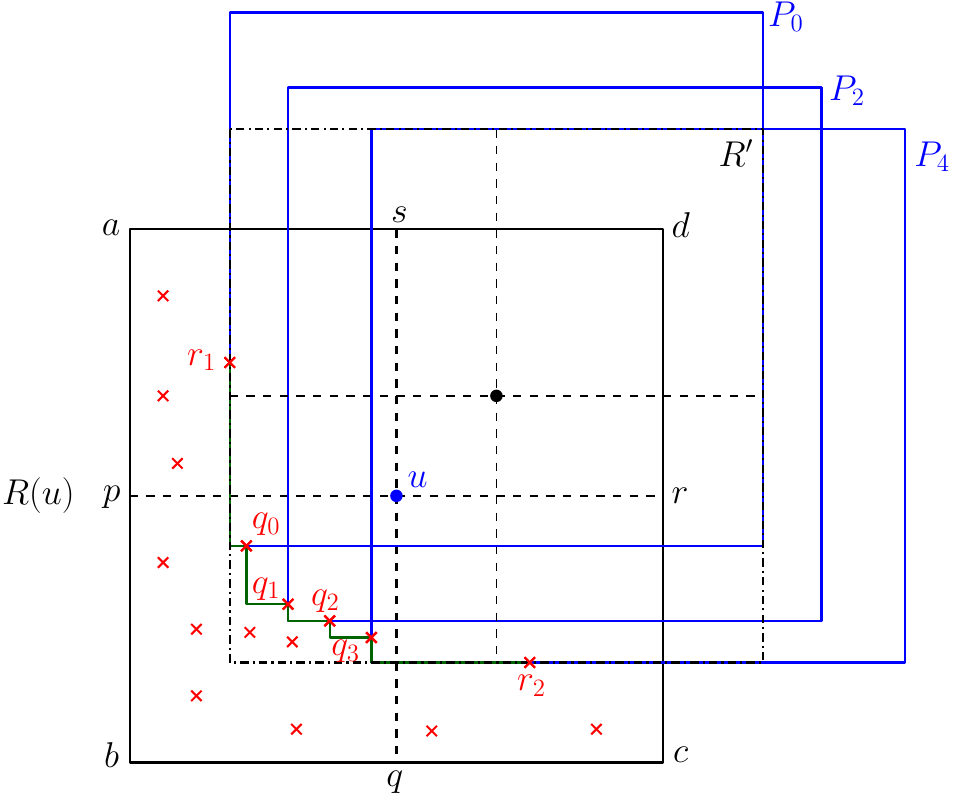}
        \caption{}
        \label{fig:possible_square_staircase}
    \end{subfigure}
    \hfill
    \begin{subfigure}[b]{.49\textwidth}
        \centering
        \includegraphics[page=3, width=75mm]{Candidate_Squares_Construction.pdf}
        \caption{}
        \label{fig:unit_square_case1.1}
    \end{subfigure}
    \caption{Let $R(u)$ (black) be the square centered at the blue point $u$. (a) The sequence of $SW$ staircase points $\Q(u)_{SW}=(r_1,q_0,\dots,q_3,r_2)$ forms the staircase $ST_{SW}(\Q(u)_{SW})$ (dark green), and $P_0,P_2,P_4$ (blue) are some staircase squares of $u$. Also, $u\in P_0,P_2,P_4$. (b) Case 1.1: $R'_{SW}$ contains no red points. Here, $R_4$ is a candidate square (blue) of $u$ obtained by moving $R'$ (black/dashed).}
    \label{fig:unit_square_cases1}
    \end{figure}

Let $Q'=R(u)_{SW}\cap R'_{SW}\cap \P_r$, and $Q''=D_{SW}(Q')$, the set of $SW$ dominating points of $Q'$. Now, consider a sequence $(q_0, q_1, \dots, q_k)$ of the elements of $Q''$ such that $q_i(x)\leq q_{i+1}(x)$ and $q_i(y)\geq q_{i+1}(y)$ for all~$i\in \{0\}\cup [k-1]$. Let $\Q(u)_{SW}$ be the sequence $(r_1,q_0, q_1, \dots, q_k,r_2)$,  and we call it \emph{the sequence of $SW$ staircase points} of $u$. With slight abuse of notation, we also use the term $\Q(u)_{SW}$ to denote the set $\{r_1,q_0, q_1, \dots, q_k,r_2\}$. The set $\Q(u)_{SW}$ forms a staircase $ST_{SW}(\Q(u)_{SW})$ (see Figure~\ref{fig:possible_square_staircase}). Here, the two points $r_1$ and $r_2$ are the initial and the terminal point of $ST_{SW}(\Q(u)_{SW})$, respectively.

For each pair of consecutive red points $(q_j, q_{j+1})$, where $0\leq j\leq k-1$, we have a square $P_{j+1}$ such that $q_j$ and $q_{j+1}$ lie on the left and the bottom edge of $P_{j+1}$, respectively. We also define two squares $P_0$ and $P_{k+1}$ such that $r_1$ and $q_0$ lie on the left edge and bottom edge of $P_0$, and $q_k$ and $r_2$ lie on the left edge and bottom edge of $P_{k+1}$, respectively.

\begin{observation}\label{obs:Pj_SW}
    For each $j \in \{0\}\cup [k+1]$, the square $P_j$ contains the point $u$, and the sub-square $P_{j(SW)}$ contains no red point.
\end{observation}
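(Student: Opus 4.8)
The claim has two parts for each square $P_j$, $j \in \{0\} \cup [k+1]$: first, that $P_j$ contains the blue point $u$, and second, that the sub-square $P_{j(SW)}$ is red-point-free. Let me think about how each $P_j$ is defined. Each $P_j$ is a unit square whose left edge passes through one staircase point and whose bottom edge passes through the next one (with $r_1, r_2$ as the endpoints at $j=0$ and $j=k+1$). So for $P_{j+1}$ defined by consecutive points $(q_j, q_{j+1})$: $q_j$ on the left edge, $q_{j+1}$ on the bottom edge. Since $q_j(x) \le q_{j+1}(x)$, the left edge is at $x = q_j(x)$ and the square extends rightward; since $q_j(y) \ge q_{j+1}(y)$, the bottom edge is at $y = q_{j+1}(y)$ and extends upward.

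**Containing $u$.** For $u \in P_j$ I need to check the four coordinate bounds. The left edge of $P_j$ is at the $x$-coordinate of some point $p_\ell \in \{r_1\} \cup Q''$ that lies in $R(u)_{SW} \cup R(u)_{NW}$ (actually $r_1$ is in $R(u)_{NW}$ or is the NW corner $a$, and the $q_i$'s are in $R(u)_{SW}$). The bottom edge is at the $y$-coordinate of a point $p_b$ in $R(u)_{SE} \cup R(u)_{SW}$. I want to show $p_\ell(x) \le u(x) < p_\ell(x) + 1$ and $p_b(y) \le u(y) < p_b(y) + 1$. Since every relevant point lies in $R(u)$, which has side $1$ and is centered at $u$, each such point $p$ satisfies $u(x) - \tfrac12 \le p(x) \le u(x) + \tfrac12$ and similarly for $y$; combined with the specific sub-square each point lives in (NW or SW forces $p_\ell(x) \le u(x)$; SW or SE forces $p_b(y) \le u(y)$), and the fact that $p(x) \ge u(x) - \tfrac12 > u(x) - 1$, I get exactly the two-sided bounds needed. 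So $u \in P_j$; this part is routine.

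**The sub-square $P_{j(SW)}$ is red-point free.** This is the substantive part. The SW sub-square of $P_j$ is the bottom-left quarter: $x \in [p_\ell(x), p_\ell(x) + \tfrac12]$, $y \in [p_b(y), p_b(y) + \tfrac12]$. The key geometric fact to exploit is the staircase / $SW$-dominating structure: $Q'' = D_{SW}(Q')$ where $Q' = R(u)_{SW} \cap R'_{SW} \cap \P_r$. I would argue by contradiction: suppose some red point $r$ lies in $P_{j(SW)}$. First I'd argue $r$ must actually lie in $Q'$ — it lies in $R(u)_{SW}$ because $P_{j(SW)} \subseteq R(u)_{SW}$ (needs a short argument using the coordinate bounds above, i.e. the SW sub-square of $P_j$ sits inside the SW sub-square of $R(u)$), and it lies in $R'_{SW}$ because the left/bottom edges of $P_j$ are at least as far right/up as those of $R'$ (since $r_1, r_2$ are the extreme staircase points and $R'$ is built from them) — so $P_{j(SW)} \subseteq R'_{SW}$ as well. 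Then, since $r \in Q'$ but the staircase was built from the $SW$-dominating points $Q''$ of $Q'$, either $r \in Q''$ or $r$ is dominated by some point of $Q''$ — and I'd show that in either case $r$ having both coordinates in the half-open quarter $[p_\ell(x), p_\ell(x)+\tfrac12) \times [p_b(y), p_b(y)+\tfrac12)$ contradicts the defining inequalities $q_j(x) \le q_{j+1}(x)$, $q_j(y) \ge q_{j+1}(y)$ of consecutive staircase points (a point strictly inside that quarter would $SW$-dominate $q_j$ or be $SW$-dominated in a way incompatible with consecutiveness, or would have to equal one of them). Handling the boundary cases $j = 0$ and $j = k+1$ with $r_1, r_2$ replacing $q_j$ or $q_{j+1}$ needs the extra observation that $r_1$ is the nearest red point in $R(u)_{NW}$ to $\overline{us}$ (so nothing in $Q'$ lies to its left within the relevant strip) and symmetrically for $r_2$.

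**Main obstacle.** The hard part is the red-point-freeness of $P_{j(SW)}$, specifically nailing down the case analysis at the two ends of the staircase (the $P_0$ and $P_{k+1}$ squares involving $r_1$ and $r_2$) and being careful about which inequalities are strict versus weak — the definition of "covered" uses the open interior, so a red point on $\partial P_{j(SW)}$ is permissible and I must make sure the contradiction only triggers for interior points. I expect the clean way to present it is: fix $j$, write down the two coordinates of the left and bottom edges, assume an interior red point $r$, derive $r \in Q'$, then play $r$ against the dominating structure to contradict the adjacency of $q_j$ and $q_{j+1}$ in the sorted sequence of $D_{SW}(Q')$.
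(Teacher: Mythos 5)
The paper records this statement as an observation with no proof, so the only basis for comparison is the construction itself; judged on its own, your argument for the first half (that $u\in P_j$) is fine, but the second half rests on a false containment. You reduce to $Q'$ by asserting that $P_{j(SW)}\subseteq R(u)_{SW}$ and $P_{j(SW)}\subseteq R'_{SW}$, and neither inclusion holds in general. The $SW$ corner of $P_{j+1}$ is $(q_j(x),q_{j+1}(y))$ with $q_j,q_{j+1}\in R(u)_{SW}$, and these coordinates may be arbitrarily close to $u(x)$ and $u(y)$; then $P_{j+1(SW)}=[q_j(x),q_j(x)+\tfrac12]\times[q_{j+1}(y),q_{j+1}(y)+\tfrac12]$ reaches nearly to $u(x)+\tfrac12$ and $u(y)+\tfrac12$, so it can overlap $R(u)_{NW}$, $R(u)_{SE}$ and even $R(u)_{NE}$, and it can protrude to the right of and above $R'_{SW}$, whose corner $(r_1(x),r_2(y))$ may lie much further to the south-west. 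The only containment the coordinate bounds give is $P_{j(SW)}\subseteq R(u)$, so a red point in the interior of $P_{j(SW)}$ is not automatically a point of $Q'=R(u)_{SW}\cap R'_{SW}\cap\P_r$, and your contradiction never gets started.

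The plan is repairable, but the ingredients you reserve only for the boundary squares $P_0$ and $P_{k+1}$ are needed for every $j$: (i) $R(u)_{NE}$ is red point free by the standing assumption of Section~\ref{sec:possible_candidate_square} (your proof never invokes this, yet $P_{j(SW)}$ can intersect $R(u)_{NE}$); (ii) by the choice of $r_1$ as the point of $R(u)_{NW}\cap\P_r$ nearest to $\overline{us}$, every red point of $R(u)_{NW}$ has $x$-coordinate at most $r_1(x)\le q_j(x)$, hence lies on or to the left of the left edge of $P_j$; (iii) symmetrically, every red point of $R(u)_{SE}$ has $y$-coordinate at most $r_2(y)\le q_{j+1}(y)$, hence lies on or below the bottom edge of $P_j$; and (iv) a red point of $R(u)_{SW}\setminus R'_{SW}$ must satisfy $x<r_1(x)$ or $y<r_2(y)$ (because $R(u)_{SW}$ lies in the strips $x\le r_1(x)+\tfrac12$ and $y\le r_2(y)+\tfrac12$), so it also misses the interior of $P_{j(SW)}$. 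Only after these eliminations does an interior red point land in $Q'$, at which point your dominance argument (no point of $Q'$ lies strictly north-east of the corner $(q_j(x),q_{j+1}(y))$, since a maximal element of $D_{SW}(Q')$ above it would have to sit strictly between the consecutive points $q_j$ and $q_{j+1}$, or violate the extremality of $q_0$, respectively $q_k$, at the two ends) correctly finishes the proof.
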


Let us assume that $T=\{P_j\ |\ P_{j(NE)}\ \text{contains no red points for}\ j \in \{0\}\cup [k+1]\}$. Now, consider a sequence $\mathcal{PS}(u)_{SW}=(P_{j_1}, P_{j_2}, \dots, P_{j_{l}})$ of the squares of $T$ such that $(j_1,j_2,\dots,j_{l})$ is a sub-sequence of the sequence $(0,1,\dots,k+1)$ (see Figure~\ref{fig:possible_square_staircase}). We call $\mathcal{PS}(u)_{SW}$ \emph{the sequence of $SW$ staircase squares} of $u$. With slight abuse of notation, we also use the term $\mathcal{PS}(u)_{SW}$ to denote the set $\{P_{j_1}, P_{j_2}, \dots, P_{j_{l}}\}$. Also, each square $P_{j_i}\in \mathcal{PS}(u)_{SW}$ will be called a \emph{staircase square} of $u$. Similarly, we can define $\Q(u)_{NW}, \mathcal{PS}(u)_{NW}$ and so on.


\subsection{Construction of Type~1 Candidate Squares}\label{sec:type1}
In this section, we demonstrate how to obtain Type~1 candidate squares for a blue point $u$ from the sequence of staircase squares of $u$. W.l.o.g, assume that $R(u)_{SW}$ contains some red points, but $R(u)_{NE}$ is red point free. So, as defined in Section~\ref{sec:possible_candidate_square}, consider the sequence of $SW$ staircase points $\Q(u)_{SW}=(r_1,q_0, \dots, q_k,r_2)$ and the sequence of $SW$ staircase squares $\mathcal{PS}(u)_{SW}=(P_{1}, P_{2}, \dots, P_l)$ of $u$.

    \begin{algorithm}[htbp]
    \caption{\UR$(u, H)$}\label{sub:1}
    \begin{algorithmic}[1]
    \Require{A blue point $u$, and a translated copy $H$ of $R(u)$ such that $H_{SW} \subseteq R(u)$.}
    \Ensure{A square $P$.}
    \While{$(H_{NW}\cup H_{SE})$ is not red point free and $H$ contains $u$}
        \If{$H_{SE}$ contains some red points}
            \State Move $H$ upwards until $H_{SE}$ becomes red point free
        \EndIf
        \If{$H_{NW}$ contains some red points}
            \State Move $H$ rightwards until $H_{NW}$ becomes red point free
        \EndIf
    \EndWhile
    \If{$H$ is red point free and $u\in H$}
        \State Return $P=H$
    \Else \State Return $P=\emptyset$
    \EndIf
    \end{algorithmic}
    \end{algorithm}

    \begin{algorithm}[htbp]
    \caption{\STR$(u, H)$}\label{sub:2}
    \begin{algorithmic}[1]
    \Require{A blue point $u$, and a translated copy $H$ of $R(u)$ such that the $SW$ vertex of $H$ lies on the staircase $ST_{SW}(\Q(u)_{SW})$.}
    \Ensure{A square $P$.}
    \If{$H_{NW}$ contains some red points}
        \State Move $H$ rightwards along the staircase $ST_{SW}(\Q(u)_{SW})$ until $(H_{NW}\cup H_{NE})$ becomes red point free or does not contain $u$
    \EndIf
    \If{$H_{SE}$ contains some red points and $u\in H$}
        \State Return $P=$\UR$( u, H)$
        \State Break
    \ElsIf{$H$ is red point free and $u\in H$}
        \State Return $P=H$
    \Else \State Return $P=\emptyset$
    \EndIf
    \end{algorithmic}
    \end{algorithm}

If $\mathcal{PS}(u)_{SW}=\emptyset$, then we set $R_i=\emptyset$ for each $i\in [3]$. W.l.o.g, assume that $\mathcal{PS}(u)_{SW}$ is non-empty. Let $H_1 =P_1, H_2 = P_{\lceil \frac{l}{2}\rceil}$ and $H_3 = P_l$ (they may not be distinct). Note that $H_{i(SW)}\cup H_{i(NE)}$ does not contain any red point. Now, if $H_{i(SE)}\cup H_{i(NW)}$ contains no red point, we set $R_i=H_i$. Otherwise, depending on which sub-squares among $H_{i(SE)}$ and $H_{i(NW)}$ contain red point, we define $R_i$ as follows: (1) If both $H_{i(SE)}$ and $H_{i(NW)}$ contain some red points, we set $R_i=\UR(u, H_i)$\footnote{Here, U and R stand for upwards and rightwards, respectively.} as defined in Algorithm~\ref{sub:1}, and it works as follows. If $H_{i(NW)}\cup H_{i(SE)}$ contains some red points, first, we move $H_i$ upwards until $H_{i(SE)}$ becomes red point free, and then move it rightwards until $H_{i(NW)}$ contains no red point. Repeat the process until $H_i$ becomes red point free or does not contain $u$. In the former case, we set $R_i=H_i$. In the latter case, we set $R_i=\emptyset$. (2) Now, consider the case when exactly one of $H_{i(NW)}$ and $H_{i(SE)}$ contains some red points. W.l.o.g., assume that only $H_{i(NW)}$ contains some red points. The other case is similar. In this case, we set $R_i=\STR(u, H_i)$\footnote{Here, ST and R stand for staircase and rightwards, respectively.} as defined in Algorithm~\ref{sub:2}, and it works as follows. First, we move $H_i$ rightwards along the staircase $ST_{SW}(\Q(u)_{SW})$ until $H_{i(NW)}\cup H_{i(NE)}$ becomes red point free or does not contain $u$. After that, if $H_{i(SE)}$ is red point free and $u\in H$, we set $R_i=H_i$, otherwise, if $H_{i(SE)}$ contains some red points and $u\in H$, we set $R_i=\UR(u, H_i)$ as defined in Algorithm~\ref{sub:1}.

\subsection{Construction of All Candidate Squares}\label{sec:all_candidate_square}

In this section, we define all the candidate squares of $u$ based on which sub-squares of $R(u)$ contain red points. If all four sub-squares of $R(u)$ contain red points, then due to Observation~\ref{obs:main}, there does not exist any feasible solution. As a result, we set $R_i=\emptyset$ for each $i\in [5]$. Hence, without loss of generality, we assume that at least one sub-square of $R(u)$ does not contain any red point, say $R(u)_{NE}$. Now, if the other three sub-squares of $R(u)$ do not contain any red point, we set $R_4=R(u)$ and $R_1=R_2=R_3=R_5=\emptyset$. Thus, $\S(u)=\{R(u)\}$. Now, depending on which sub-square of $R(u)$, except $R(u)_{NE}$, contains some red points, we have the following cases. See Figure~\ref{fig:Candidate_Cases} for illustration.


\begin{figure}
    \centering
    \includegraphics[page=2, width=90mm]{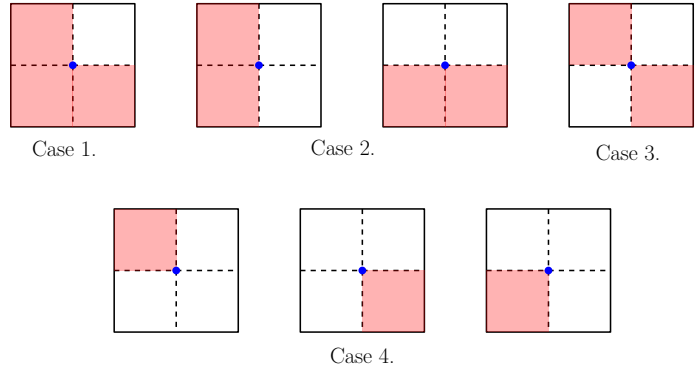}
    \caption{Let $R(u)$ (black) be the square centered at a blue point $u$. The Red shaded sub-squares of $R(u)$ contain some red points.}
    \label{fig:Candidate_Cases}
\end{figure}

\begin{itemize}

    \item \textbf{Case~1: All other sub-squares of $R(u)$ contain red points.} See Figure~\ref{fig:unit_square_case1.1}. Let $r_1 \in R(u)_{NW}\cap \P_r$ and $r_2 \in R(u)_{SE}\cap \P_r$ be two red points nearest to the line segments $\overline{us}$ and $\overline{ur}$, respectively. Let $R'$ be a square whose left edge and bottom edge pass through the points $r_1$ and $r_2$, respectively. No feasible solution exists if $R'_{NE}$ contains some red points. In this case, we set $R_i=\emptyset$ for each $i\in [5]$. So, without loss of generality, assume that $R'_{NE}$ contains no red points. Depending upon whether $R'_{SW}$ contains some red points or not, we have the following two sub-cases.

    \begin{figure}[htbp]
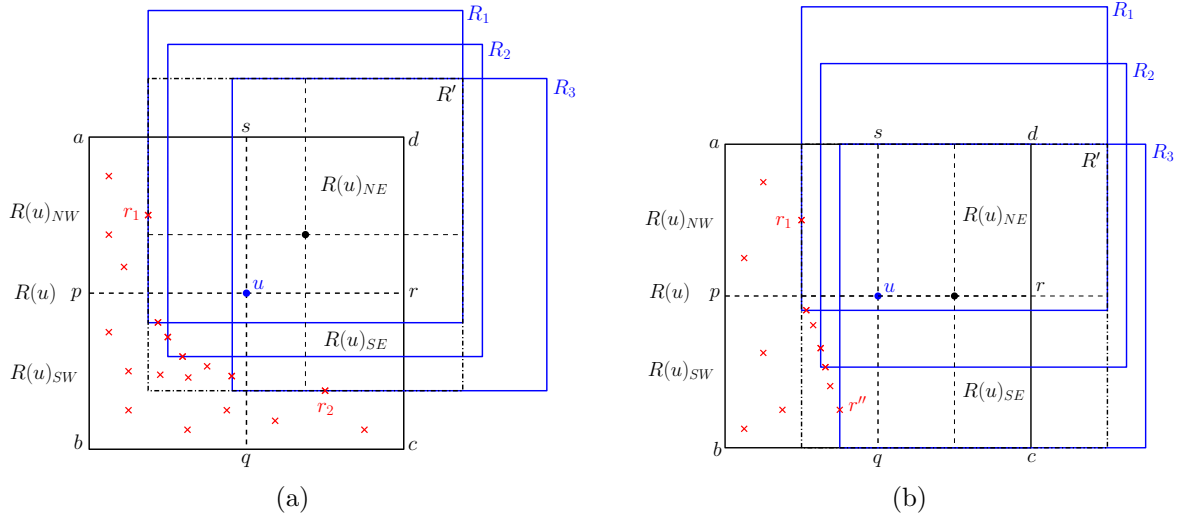

    \centering
    \begin{subfigure}[b]{.49\textwidth}
        \centering
        \includegraphics[page=4, width=75mm]{Candidate_Squares_Construction.pdf}
        \caption{}
        \label{fig:unit_square_case1.2}
    \end{subfigure}
    \hfill
    \begin{subfigure}[b]{.49\textwidth}
        \centering
        \includegraphics[page=5, width=70mm]{Candidate_Squares_Construction.pdf}
        \caption{}
        \label{fig:unit_square_case2.1}
    \end{subfigure}
    \caption{Let $R(u)$ (black) be the square centered at a blue point $u$. (a) Case 1.2: $R'_{SW}$ contains some red points. Here, $R_1$, $R_2$ and $R_3$ are the candidate squares (blue) of $u$ constructed by using $R'$ (black/dashed). (b) Case 2: $R(u)_{NW}$ and $R(u)_{SW}$ both contain some red points. Here, $R_1$, $R_2$ and $R_3$ are the candidate squares (blue) of $u$ constructed by using $R'$ (black/dashed).}
    \label{fig:unit_square_cases2}
    \end{figure}

    \hspace{5mm} \emph{\underline{Case 1.1:}} $R'_{SW}$ contains no red point (see Figure~\ref{fig:unit_square_case1.1}). In this case, we set $R_1=R_2=R_3=R_5=\emptyset$, and $R_4=\UR(u, R')$ as defined in Algorithm~\ref{sub:1}. 


    \hspace{5mm} \emph{\underline{Case 1.2:}} $R'_{SW}$ contains some red points (see Figure~\ref{fig:unit_square_case1.2}). In this case, we set $R_4=R_5=\emptyset$. Since $R(u)_{SW}\cap \P_r \neq \emptyset$ and  $R(u)_{NE}\cap \P_r = \emptyset$, we defined the candidate squares $R_1, R_2$ and $R_3$ as defined in Section~\ref{sec:type1}.

    \item \textbf{Case~2: $R(u)_{SW}$, and one of $R(u)_{NW}$ and $R(u)_{SE}$ contain red points.} Assume, without loss of generality, $R(u)_{NW}$ contains some red points (see Figure~\ref{fig:unit_square_case2.1}). The other case, i.e., $R(u)_{SE}$ contains some red points, is similar. Here, we set $R_4=\emptyset$. Let $r'' \in (R(u)_{NW} \cup R(u)_{SW})\cap \P_r$ be a red point nearest to the line segments $\overline{qs}$. If $r''\in R(u)_{SW}$, since $R(u)_{SW}\cap \P_r \neq \emptyset$ and  $R(u)_{NE}\cap \P_r = \emptyset$, we define the candidate squares $R_1, R_2$ and $R_3$ as defined in Section~\ref{sec:type1}. If $r''\in R(u)_{NW}$, since $R(u)_{NW}\cap \P_r \neq \emptyset$ and  $R(u)_{SE}\cap \P_r = \emptyset$, we define the candidate squares $R_1, R_2$ and $R_3$ in a similar way as defined in Section~\ref{sec:type1}.
    Next, we define the candidate square $R_5$ irrespective of whether $r''\in R(u)_{SW}$ or $r''\in R(u)_{NW}$ (see Figure~\ref{fig:unit_square_case2.2}). Let $H_5$ be a horizontal translated copy of $R(u)$ whose left edge passes through the point $r''$. Observe that $H_{5(NW)} \cup H_{5(SW)}$ contains no red point. If both $H_{5(NE)}$ and $H_{5(SE)}$ do not contain any red point, we set $R_5=H_5$, and if both sub-squares contain some red points, we set $R_5=\emptyset$. 
    Now, consider the case when exactly one of $H_{5(NE)}$ and $H_{5(SE)}$ contains some red points. W.l.o.g., assume that only $H_{5(SE)}$ contains some red points. The other case is similar. In this case, we set $R_5=\UR(u, H_5)$ as defined in Algorithm~\ref{sub:1}.

    \begin{figure}[htbp]
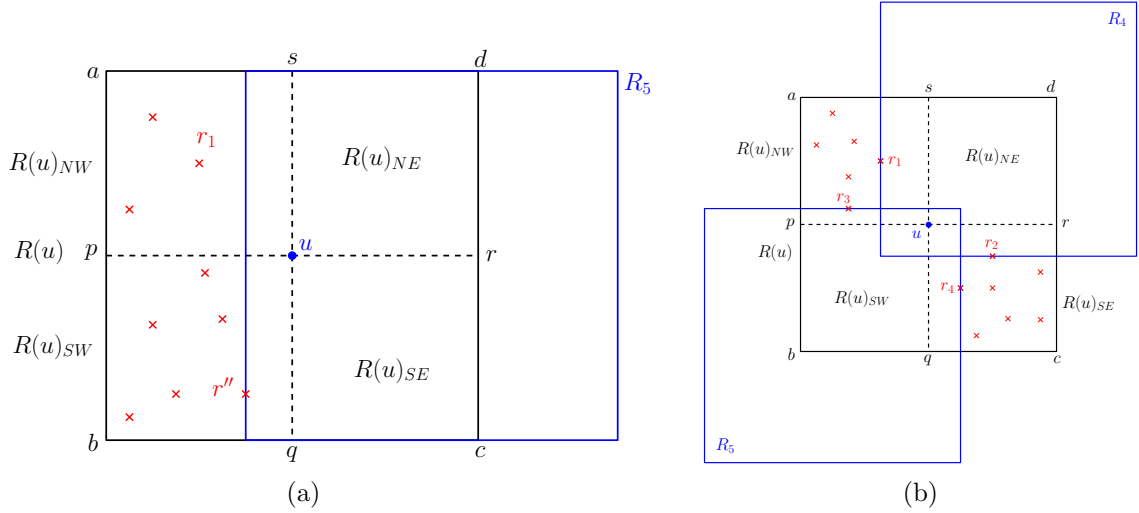

    \centering
    \begin{subfigure}[b]{.49\textwidth}
        \centering
        \includegraphics[page=6, width=85mm]{Candidate_Squares_Construction.pdf}
        \caption{}
        \label{fig:unit_square_case2.2}
    \end{subfigure}
    \hfill
    \begin{subfigure}[b]{.49\textwidth}
        \centering
        \includegraphics[page=7, width=57mm]{Candidate_Squares_Construction.pdf}
        \caption{}
        \label{fig:unit_square_case3}
    \end{subfigure}
    \caption{Let $R(u)$ (black) be the square centered at a blue point $u$. (a) Case 2: $R(u)_{NW}$ and $R(u)_{SW}$ both contain some red points. Here, $R_5$ is a candidate square (blue) of $u$. (b) Case~3: $R(u)_{NW}$ and $R(u)_{SE}$ both contain some red points. Here, $R_4$ and $R_5$ are the two candidate squares (blue) of $u$.}
    \label{fig:unit_square_cases3}
    \end{figure}

    \item \textbf{Case~3: Both $R(u)_{NW}$ and $R(u)_{SE}$  contain red points.} See Figure~\ref{fig:unit_square_case3}. For this case, we set $R_1=R_2=R_3=\emptyset$. Let $r_1 \in R(u)_{NW}\cap \P_r$ and $r_2 \in R(u)_{SE}\cap \P_r$ be two red points nearest to the line segments $\overline{us}$ and $\overline{ur}$, respectively. Let $H_4$ be a square whose left and bottom edges pass through the points $r_1$ and $r_2$, respectively. Observe that $H_{4(SW)}$ contains no red point. If $H_{4(NE)}$ contains some red points, we set $R_4=\emptyset$. Otherwise, $H_{4(SE)} \cup H_{4(NW)}$ may contain some red points. Then, we set $R_4=\UR(u, H_4)$ as defined in Algorithm~\ref{sub:1}.
        
    \hspace{5mm} Next, we define the candidate square $R_5$. Let $r_3 \in R(u)_{NW}\cap \P_r$ and $r_4 \in R(u)_{SE}\cap \P_r$ be two red points nearest to the line segments $\overline{up}$ and $\overline{uq}$, respectively (see Figure~\ref{fig:unit_square_case3}). Let $H_5$ be a square whose top and right edges pass through the points $r_3$ and $r_4$, respectively. Observed that $H_{5(NE)}$ contains no red points. If $H_{5(SW)}$ contains some red points, we set $R_5=\emptyset$. Otherwise, $H_{5(SE)} \cup H_{5(NW)}$ may contain some red points. Then, we set $R_5=\DL(u, H_5)$\footnote{Here, D and L stand for downwards and leftwards, respectively.}, where {\DL} is similar to \UR, move the square downwards and leftwards instead of rightwards and upwards, respectively.

\begin{figure}[htbp]
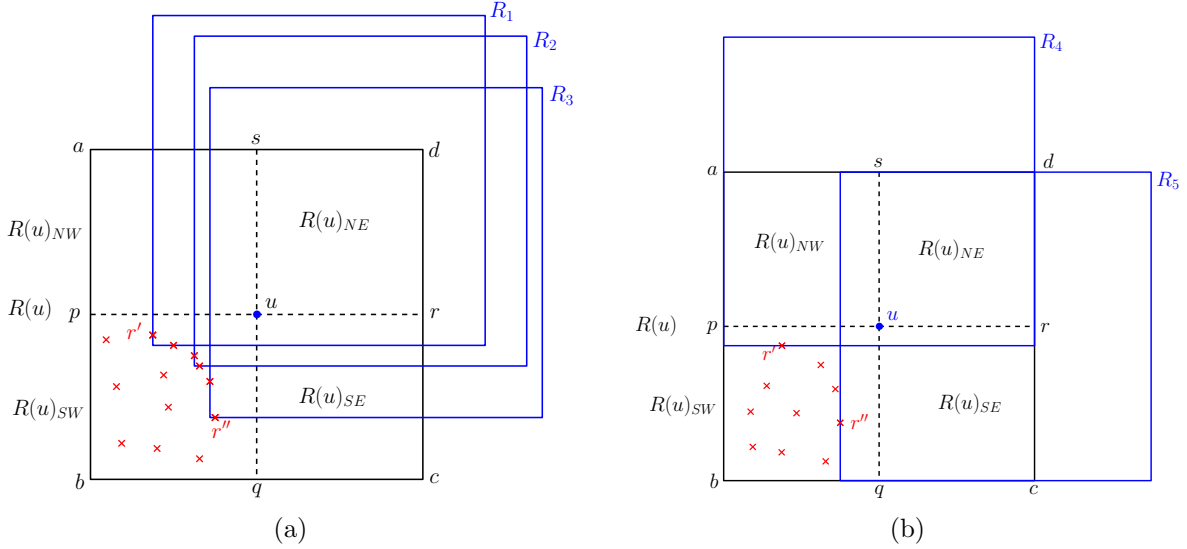

    \centering
    \begin{subfigure}[b]{.49\textwidth}
        \centering
        \includegraphics[page=8, width=75mm]{Candidate_Squares_Construction.pdf}
        \caption{}
        \label{fig:unit_square_case4.1}
    \end{subfigure}
    \hfill
    \begin{subfigure}[b]{.49\textwidth}
        \centering
        \includegraphics[page=9, width=72mm]{Candidate_Squares_Construction.pdf}
        \caption{}
        \label{fig:unit_square_case4.2}
    \end{subfigure}
    \caption{Let $R(u)$ (black) be the square centered at a blue point $u$. Case 4: $R(u)_{SW}$ contains some red points. (a) $R_1$, $R_2$ and $R_3$ are the candidate squares (blue) of $u$ constructed by using $R'=R(u)$. (b) $R_4$ and $R_5$ are the two candidate squares (blue) of $u$.}
    \label{fig:unit_square_cases4}
\end{figure}

    \item \textbf{Case~4: One of $R(u)_{NW}$, $R(u)_{SW}$ and $R(u)_{SE}$ contains red points.} Assume, without loss of generality, $R(u)_{SW}$ contains some red points (see Figure~\ref{fig:unit_square_case4.1}). The other cases, i.e., $R(u)_{NW}$ or $R(u)_{SE}$ contains some red points, are similar. Let $r', r'' \in R(u)_{SW}\cap \P_r$ be two red points nearest to the line segments $\overline{up}$ and $\overline{uq}$, respectively. If $r'= r''$, we set the candidate squares $R_1$, $R_2$ and $R_3$ to be $\emptyset$. Otherwise, since $R(u)_{SW}\cap \P_r \neq \emptyset$ and  $R(u)_{NE}\cap \P_r = \emptyset$, we obtain the candidate squares $R_1, R_2$ and $R_3$ as defined in Section~\ref{sec:type1}. Next, we define $R_4$ and $R_5$ irrespective of whether $r'= r''$ or not (see Figure~\ref{fig:unit_square_case4.2}). Let $H_4$ and $H_5$ be two vertical and horizontal translated copies of $R(u)$ such that the bottom edge of $H_4$ passes through $r'$ and the left edge of $H_5$ passes through $r''$. Observe that $H_{5(NW)} \cup H_{5(SW)}$ contains no red point. If both $H_{5(NE)}$ and $H_{5(SE)}$ do not contain any red point, we set $R_5=H_5$; if both of them contain some red points, we set $R_5=\emptyset$. 
    Now, consider the case when exactly one of $H_{5(NE)}$ and $H_{5(SE)}$ contains some red points. W.l.o.g., assume that only $H_{5(SE)}$ contains some red points. The other case is similar. In this case, we set
    $R_5=\UR(u, H_5)$ as defined in Algorithm~\ref{sub:1}.
    Similarly, we define the candidate square $R_4$, which will be constructed using $H_4$.

    \end{itemize}

From the above description, we have that $0\leq |\S(u)|\leq 5$ for any blue point $u$, and each (non-empty) candidate square of $u$ always contains $u$ but no red points. Later, we will show that if a blue point $u$ can be covered by a square in the optimum, at least one candidate square exists, i.e., $\S(u) \neq \emptyset$.

\subsection{Algorithm Description}\label{sec:algo_desc}
The algorithm always maintains a set $\S_i$ of $\P_r$-empty squares to cover all the blue points $\{u_1, u_2,\ldots, u_i\}$ that have been part of the input so far. Note that, initially, $\S_0 = \emptyset$. The description of the algorithm is given in Algorithm~\ref{alg:occp}.

\begin{algorithm}[htbp]
\caption{An online algorithm for the class cover problem}\label{alg:occp}
\begin{algorithmic}[1]
\State $\S_0\leftarrow \emptyset$
\For {$i = 1$ to $\infty$} \Comment{arrival of a blue point $u_i$}
    \If {$u_i$ is not covered by $\S_{i-1}$}
        \State $\S_i \leftarrow \S_{i-1}\cup \S(u_i)$ \Comment{$\S(u_i)$ is the set of candidate squares of $u_i$}
    \Else
    \State $\S_i\leftarrow \S_{i-1}$
    \EndIf
\EndFor
\end{algorithmic}
\end{algorithm}

\section{Analysis of the Online Algorithm}\label{sec:ana_algo}
 
 In this section, we analyse the performance of our algorithm, i.e., Algorithm~\ref{alg:occp}.
First, in Section~\ref{sec:algo_analy}, we give the correctness of the algorithm. 
We mention some important structural properties of candidate squares in Section~\ref{sec:algo_analy}.
Using these properties, in Section~\ref{sec:comp_analy}, we give a simpler analysis to show that the competitive ratio of our algorithm is at most $\max\{6, 20\log_2(m)\}$ for $m\geq 1$. Finally, in Section~\ref{sec:improved_comp}, for $m\geq 2$, we give a tighter analysis to show that the competitive ratio is at most $10+10\log_2 (m)$.

\subsection{Correctness of the Algorithm}\label{sec:algo_analy} 
Let OPT$(\I)$ be an optimal solution of the class cover problem given by an offline algorithm for an input sequence $\I$ of blue points. Let $\O =\square o_1o_2o_3o_4$ be a square in OPT$(\I)$ centered at $o$. Let $o_5, o_6, o_7$ and $o_8$ are the midpoints of the edges $o_1o_2, o_2o_3, o_3o_4$ and $o_4o_1$ of $\O$, respectively.
Let $u$ be an input blue point lying inside $\O$. Without loss of generality, assume that $u \in \O_{SW}$. 
First, consider the following lemma.

\begin{lemma}\label{lm:type 2 existance}
    Let $u$ be an input blue point in $\O_{SW}$. Also, let $H$ be a square that contains $\O\cap R(u)$, and the set $R(u)\cap H$ contains no red point.
    Now, if $P=\UR(u, H)$, then $P\neq \emptyset$, and the square $P$ contains $\O_{SW}$.
\end{lemma}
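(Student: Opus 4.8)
The plan is to track the movement of $H$ performed by the subroutine \UR\ and argue two things: (i) the loop terminates with a red-point-free square that still contains $u$, and (ii) the final square contains all of $\O_{SW}$. Throughout, I will use the hypotheses that $\O\cap R(u)\subseteq H$ and that $R(u)\cap H$ is red-point free, together with the fact that $u\in\O_{SW}\subseteq\O\cap R(u)$, so that the initial $H$ does contain $u$ and the part of $H$ inside $R(u)$ carries no red points.

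First I would set up invariants that survive each iteration of the while loop. The key observation is that \UR\ only moves $H$ upwards or rightwards, i.e.\ the $SW$ vertex of $H$ only moves up and to the right (weakly). I claim the invariant ``$\O_{SW}\subseteq H$'' is maintained: at the start it holds because $\O_{SW}\subseteq \O\cap R(u)\subseteq H$; and whenever we move $H$ upwards to clear a red point from $H_{SE}$, that red point lies in $H_{SE}\setminus R(u)$ (since $R(u)\cap H$ is red-point free), so it lies to the right of the vertical line through $u$ and hence strictly to the right of $\O_{SW}$ — moving up until $H_{SE}$ is clear therefore does not push the bottom edge of $H$ above $\O_{SW}$; symmetrically, moving $H$ rightwards to clear $H_{NW}$ uses a red point in $H_{NW}\setminus R(u)$, which lies above the horizontal line through $u$ and hence above $\O_{SW}$, so the left edge of $H$ never passes to the right of $\O_{SW}$. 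Here I use that the side length of $H$ equals $1$ and $\O_{SW}$ has side $\tfrac12$, so ``bottom edge of $H$ weakly below the midline $o_5 o_7$ of $\O$'' and ``left edge of $H$ weakly left of the midline $o_6 o_8$'' together force $\O_{SW}\subseteq H$. In particular $u\in \O_{SW}\subseteq H$ persists, so the loop never exits through the ``$H$ does not contain $u$'' clause.

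Next I would argue termination and that the returned square is nonempty. Since all relevant red points lie in the finite set $\P_r$, and each move strictly clears at least one red point from a designated sub-square while (by the monotonicity of the motion) never re-introducing a previously cleared red point into that configuration, the process halts after finitely many steps. When it halts, either $H_{NW}\cup H_{SE}$ is red-point free; I must then also rule out red points in $H_{NE}$. A red point in $H_{NE}$ would lie in $H\setminus R(u)$ (again by the red-point-freeness of $R(u)\cap H$), in the quadrant above and to the right of $u$; but by the time the loop stops, the $SW$ vertex of $H$ has moved up and right starting from a position with $\O_{SW}\subseteq H$, and I will check that $H_{NE}$ can be assumed red-point free at the start (this is part of how $H$ is chosen in each application of the lemma, via the staircase construction of Observation~\ref{obs:Pj_SW} and the case analysis of Section~\ref{sec:all_candidate_square}) and that no upward/rightward move can bring a new red point into $H_{NE}$, because an upward move can only let red points leave $H_{NE}$ through the top and a rightward move can only let them leave through the right. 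Hence at termination $H$ is entirely red-point free and contains $u$, so \UR\ returns $P=H\neq\emptyset$.

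Finally, combining the invariant $\O_{SW}\subseteq H$ (maintained at every step, in particular at termination) with $P=H$, we conclude $P$ contains $\O_{SW}$, which is the claim. The main obstacle I anticipate is the bookkeeping in the second paragraph: making precise why clearing $H_{SE}$ never pushes the bottom edge of $H$ past the midline of $\O$ (and the symmetric statement for $H_{NW}$), i.e.\ quantitatively controlling how far each move travels using the fact that the offending red point sits outside $R(u)$ but inside the unit square $H$, so its distance from the relevant edge of $H$ is at most the amount by which that edge already overshoots $u$. Everything else is a routine monotonicity/finiteness argument.
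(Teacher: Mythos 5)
Your overall strategy is the same as the paper's (track the upward/rightward motion of $H$, show its left and bottom edges never cross the corresponding edges of $\O$, and conclude both termination with a red-point-free square and $\O_{SW}\subseteq P$), but two of your key steps are stated incorrectly, and they are exactly where the content of the lemma lies. First, your criterion for the invariant is wrong: ``bottom edge of $H$ weakly below the midline $o_5o_7$ and left edge weakly left of the midline $o_6o_8$'' does \emph{not} force $\O_{SW}\subseteq H$ --- if the bottom edge of $H$ sits anywhere strictly between the bottom edge of $\O$ and the midline, $H$ misses a strip of $\O_{SW}$. What you need (and what the hypothesis $\O\cap R(u)\subseteq H$ gives initially, since it forces the left/bottom edges of $H$ to lie between those of $R(u)$ and those of $\O$) is that the left edge of $H$ stays weakly left of the \emph{left edge} of $\O$ and the bottom edge stays weakly below the \emph{bottom edge} of $\O$. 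Second, your justification for why the moves are bounded is a non sequitur: the fact that an offending red point of $H_{SE}$ lies to the right of $\O_{SW}$ (horizontal information) says nothing about how far \emph{up} $H$ must travel to clear it. The correct argument, which is the paper's, is vertical: because the left edge of $H$ never passes the left edge of $\O$, the right edge of $H$ never passes the right edge of $\O$, so every red point in $H_{SE}$ lies horizontally within $\O$'s extent and vertically below the center line; since $\O$ is red-point free, such a point must lie strictly below the bottom edge of $\O$, so the upward motion stops no later than when the bottom edges of $H$ and $\O$ coincide (at which moment $H_{SE}\subseteq\O$ is automatically clear). The symmetric statement bounds the rightward motion, and these bounds are what simultaneously preserve $\O_{SW}\subseteq H$, keep $u\in H$, and guarantee the loop exits with a red-point-free square, i.e.\ $P\neq\emptyset$. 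Also note ``right of the vertical line through $u$'' does not imply ``right of $\O_{SW}$''; the true statement is that such a point lies right of the right edge of $R(u)$, which is at or right of the midline of $\O$.

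Your handling of $H_{NE}$ has the same flavor of gap. You defer its red-point-freeness to ``how $H$ is chosen in each application,'' but the lemma must be proved from its stated hypotheses alone, and indeed they suffice: $\O\cap R(u)\subseteq H$ places the center of $H$ inside $\O_{SW}$, so $H_{NE}\subseteq\O$ at the start and, by the edge bounds above, throughout the motion. Your alternative justification --- that upward/rightward moves ``can only let red points leave $H_{NE}$'' --- is false: moving $H$ up (or right) can bring red points lying above (or to the right of) $H$ into $H_{NE}$; what actually protects $H_{NE}$ (and likewise $H_{SW}$, whose red-point-freeness you never address even though the loop requires the whole square to become clear) is that these sub-squares remain inside $\O\cup\bigl(R(u)\cap H\bigr)$, both of which are red-point free. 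With these corrections your argument collapses onto the paper's proof; as written, the invariant and the movement bound do not hold for the reasons you give.
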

    \begin{proof}
        First, observe that $H_{NE}$ contains no red points as $H_{NE}\subseteq \O$ (see Figure~\ref{fig:type 2 existance}). So, only $H_{NW}\cup H_{SE}$ may contain some red points. Now, as the left edge of $H$ lies to the left side of the left edge of $\O$ and the square $\O$ is red point free, we have that $H$ will be moved rightwards until $H_{NW}$ becomes red point free or the left edge of $H$ touches the left edge of $\O$. Similarly, since the bottom edge of $H$ lies below the bottom edge of $\O$ and the square $\O$ is red point free, we have that $H$ will be moved upwards until $H_{SE}$ becomes red point free or the bottom edge of $H$ touches the bottom edge of $\O$. Also, observe that during the movement of $H$, the sub-squares $H_{NE}$ and $H_{SW}$ never contain a red point. Therefore, during the movement of $H$, we are guaranteed to find a square that contains $u$ but no red points, and we set such a square as an output square $P$. Hence, $P\neq \emptyset$. Note that during the movement of $H$, the left edge of $H$ always lies to the left side of the left edge of $\O$, and the bottom edge of $H$ always lies below the bottom edge of $\O$. Therefore, $H$ always contains $\O_{SW}$, and thus, $P$ also contains $\O_{SW}$.
    \end{proof}

  \begin{figure}[htbp]
    \centering
    \begin{subfigure}[b]{.49\textwidth}
        \centering
        \includegraphics[page=1, width=75mm]{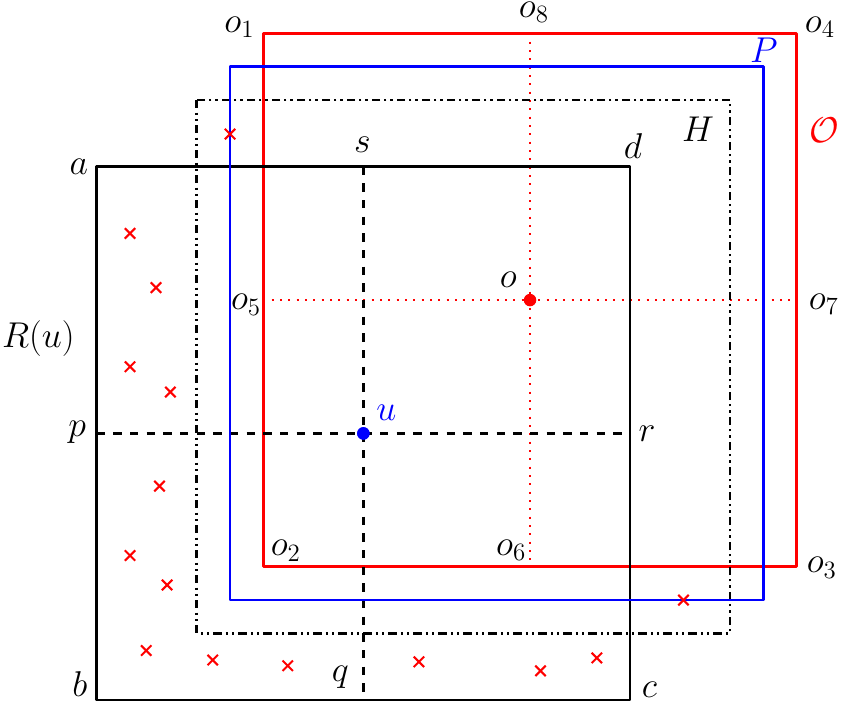}
        \caption{}
        \label{fig:type 2 existance}
    \end{subfigure}
    \hfill
    \begin{subfigure}[b]{.49\textwidth}
        \centering
        \includegraphics[page=2, width=75mm]{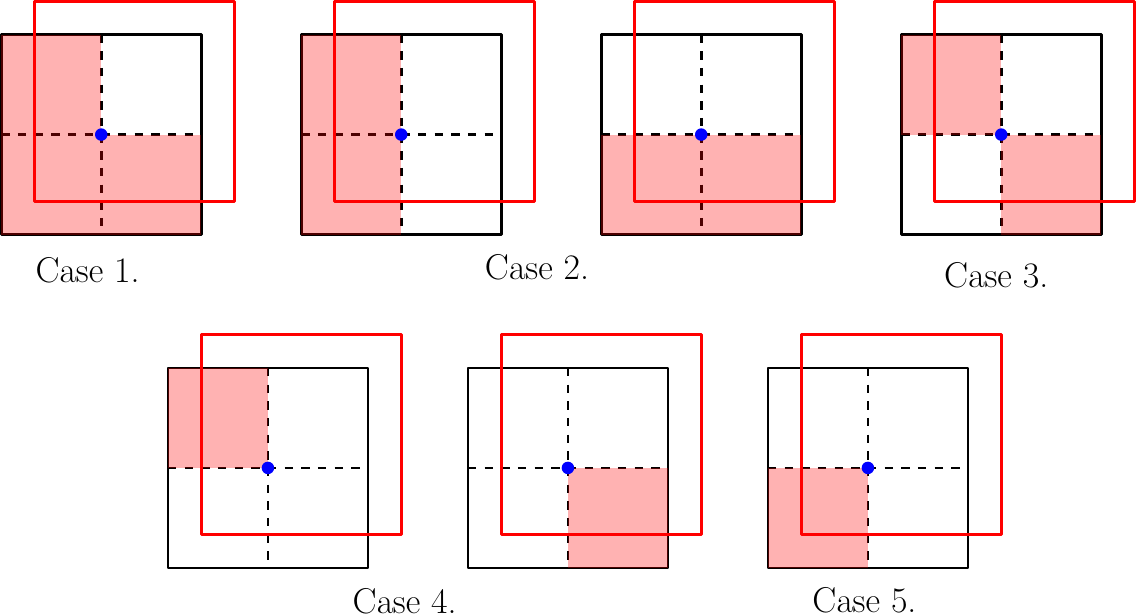}
        \caption{}
        \label{fig:opt_square_case_1.1}
    \end{subfigure}
    \caption{Let $\O$ (red) be an optimum square containing a blue point $u$. Also, let $R(u)$ (black) be the square centered at the blue point $u$. (a) $H$ is a square such that $\O\cap R(u) \subseteq H$, and $P$ (blue) is the square such that $P=\UR(u, H)$ and it contains $\O_{SW}$. (b) Case 1.1: $R'_{SW}(u)$ contains no red points. Then, $R_4$ (blue) is a Type 2 candidate square of $u$ that contains $\O_{SW}$.}
    \label{fig:Correctness_Lemma_1}
    \end{figure}

Now, using Lemma~\ref{lm:type 2 existance}, we prove the following lemma that will be needed in the subsequent sections.


\begin{lemma}\label{lema:candidate_squares}
   Let $u$ be an input blue point in $\O_{SW}$. Then, one of the following properties holds.
    \begin{enumerate}
        \item[P1.] At least one non-empty Type~2 candidate square of $u$ exists and it contains~$\O_{SW}$.

        \noindent or,
        \item[P2.] $\Q(u)_{SW}\cap R(u)_{SW}\neq \emptyset$. 
        

    \end{enumerate}
\end{lemma}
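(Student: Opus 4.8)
The plan is to analyze the location of the red points in $R(u)_{SW}$ relative to the optimum square $\O$, and to use a case split on whether the staircase machinery engages. Recall that $u \in \O_{SW}$, and $\O$ is $\P_r$-empty, so in particular $\O \cap \P_r = \emptyset$; since $u$ is the center of $R(u)$ and $u \in \O_{SW}$, the optimum square $\O$ covers a large portion of $R(u)$ — in fact $\O$ contains $R(u)_{SW}$'s ``inner corner'' neighborhood of $u$ and, more importantly, $\O$ contains the sub-square $R(u)_{SW}$ restricted to the side of $u$ where $\O$ extends. I will first dispose of the case where $\Q(u)_{SW} \cap R(u)_{SW} = \emptyset$ (so that P2 fails) and show P1 must then hold.

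First I would recall the definition of $\Q(u)_{SW}$ from Section~\ref{sec:possible_candidate_square}: it is built from $r_1 \in R(u)_{NW} \cap \P_r$ (or the $NW$ vertex $a$ if that intersection is empty), from $r_2 \in R(u)_{SE} \cap \P_r$ (or the $SE$ vertex $c$), and from the $SW$-dominating points of $Q' = R(u)_{SW} \cap R'_{SW} \cap \P_r$. So $\Q(u)_{SW} \cap R(u)_{SW} = \emptyset$ forces $Q' = \emptyset$, i.e. $R(u)_{SW} \cap R'_{SW}$ contains no red point, where $R'$ is the square with left edge through $r_1$ and bottom edge through $r_2$. I would then argue that the relevant candidate-square construction in this situation is exactly one of the Type~2 constructions: depending on which of the four sub-squares of $R(u)$ contain red points, the construction either sets $R_4 = \UR(u, R')$ (Cases 1.1 and 3, and the analogous sub-cases of Case~4) or builds $R_5$ from a translated copy of $R(u)$ whose edge passes through a nearest red point (Cases 2 and 4). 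In each such situation I must produce a square $H$ with $\O \cap R(u) \subseteq H$ and $R(u) \cap H$ red-point-free, so that Lemma~\ref{lm:type 2 existance} applies and yields a non-empty Type~2 candidate square containing $\O_{SW}$.

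The key geometric step — and I expect this to be the main obstacle — is verifying the hypothesis ``$\O \cap R(u) \subseteq H$ and $R(u) \cap H$ contains no red point'' for the square $H$ used in each branch. For the $\UR(u, R')$ branch this amounts to showing that $R'$ (whose left edge is at the $x$-coordinate of $r_1$ and whose bottom edge is at the $y$-coordinate of $r_2$) has its left edge to the left of $\O$'s left edge and its bottom edge below $\O$'s bottom edge; this follows because $r_1$ is the \emph{nearest} red point in $R(u)_{NW}$ to the segment $\overline{us}$ (so no red point of $R(u)_{NW}$ lies strictly between the vertical line through $r_1$ and $u$), $\O$ is red-point-free, and $u \in \O$, which pins $\O$'s left edge to the right of $r_1$'s vertical line; symmetrically for the bottom edge via $r_2$. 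Red-point-freeness of $R(u) \cap R'$ is exactly the hypothesis we are in (combined with $R'_{NE}$ being red-point-free, which the construction assumes, and $R'_{SW} \cap R(u)_{SW} = \emptyset$ from P2 failing — one still has to check $R'_{SW}$ outside $R(u)_{SW}$ is irrelevant since $R(u) \cap R'$ is what Lemma~\ref{lm:type 2 existance} needs). For the $R_5$-from-$H_5$ branches in Cases~2 and~4, the nearest-point choice of $r''$ plays the same role: it guarantees $H_5 \cap R(u)$ is red-point-free on the relevant sides, and $H_5 \supseteq \O \cap R(u)$ because $u \in \O$ bounds where $\O$ can reach. I would handle Cases 1.1, 2, 3, 4 one at a time, each time exhibiting the witness square, checking the two hypotheses of Lemma~\ref{lm:type 2 existance}, and concluding that a non-empty Type~2 candidate square exists and contains $\O_{SW}$, which is P1. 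This is a finite case analysis with the same geometric lemma invoked throughout, so once the witness-square verification is done in one representative case the rest follow by the stated symmetries.
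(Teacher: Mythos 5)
Your proposal takes essentially the same route as the paper's proof: a case analysis on which sub-squares of $R(u)$ contain red points, in each non-staircase case exhibiting a witness square $H$ (namely $R'$, $H_4$ or $H_5$ through the nearest red points) satisfying $\O\cap R(u)\subseteq H$ and $R(u)\cap H$ red-point free so that Lemma~\ref{lm:type 2 existance} yields a non-empty Type~2 candidate containing $\O_{SW}$, and identifying the remaining situations with $R(u)_{SW}\cap R'_{SW}\cap\P_r\neq\emptyset$, i.e., property P2. Your contrapositive organization (assume P2 fails, derive P1) is only cosmetically different, and the detail you flag—that red points of $R'_{SW}$ lying outside $R(u)_{SW}$ cannot interfere, which is what guarantees the construction actually enters the Type~2 branch (e.g., Case~1.1 rather than 1.2)—is precisely the observation the paper makes via the nearest-point choices of $r_1$ and $r_2$.
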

\begin{proof}
   First, note that, as $R(u)$ and $\O$ are translated copies of each other, and $u\in \O_{SW}$, the square $R(u)$ contains $\O_{SW}$. Now, if $R(u)\cap\P_r=\emptyset$, then according to Section~\ref{sec:all_candidate_square}, we have $\S(u)=\{R(u)\}$. In this case, property P1 holds. So, w.l.o.g., assume that at least one sub-squares of $R(u)$ contains some red points. But, observe that, as $R(u)_{NE}$ is contained in $\O$, the sub-square $R(u)_{NE}$ contains no red point. Now, we have the following cases depending on which sub-squares of $R(u)$, except $R(u)_{NE}$, contain some red points. See Figure~\ref{fig:OPT_all_Cases} for illustration.

    \begin{figure}[htbp]
        \centering
        \includegraphics[page=1, width=90mm]{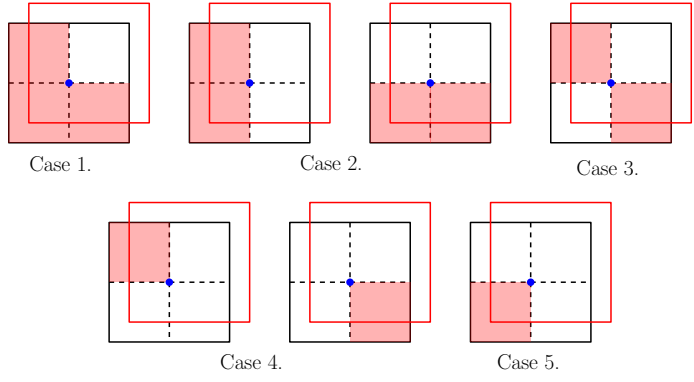}
        \caption{Let $\O$ (red) be an optimum square containing a blue point $u$. Also, let $R(u)$ (black) be the square centered at the blue point $u$. The Red shaded sub-squares of $R(u)$ contain some red points.}
        \label{fig:OPT_all_Cases}
    \end{figure}

\begin{itemize}

       \item \textbf{Case 1: Only $R(u)_{NE}$ contains no red points.} Let $r_1 \in R(u)_{NW}\cap \P_r$ and $r_2 \in R(u)_{SE}\cap \P_r$ be two red points nearest to the line segments $\overline{us}$ and $\overline{ur}$, respectively (see Figure~\ref{fig:opt_square_case_1.1}). As defined in Section~\ref{sec:all_candidate_square} (Case~1), consider the square $R'$ whose left edge and bottom edge pass through the points $r_1$ and $r_2$, respectively.
       \begin{claim}\label{clm:R_RSW}
           $\O\cap R(u)\subseteq R'$, and $R'$ contains $\O_{SW}$.
       \end{claim}
       \begin{proof}
           Observe that the left edge of $R'$ lies between the left edge of $R(u)$ and the left edge of $\O$. Similarly, the bottom edge of $R'$ lies between the bottom edge of $R(u)$ and the bottom edge of $\O$. Due to this, $R'$, a translated copy of both $R(u)$ and $\O$, contains $R(u)\cap\O$. Now, since $R(u)$ contains $\O_{SW}$ and $\O\cap R(u)\subseteq R'$, we have that $R'$ contains $\O_{SW}$.
       \end{proof}
       Due to the above claim, $R'_{NE}$ contains no red point. Now, depending on whether $R'_{SW}$ contains some red points or not, we have the following two sub-cases.

       \hspace{5mm} \underline{\emph{Case~1.1:}} $R'_{SW}$ contains no red point (see Figure~\ref{fig:opt_square_case_1.1}). So, only $R'_{NW}\cup R'_{SE}$ may contain some red points. In that case (see Section~\ref{sec:all_candidate_square} (Case~1.1)), we set $R_4=\UR(u, R')$. Note that because of the definition of $R'$, the set $R(u)\cap R'$ contains no red point. Thus, due to Lemma~\ref{lm:type 2 existance}, we have that the candidate square $R_4\neq\emptyset$, and $R_4$ contains $O_{SW}$.
       So, in this case, property P1 holds.


    \begin{figure}[htbp]
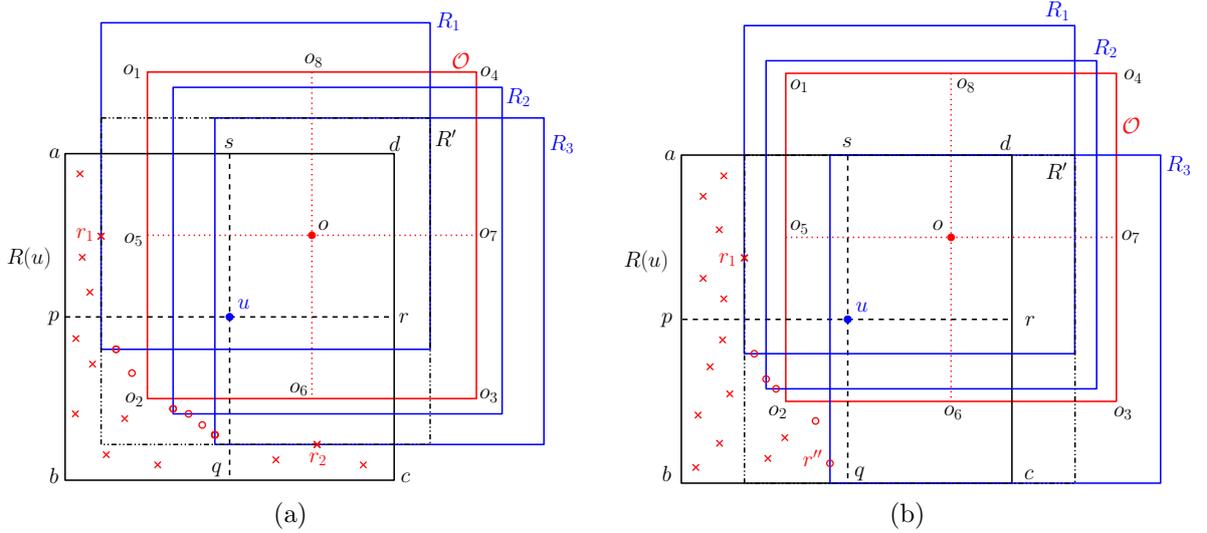

    \centering
    \begin{subfigure}[b]{.49\textwidth}
        \centering
        \includegraphics[page=5, width=75mm]{OPT_Squares_Existance.pdf}
        \caption{}
        \label{fig:opt_square_case_1.2}
    \end{subfigure}
    \hfill
    \begin{subfigure}[b]{.49\textwidth}
        \centering
        \includegraphics[page=6, width=75mm]{OPT_Squares_Existance.pdf}
        \caption{}
        \label{fig:opt_square_case_2.1}
    \end{subfigure}
    \caption{Let $\O$ (red) be an optimum square containing a blue point $u$. Also, let $R(u)$ (black) be the square centered at the blue point $u$. (a) Case 1.2: $R'_{SW}$ contains some red points. Here, $\Q(u)_{SW}\cap R(u)_{SW}$ (red/circle) is non-empty. Then, $R_1, R_2$ and $R_3$ (blue) are the candidate squares of $u$ that contains $\O_{NW}\cup \O_{SE}$, constructed by using $R'$ (black/dashed). (b) Case~2.1: $R(u)_{NW}$ and $R(u)_{SW}$ contain some red points such that $r''$ does not lie on the left side of the left edge of $\O$. Here, $\Q(u)_{SW}\cap R(u)_{SW}$ (red/circle) is non-empty. Then, $R_1$, $R_2$ and $R_3$ (blue) are the candidate squares of $u$ that contains $\O_{NW}\cup\O_{SE}$, constructed by using $R'$ (black/dashed).}
    \label{fig:opt_square_cases1}
    \end{figure}

       \hspace{5mm} \underline{\emph{Case~1.2:}} $R'_{SW}$ contains some red points (see Figure~\ref{fig:opt_square_case_1.2}). In this case, we define the Type~1 candidate squares of $u$ (see Section~\ref{sec:all_candidate_square} (Case~1.2)). Since $R(u)_{SW}\cap \P_r \neq \emptyset$ and  $R(u)_{NE}\cap \P_r = \emptyset$, as defined in  Section~\ref{sec:possible_candidate_square}, consider the sequence of $SW$ staircase points $\Q(u)_{SW}$ of $u$ constructed by using the points in $R(u)_{SW}\cap R'_{SW}\cap\P_r$. Because of the definitions of the two points $r_1, r_2$ and the square $R'$, observe that the red points in $R'_{SW}$ must belong to $R(u)_{SW}$. Therefore, $R(u)_{SW}\cap R'_{SW}\cap\P_r\neq \emptyset$. As a result, $\Q(u)_{SW}\cap R(u)_{SW}\neq \emptyset$. Thus, property P2 holds. 

    \begin{figure}[htbp]
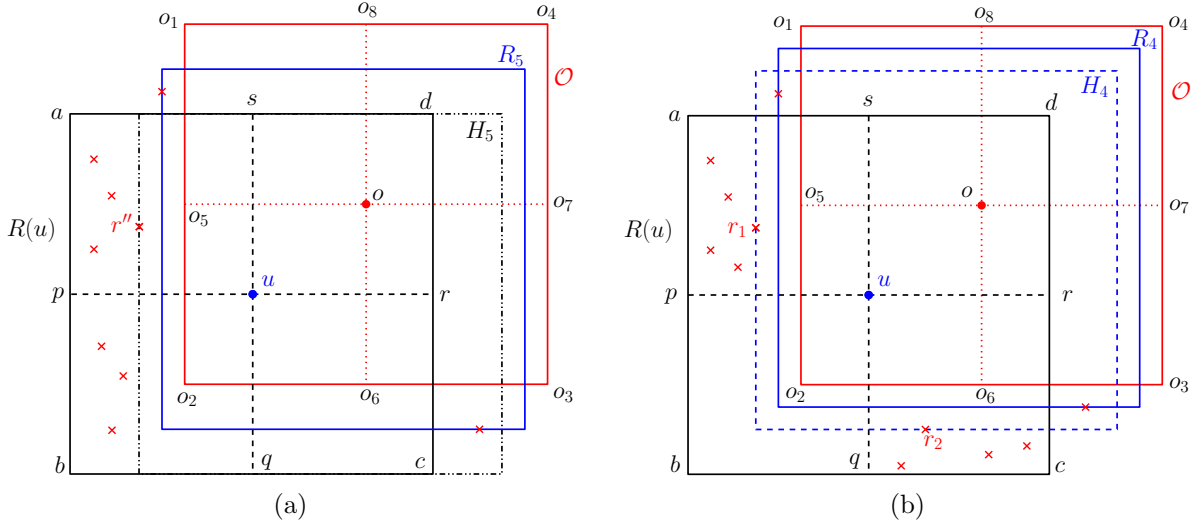

    \centering
    \begin{subfigure}[b]{.49\textwidth}
        \centering
        \includegraphics[page=7, width=75mm]{OPT_Squares_Existance.pdf}
        \caption{}
        \label{fig:opt_square_case_2.2}
    \end{subfigure}
    \hfill
    \begin{subfigure}[b]{.49\textwidth}
        \centering
        \includegraphics[page=8, width=75mm]{OPT_Squares_Existance.pdf}
        \caption{}
        \label{fig:opt_square_case_3}
    \end{subfigure}
    \caption{Let $\O$ (red) be an optimum square containing a blue point $u$. Also, let $R(u)$ (black) be the square centered at the blue point $u$. (a) Case 2.2: $R(u)_{NW}$ and $R(u)_{SW}$ contain some red points such that $r''$ lies on the left side of the left edge of $\O$. Then, $R_5$ (blue) is a candidate square of $u$ that contains $\O_{SW}$, obtained by moving the square $H_5$ (black/dashed). (b) Case~3: $R_{NW}(u)$ and $R_{SE}(u)$ both contain some red points. Then, $R_4$ (blue) is a candidate square of $u$ that contains $\O_{SW}$, obtained by moving the square $H_4$ (blue/dashed).}
    \label{fig:opt_square_cases3}
    \end{figure}


       \item \textbf{Case 2: $R(u)_{SW}$ contains some red points, and one of $R(u)_{NW}$ and $R(u)_{SE}$ contains no red point.} W.l.o.g., assume that $R(u)_{SE}$ contains no red points; the other case is similar. Let $r'' \in (R(u)_{NW}\cup R(u)_{SW})\cap \P_r$ be a red point nearest to the line segments $\overline{qs}$ (see Figure~\ref{fig:opt_square_case_2.1}). Now, depending on the position of $r''$, we have the following two sub-cases.

       \hspace{5mm} \underline{\emph{Case~2.1:}} The red point $r''$ lies on the right side of the left edge of $\O$ (see Figure~\ref{fig:opt_square_case_2.1}). Hence, $r''\in R(u)_{SW}\cap \P_r$. In this case, we define the Type~1 candidate squares of $u$ (see Section~\ref{sec:all_candidate_square} (Case~2)). Since $R(u)_{SW}\cap \P_r \neq \emptyset$ and  $R(u)_{NE}\cap \P_r = \emptyset$, as defined in  Section~\ref{sec:possible_candidate_square}, consider the sequence of $SW$ staircase points $\Q(u)_{SW}$ of $u$ constructed by using the points in $R(u)_{SW}\cap R'_{SW}\cap\P_r$. Recall that $R'$ is a square whose left edge and bottom edge pass through the points $r_1$ and $r_2$, respectively, where $r_1 \in R(u)_{NW}\cap \P_r$ be a red point nearest to the line segments $\overline{us}$ and $r_2=c$, the $SE$ vertex of $R(u)$ (as $R(u)_{SE}$ contains no red point). Observe that $R'$ is a horizontal translated copy of $R(u)$ whose left edge passes through the point~$r_1$. 
       \begin{claim}\label{clm:H5_RSW}
           $\O\cap R(u)\subseteq R'$, and $R'$ contains $\O_{SW}$.
       \end{claim}
       \begin{proof}
           Observe that the left edge of $R'$ lies between the left edge of $R(u)$ and the left edge of $\O$. As a result, we have that $R'$, a horizontal translated copy of both $R(u)$ and $\O$, contains $R(u)\cap\O$. Since $R(u)$ contains $\O_{SW}$ and $\O\cap R(u)\subseteq R'$, we have that $R'$ contains $\O_{SW}$.
       \end{proof}
       Due to Claim~\ref{clm:H5_RSW}, $R'_{NE}$ does not contain any red point. But $r''\in R'_{SW}$, i.e., $R'_{SW}$ contains some red points. Therefore, $R(u)_{SW}\cap R'_{SW}\cap\P_r\neq \emptyset$. As a result, $\Q(u)_{SW}\cap R(u)_{SW}\neq \emptyset$. So, property P2 holds.




       \hspace{5mm} \underline{\emph{Case~2.2:}} The red point $r''$ lies on the left side of the left edge of $\O$ (see Figure~\ref{fig:opt_square_case_2.2}), then as defined in Section~\ref{sec:all_candidate_square} (Case~2), we consider a square $H_5$, a horizontal translated copy of $R(u)$, whose left edge passes through the point $r''$. In a similar way as in Claim~\ref{clm:H5_RSW}, we can show that $\O\cap R(u)\subseteq H_5$, and $H_5$ contains $\O_{SW}$.
       Due to this, $H_{5(NE)}$ contains no red point. Observe that only $H_{5(SE)}$ may contain some red points. In that case, we evoke \UR$(u, H_5)$. Note that because of the definition of $H_5$, the set $R(u)\cap H_5$ contains no red point.
       Thus, due to Lemma~\ref{lm:type 2 existance}, we have that the candidate square $R_5\neq\emptyset$, and $R_5$ contains $O_{SW}$. 
       Hence, property P1 holds.

    \begin{figure}[htbp]
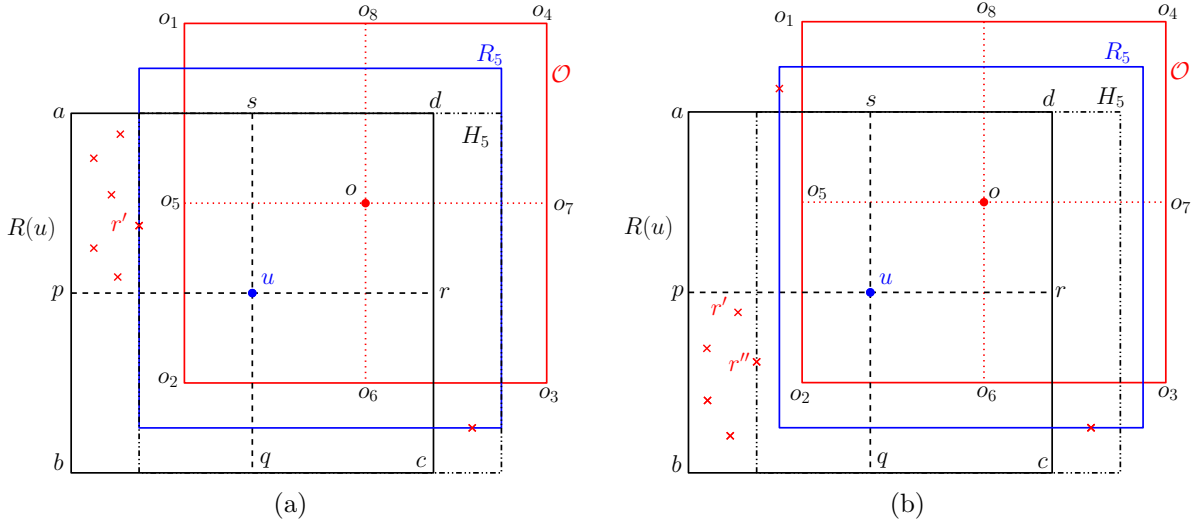

    \centering
    \begin{subfigure}[b]{.49\textwidth}
        \centering
        \includegraphics[page=9, width=75mm]{OPT_Squares_Existance.pdf}
        \caption{}
        \label{fig:opt_square_case_4}
    \end{subfigure}
    \hfill
    \centering
    \begin{subfigure}[b]{.49\textwidth}
        \centering
        \includegraphics[page=10, width=75mm]{OPT_Squares_Existance.pdf}
        \caption{}
        \label{fig:opt_square_case_5.1}
    \end{subfigure}
    \caption{Let $\O$ (red) be an optimum square containing a blue point $u$. Also, let $R(u)$ (black) be the square centered at the blue point $u$. (a) Case 4: Only $R_{NW}(u)$ contains some red points. Then, $R_5$ (blue) is a candidate square of $u$ that contains $\O_{SW}$, obtained by moving the square $H_5$ (black/dashed). (b) Case 5.1: Only $R_{SW}(u)$ contains some red points, and both $r', r''$ lie on the left side of $a_1a_2$. Then, $R_5$ (blue) is a candidate square of $u$ that contains $\O_{SW}$, obtained by moving the square $H_5$ (black/dashed).}
    \label{fig:opt_square_cases4}
    \end{figure}

       \item \textbf{Case~3: $R(u)_{SW}$ contains no red points, but  both $R(u)_{NW}$ and $R(u)_{SE}$ contain some red points.} Let $r_1 \in R(u)_{NW}\cap \P_r$ and $r_2 \in R(u)_{SE}\cap \P_r$ be two red points nearest to the line segments $\overline{us}$ and $\overline{ur}$, respectively (see Figure~\ref{fig:opt_square_case_3}). Then, as defined in Section~\ref{sec:all_candidate_square} (Case~3), we consider the square $H_4$ whose left and bottom edges pass through the points $r_1$ and $r_2$, respectively. In a similar way as in Claim~\ref{clm:R_RSW}, we can show that $\O\cap R(u)\subseteq H_4$, and $H_4$ contains $\O_{SW}$.
       Due to this, $H_{4(NE)}$ contains no red point. Note that only $H_{4(NW)}\cup H_{4(SE)}$ may contains some red points. If $H_{4(NW)}\cup H_{4(SE)}$ contains some red points, we evoke \UR$(u, H_4)$. Notice that because of the definition of $H_4$,  the set $R(u)\cap H_4$ contains no red point. 
       Thus, due to Lemma~\ref{lm:type 2 existance}, we have that the candidate square $R_4\neq\emptyset$, and $R_4$ contains $O_{SW}$. Hence, property P1 holds.

       \item \textbf{Case~4: $R(u)_{SW}$ contains no red points, and one of $R(u)_{NW}$ and $R(u)_{SE}$ contains some red points.} W.l.o.g., assume that $R(u)_{NW}$ contains some red points; the other case is similar. Let $r' \in R(u)_{NW}\cap \P_r$  be a red point nearest to the line segments $\overline{us}$ (see Figure~\ref{fig:opt_square_case_4}). Then, as defined in Section~\ref{sec:all_candidate_square} (Case~4), we consider the square $H_5$, a horizontal translated copy of $R(u)$, whose left edge passes through $r'$. In a similar way as in Claim~\ref{clm:H5_RSW}, we can show that $\O\cap R(u)\subseteq H_5$, and $H_5$ contains $\O_{SW}$.
       Due to this, $H_{5(NE)}$ contains no red point. Note that only $H_{5(SE)}$ may contain some red points. If $H_{5(SE)}$ contains some red points, we evoke \UR$(u, H_5)$. 
       Notice that because of the definition of $H_5$, the set $R(u)\cap H_5$ contains no red point. 
       Thus, due to Lemma~\ref{lm:type 2 existance}, we have that the candidate square $R_5\neq\emptyset$, and $R_5$ contains $O_{SW}$. Hence, property P1 holds.

       \item \textbf{Case 5: Only $R(u)_{SW}$ contains some red points.} Let $r', r'' \in R(u)_{SW}\cap \P_r$ be two red points nearest to the line segments $\overline{up}$ and $\overline{uq}$, respectively (see Figure~\ref{fig:opt_square_case_5.1}). Now, depending on the position of $r'$ and $r''$, we have the following three sub-cases.


       \hspace{5mm} \underline{\emph{Case~5.1:}} The red points $r'$ and $r''$ both lie on the left side of the left edge of $\O$ (see Figure~\ref{fig:opt_square_case_5.1}). We consider the square $H_5$, a horizontal translated copy of $R(u)$, whose left edge passes through the point $r''$ (see Section~\ref{sec:all_candidate_square} (Case~4)). In a similar way as in Claim~\ref{clm:H5_RSW}, we can show that $\O\cap R(u)\subseteq H_5$, and $H_5$ contains $\O_{SW}$.
       Due to this, $H_{5(NE)}$ contains no red point. Note that only $H_{5(SE)}$ may contain some red points. In that case, we evoke \UR$(u, H_5)$. Notice that because of the definition of $H_5$, the set $R(u)\cap H_5$ contains no red point. Thus, due to Lemma~\ref{lm:type 2 existance}, we have that the candidate square $R_5\neq\emptyset$, and $R_5$ contains $O_{SW}$. Hence, property P1 holds.
       
       \hspace{5mm} \underline{\emph{Case~5.2:}} The red points $r'$ and $r''$ both lie below the bottom edge $\O$. In this case, in a similar way as in the above Case~5.1, we can prove the existence of the candidate square $R_4$, and the fact that $R_4$ contains $\O_{SW}$. Hence, property P1 also holds.

       \hspace{5mm} \underline{\emph{Case~5.3:}} The red points $r'$ and $r''$ are distinct such that exactly one of $r'$ and $r''$ lies to the left side of the left edge of $\O$ and other one lies below the bottom edge of $\O$ (see Figure~\ref{fig:opt_square_case_5.3}).  In this case, we define the Type~1 candidate squares of $u$ (see Section~\ref{sec:all_candidate_square} (Case~4)). Since $R(u)_{SW}\cap \P_r \neq \emptyset$ and  $R(u)_{NE}\cap \P_r = \emptyset$, as defined in  Section~\ref{sec:possible_candidate_square}, consider the sequence of $SW$ staircase points $\Q(u)_{SW}$ of $u$ constructed by using the points in $R(u)_{SW}\cap R'_{SW}\cap\P_r$. Recall that $R'$ is a square whose left edge and bottom edge pass through the points $r_1$ and $r_2$, respectively, where $r_1=a$, the $NW$ vertex of $R(u)$ and $r_2=c$, the $SE$ vertex of $R(u)$ (as both sub-squares $R(u)_{NW}$ and $R(u)_{SE}$ contain no red point). Observe that $R'=R(u)$. Therefore, $R'_{SW}$ contains some red points. As a result, $\Q(u)_{SW}\cap R(u)_{SW}\neq \emptyset$. So, property P2 holds.



   \end{itemize}


So the lemma follows.
\end{proof}

    \begin{figure}[htbp]
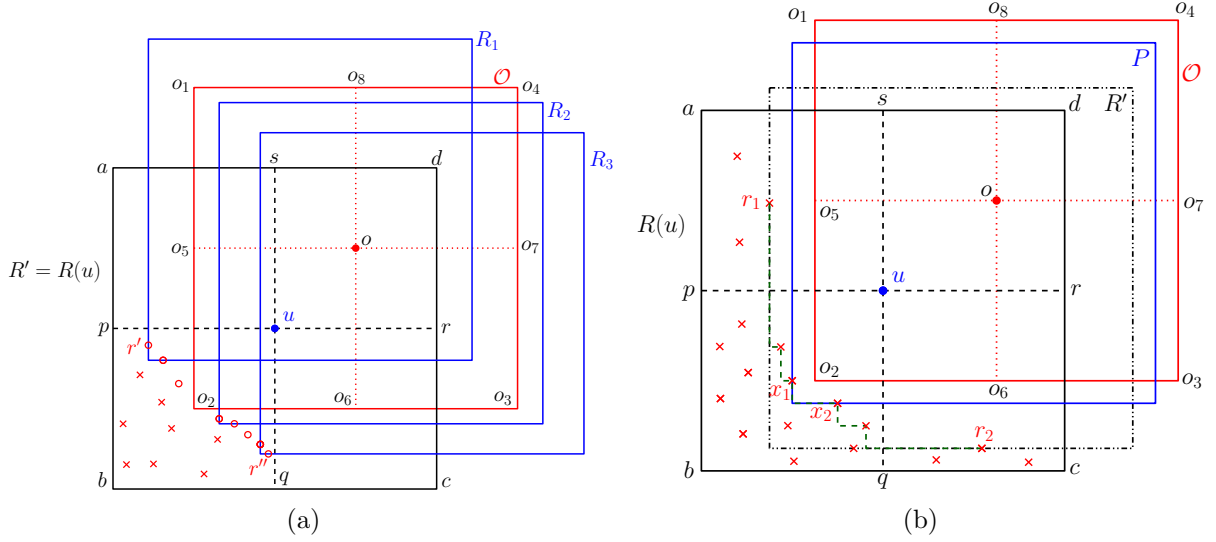

    \begin{subfigure}[b]{.49\textwidth}
        \centering
        \includegraphics[page=11, width=80mm]{OPT_Squares_Existance.pdf}
        \caption{}
        \label{fig:opt_square_case_5.3}
    \end{subfigure}
    \hfill
    \begin{subfigure}[b]{.49\textwidth}
        \centering
        \includegraphics[page=2, width=75mm]{Correctness_Lemma.pdf}
        \caption{}
        \label{fig:P in PS(u)_SW}
    \end{subfigure}
    \caption{Let $\O$ (red) be an optimum square containing a blue point $u$. Also, let $R(u)$ (black) be the square centered at the blue point $u$. (a) Case 5.3: Only $R_{SW}(u)$ contains some red points, and the points $r'$ and $r''$ are distinct such that exactly one of $r'$ and $r''$ lies to the left side of the left edge of $\O$ and other one lies below the bottom edge of $\O$. Here, $\Q(u)_{SW}\cap R(u)_{SW}$ (red/circle) is non-empty. Then, $R_1, R_2$ and $R_3$ (blue) are the candidate squares of $u$ that contains $\O_{NW}\cup \O_{SE}$, constructed by using $R'$. (b)  The staircase square $P$ (blue) in $\mathcal{PS}(u)_{SW}$ contains $\O_{SW}$.}
    \label{fig:opt_square_cases5}
    \end{figure}

Now, we prove some important properties of the Type 1 candidate square that will be needed to confirm the correctness of our algorithm as well as to obtain the desired competitive ratio of the algorithm.


    \begin{lemma}\label{lm:P in PS(u)_SW}
        Let $u$ be an input blue point in $\O_{SW}$ such that $\Q(u)_{SW}\cap R(u)_{SW}\neq \emptyset$. Then, the set
         $\mathcal{PS}(u)_{SW}$ is non-empty, and there exists a square $P\in\mathcal{PS}(u)_{SW}$ that contains $\O_{SW}$.
    \end{lemma}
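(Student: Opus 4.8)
The plan is to locate $\O_{SW}$ precisely inside $R(u)$ and then to single out the ``right'' staircase square as the last one whose left edge still lies weakly to the left of the left edge of $\O$.

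First I would record the elementary geometry. Since $R(u)$ and $\O$ are translated copies and $u\in\O_{SW}$, the centre $u$ of $R(u)$ lies in $\O_{SW}$, so the left (resp.\ bottom) edge of $R(u)$ lies weakly to the left of (resp.\ below) the left (resp.\ bottom) edge of $\O$, and at most $\frac12$ further; hence $\O_{SW}\subseteq R(u)$. Let $\ell$ be the $x$-coordinate of the left edge of $\O$ and $b$ the $y$-coordinate of its bottom edge, so $\O_{SW}=[\ell,\ell+\frac12]\times[b,b+\frac12]$. Then a unit square contains $\O_{SW}$ exactly when the $x$-coordinate of its left edge lies in $[\ell-\frac12,\ell]$ and the $y$-coordinate of its bottom edge lies in $[b-\frac12,b]$; moreover, if in addition the left edge has $x$-coordinate $\le\ell$ and the bottom edge has $y$-coordinate $\le b$ (and both are $\ge\ell-\frac12$, $\ge b-\frac12$), then its $NE$ sub-square is contained in $\O$, hence red point free because $\O\in\mathrm{OPT}(\I)$ is.

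Next I list the staircase points of $\Q(u)_{SW}$ in order as $t_0=r_1,\ t_1=q_0,\dots,\ t_{k+1}=q_k,\ t_{k+2}=r_2$; by the staircase property their $x$-coordinates are weakly increasing and their $y$-coordinates weakly decreasing, and for $0\le j\le k+1$ the square $P_j$ has its left edge through $t_j$ and its bottom edge through $t_{j+1}$. The key structural claim is: \emph{every} $t_i$ satisfies $t_i(x)\le\ell$ (call it \emph{Left}) or $t_i(y)\le b$ (call it \emph{Below}), while additionally $t_0=r_1$ is Left and $t_{k+2}=r_2$ is Below; and, since every $t_i$ lies in $R(u)$, also $t_i(x)\ge\ell-\frac12$ and $t_i(y)\ge b-\frac12$. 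For the $q_i$'s this is because $q_i\in R(u)_{SW}$ forces $q_i(x)\le o(x)<\ell+1$ and $q_i(y)\le o(y)<b+1$, so $q_i$ lies neither to the right of nor above $\O$, yet being a red point it cannot lie in $\interior(\O)$. For $r_1$ and $r_2$ one checks the (few) cases of their definition — $r_1$ a red point of $R(u)_{NW}$ nearest $\overline{us}$, or the $NW$ vertex $a$ of $R(u)$; symmetrically for $r_2$ — again using that $\O$ is red point free; in particular a red point of $R(u)_{NW}$ has $y$-coordinate $>b$, so red-point-freeness of $\O$ forces it to be Left, and $r_1=a$ is trivially Left, with the symmetric statements making $r_2$ Below.

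Finally, since the $x$-coordinates are weakly increasing, the Left points form a prefix $t_0,\dots,t_a$, where $0\le a\le k+1$ (we have $0\in$\,Left and $k+2\notin$\,Left by the claim); by maximality $t_{a+1}$ is not Left, hence Below, so $t_{a+1}(y)\le b$. Consequently the staircase square $P_a$ has left edge $x$-coordinate $t_a(x)\in[\ell-\frac12,\ell]$ and bottom edge $y$-coordinate $t_{a+1}(y)\in[b-\frac12,b]$, so $P_a$ contains $\O_{SW}$; and by the second paragraph $P_{a(NE)}\subseteq\O$ is red point free, so $P_a\in T$ and therefore $P_a\in\mathcal{PS}(u)_{SW}$, which in particular is non-empty. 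The main obstacle will be the structural claim of the third paragraph — verifying that $r_1$ and $r_2$, including the corner cases $r_1=a$, $r_2=c$ and points lying on an edge of $\O$, are Left, respectively Below — since this is precisely where feasibility ($\P_r$-emptiness) of $\O$, and the hypothesis $\Q(u)_{SW}\cap R(u)_{SW}\neq\emptyset$ ensuring the staircase is genuinely present, are used.
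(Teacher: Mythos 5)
Your proof is correct and takes essentially the same route as the paper's: both arguments find a consecutive pair of staircase points in $(r_1,q_0,\dots,q_k,r_2)$ whose first member lies weakly to the left of the left edge of $\O$ and whose second lies weakly below its bottom edge, and then take the staircase square they determine, whose $NE$ sub-square sits inside $\O$ and is hence red point free, so it belongs to $\mathcal{PS}(u)_{SW}$ and contains $\O_{SW}$. The only cosmetic remark is that ``$t_{k+2}\notin\text{Left}$'' is not literally part of your stated structural claim; it does follow at once since $r_2\in R(u)_{SE}$ (or $r_2=c$) has $x$-coordinate at least $u(x)>\ell$, and even without it the pair $(t_{k+1},t_{k+2})$ would serve.
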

    \begin{proof}
       Note that, as $u$ lies in $\O_{SW}$, the sub-square $R(u)_{NE}$ does not contain any red point. Also, since $\Q(u)_{SW}\cap R(u)_{SW}\neq \emptyset$, we have that the square $R(u)_{SW}$ contains some red points. So, as defined in Section~\ref{sec:possible_candidate_square}, consider the sequence of $SW$ staircase points $\Q(u)_{SW}=(r_1,q_0, q_1, \dots, q_k,r_2)$, where $r_1$ and $r_2$ are the initial and the terminal point of the staircase $ST_{SW}(\Q(u)_{SW})$, respectively, and for each $i\in\{0\} \cup [k]$, the point $q_i\in R(u)_{SW}\cap \P_r$. Recall that $r_1\in R(u)_{NW}$ and $r_2\in R(u)_{SE}$.

        See Figure~\ref{fig:P in PS(u)_SW}. {Since $r_1\in R(u)_{NW}$ lies on the left side of the left edge of $\O$, and the point $r_2\in R(u)_{SE}$ lies below the bottom edge of $\O$, and all the points $q_0,q_1,\dots, q_k$ lie in $R(u)_{SW}$, we have two consecutive points, say $x_1$ and $x_2$, among $r_1, q_0, \dots, q_k, r_2$ such that $x_1$ lies on the left side of the left edge of $\O$, and $x_2$ lies below the bottom edge of $\O$.} So, consider the square $P$ whose left edge contains $x_1$ and the bottom edge contains $x_2$. 
       Since $P_{NE}$ lies inside $\O$, it contains no red point. As a result, $P\in \mathcal{PS}(u)_{SW}$ and $\mathcal{PS}(u)_{SW} \neq \emptyset$. Also, note that $P$ contains $\O_{SW}$. Hence, the lemma follows.
    \end{proof}

    \begin{lemma}\label{lm:P_1 contains O_NW}
        Let $u$ be an input blue point in $\O_{SW}$ such that $\Q(u)_{SW}\cap R(u)_{SW}\neq \emptyset$. Also, let $\mathcal{PS}(u)_{SW}=(P_{1}, P_{2}, \dots, P_l)$ be the sequence of $SW$ staircase squares of $u$ (as defined in Section~\ref{sec:possible_candidate_square}). Then, there exists some  $i\in[l]$ such that we have the following.
        \begin{enumerate}
        \item[(i)] Each of the staircase squares $P_1,\dots,P_{i-1}$ contains either $\O_{NW}$ or $\O_{SW}$.
        \item[(ii)]Each of the staircase squares $P_{i+1},\dots,P_l$ contains either $\O_{SW}$ or $\O_{SE}$.
    \end{enumerate}
    \end{lemma}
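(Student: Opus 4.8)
The plan is to track how the staircase squares $P_j$ slide from the upper-left toward the lower-right corner of $\O$ as the index $j$ increases, and to read off from the position of each $P_j$ which sub-square of $\O$ it must cover. Fix coordinates so that $\O$ has its left edge on the line $x=\ell$ and its bottom edge on $y=b$; since $u\in\O_{SW}$ this gives $\ell\le u(x)\le\ell+\tfrac{1}{2}$ and $b\le u(y)\le b+\tfrac{1}{2}$. Write $\Q(u)_{SW}=(w_0,w_1,\dots,w_{k+2})$ with $w_0=r_1$, $w_i=q_{i-1}$ for $1\le i\le k+1$, and $w_{k+2}=r_2$, so that for each $j\in\{0\}\cup[k+1]$ the square $P_j$ has its left edge through $w_j$ and its bottom edge through $w_{j+1}$. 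Because $r_1\in R(u)_{NW}$ (or is its $NW$ corner) and each $q_i\in R(u)_{SW}$, every left-edge point $w_j$ ($j\le k+1$) satisfies $w_j(x)\in[u(x)-\tfrac{1}{2},u(x)]\subseteq[\ell-\tfrac{1}{2},\ell+\tfrac{1}{2}]$, and every bottom-edge point $w_j$ ($j\ge1$) satisfies $w_j(y)\in[u(y)-\tfrac{1}{2},u(y)]\subseteq[b-\tfrac{1}{2},b+\tfrac{1}{2}]$; these confinements will supply the ``not too far away'' direction in the coverage checks below.

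Next I would classify the staircase points relative to $\O$: call $w_j$ \emph{left} if $w_j(x)\le\ell$ and \emph{below} if $w_j(y)\le b$. Each $w_j$ is a red point (or a corner of $R(u)$ lying outside $\O$) and hence avoids $\interior(\O)$; since $\O$ is $\P_r$-empty, a short case analysis using the confinements above (in particular $w_j(x)<\ell+1$ for left-edge points and $w_j(y)<b+1$ in general) shows that every $w_j$ is left or below, with $w_0=r_1$ left and $w_{k+2}=r_2$ below. Because the staircase is monotone --- the $x$-coordinates are nondecreasing and the $y$-coordinates nonincreasing along $w_0,\dots,w_{k+2}$ --- the left points form a prefix $w_0,\dots,w_s$ of the sequence and the below points form a suffix.

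I would then compute, for each $j\in\{0\}\cup[k+1]$, which sub-square the square $P_j=[w_j(x),w_j(x)+1]\times[w_{j+1}(y),w_{j+1}(y)+1]$ covers, using the confinements on the edges of $P_j$: (a) if $w_j$ is left and $w_{j+1}$ is below, then $P_j\supseteq\O_{SW}$; (b) if $w_j$ is left but $w_{j+1}$ is not below (so $b\le w_{j+1}(y)\le u(y)\le b+\tfrac{1}{2}$), then $P_j\supseteq\O_{NW}$; (c) if $w_j$ is not left, then $j\ge1$, so $w_j=q_{j-1}\in R(u)_{SW}$ gives $w_j(y)\le u(y)\le b+\tfrac{1}{2}<b+1$, which together with $w_j(x)\in(\ell,\ell+1)$ and $w_j\notin\interior(\O)$ forces $w_j(y)\le b$, hence $w_{j+1}(y)\le b$ and $P_j\supseteq\O_{SE}$. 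Consequently $P_j$ contains $\O_{NW}$ or $\O_{SW}$ whenever $j\le s$, and $P_j$ contains $\O_{SE}$ whenever $j>s$.

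Finally I would pass to the subsequence $\mathcal{PS}(u)_{SW}=(P_{j_1},\dots,P_{j_l})$ with $j_1<\dots<j_l$, where $l\ge1$ by Lemma~\ref{lm:P in PS(u)_SW}. If some $j_r\le s$, let $i$ be the largest such $r$: then $P_{j_1},\dots,P_{j_{i-1}}$ all have index at most $s$ and so contain $\O_{NW}$ or $\O_{SW}$, while $P_{j_{i+1}},\dots,P_{j_l}$ all have index exceeding $s$ and so contain $\O_{SE}$ (in particular $\O_{SW}$ or $\O_{SE}$); this is exactly (i) and (ii). If no $j_r\le s$, take $i=1$: part (i) is vacuous and every remaining square has index exceeding $s$, giving (ii). The one point requiring care throughout is the treatment of staircase points lying exactly on an edge of $\O$, and of the degenerate cases in which $r_1$ or $r_2$ is a corner of $R(u)$ rather than a red point; I would dispatch these by the same boundary conventions already used for Lemma~\ref{lm:P in PS(u)_SW} and the claims preceding it, so the real content is the prefix/suffix structure of the $w_j$'s and the three coverage computations (a)--(c).
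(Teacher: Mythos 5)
Your proof is correct, but it gets to the split index by a different mechanism than the paper. The paper first invokes Lemma~\ref{lm:P in PS(u)_SW} to obtain a staircase square $P_i$ that contains $\O_{SW}$, takes that $P_i$ as the split, and then, for $j<i$ (and symmetrically $j>i$), compares the edges of $P_j$ with those of $R(u)$, $\O$ and $P_i$ to conclude the $\O_{NW}$-or-$\O_{SW}$ (resp.\ $\O_{SW}$-or-$\O_{SE}$) dichotomy. You instead classify the staircase points themselves relative to $\O$: each point of $\Q(u)_{SW}$ lies on or left of the left edge of $\O$, or on or below its bottom edge (using $\P_r$-emptiness of $\O$ together with the confinement of these points to $R(u)_{NW}\cup R(u)_{SW}\cup R(u)_{SE}$), the monotonicity of the staircase makes the ``left'' points a prefix and the ``below'' points a suffix, and the per-square coverage computations (a)--(c) then give the lemma with $i$ chosen as the last square of $\mathcal{PS}(u)_{SW}$ whose left-edge point is ``left''. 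What each buys: the paper's route is shorter because it reuses the full strength of Lemma~\ref{lm:P in PS(u)_SW}, and its split square itself covers $\O_{SW}$ (a fact exploited later, e.g.\ in Lemma~\ref{lema:5log(m)}); your route only needs that lemma for nonemptiness, is more self-contained (your point classification is essentially the observation hidden inside the paper's proof of Lemma~\ref{lm:P in PS(u)_SW} about two consecutive points, one left and one below), and yields a slightly stronger conclusion, namely that every square after the split contains $\O_{SE}$ outright. The only soft spot is the boundary ties you flag yourself (e.g.\ $r_1$ lying exactly on the horizontal line through the top edge of $\O$, where it is neither ``left'' nor ``below'' and case (c) as written does not apply to $P_0$ because $r_1\notin R(u)_{SW}$); this is patched in one line by noting that then $w_1$ must be ``below'', and the paper is equally cavalier about such ties, so it is a matter of convention rather than a gap.
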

    \begin{proof}
        Due to Lemma~\ref{lm:P in PS(u)_SW}, we have that $\mathcal{PS}(u)_{SW}\neq \emptyset$, and there exists a square $P\in\mathcal{PS}(u)_{SW}$ that contains $\O_{SW}$. Assume that $P_i=P$ for some $i\in[l]$.

        (i) Let $j$ be an index in $\{1,\dots,i-1\}$. Note that the $SW$ vertex of $P_j$ lies in $R(u)_{SW}$ (see Figure~\ref{fig:P_1 contains O_NW}). As a result, the left edge of $P_j$ lies to the right side of the left edge of $R(u)$. Again, since $P_i$ contains $\O_{SW}$, the left edge of $P_j$ lies to the left side of the left edge of $\O$. Therefore, the right edge of $P_j$ lies between the right edge of $R(u)$ and the right edge of $\O$. Similarly, the bottom edge of $P_j$ lies between the line segment $\overline{pr}$ and the bottom edge of $P_i$. Since $u\in \O_{SW}$ and the square $P_i$ contains $\O_{SW}$, the bottom edge of $\O$ also lies between the line segment $\overline{pr}$ and the bottom edge of $P_i$. Now, if the bottom edge of $P_j$ lies above the bottom edge of $\O$, then $P_j$ contains $\O_{NW}$; otherwise, $P_j$ contains $\O_{SW}$.
        

        (ii) We can similarly prove this as (i).
    \end{proof}

    Using Lemma~\ref{lm:P in PS(u)_SW} and Lemma~\ref{lm:P_1 contains O_NW}, we have the following.

    \begin{figure}[htbp]
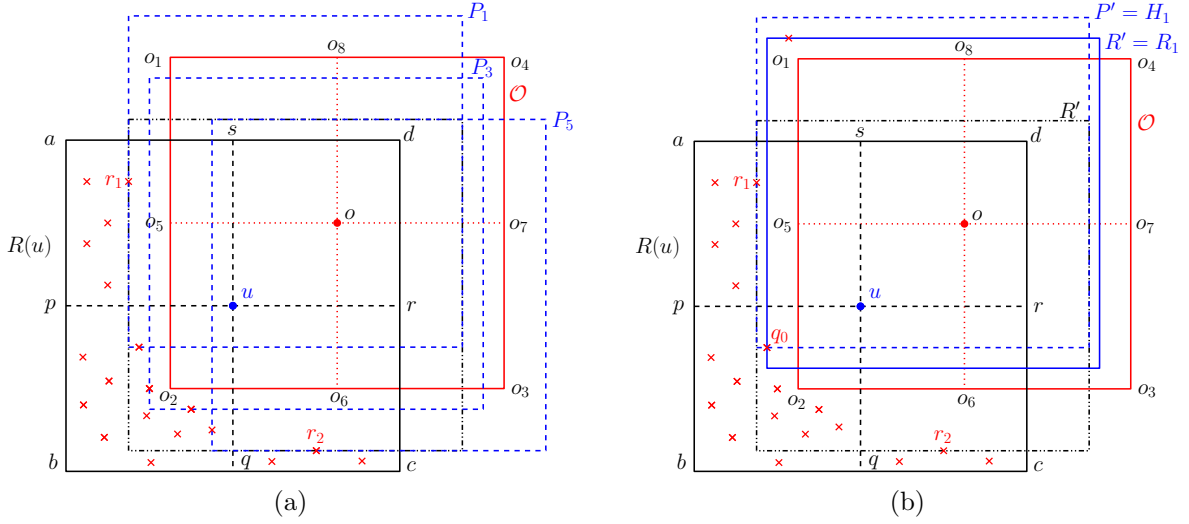

    \centering
    \begin{subfigure}[b]{.49\textwidth}
        \centering
        \includegraphics[page=3, width=75mm]{Correctness_Lemma.pdf}
        \caption{}
        \label{fig:P_1 contains O_NW}
    \end{subfigure}
    \hfill
    \begin{subfigure}[b]{.49\textwidth}
        \centering
        \includegraphics[page=4, width=72mm]{Correctness_Lemma.pdf}
        \caption{}
        \label{fig:type 1 existance}
    \end{subfigure}
    \caption{Let $\O$ (red) be an optimum square containing a blue point $u$. Also, let $R(u)$ (black) be the square centered at the blue point $u$. (a) The staircase squares $P_1, P_3$ and $P_5$ of $u$ contains $\O_{NW}, \O_{SW}$ and $\O_{SE}$, respectively. (b) The square $H_1$ (blue/dashed) is a staircase square of $u$ that contains $\O_{NW}$, and $R_1$ (blue) is a candidate square of $u$, obtained by moving $H_1$, also contains $\O_{NW}$.}
    \label{fig:Correctness_Lemma_2}
    \end{figure}

    \begin{lemma}\label{lm:type 1 existance}
        Let $u$ be an input blue point in $\O_{SW}$ such that $\Q(u)_{SW}\cap R(u)_{SW}\neq \emptyset$. Also, let $R'$ be a Type~1 candidate square of $u$, constructed by using a staircase square $P'\in \mathcal{PS}(u)_{SW}$ (as defined in Section~\ref{sec:type1}). Then, we have the following.
        \begin{enumerate}
        \item[(i)] If the staircase square $P'$ contains $\O_{NW}$, then $R'\neq \emptyset$, and it contains either $\O_{NW}$ or $\O_{SW}$. Also, the bottom edge of $R'$ does not lie above the bottom edge of $P'$.
        \item[(ii)] If the staircase square $P'$ contains $\O_{SE}$, then $R'\neq \emptyset$, and it contains either $\O_{SE}$ or $\O_{SW}$. Also, the left edge of $R'$ does not lie on the right side of the left edge of~$P'$.
        \item[(iii)] If the staircase square $P'$ contains $\O_{SW}$, then $R'\neq \emptyset$, and it contains $\O_{SW}$.
    \end{enumerate}
    \end{lemma}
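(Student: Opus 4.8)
The plan is to run through the case analysis of Section~\ref{sec:type1} by which the Type~1 candidate square $R'$ is built from $P'$, checking the asserted properties branch by branch. Recall that for a staircase square $P'\in\mathcal{PS}(u)_{SW}$ the sub-squares $P'_{SW}$ and $P'_{NE}$ are red-point free (Observation~\ref{obs:Pj_SW} together with the definition of $\mathcal{PS}(u)_{SW}$), so $R'=P'$ when $P'_{NW}\cup P'_{SE}$ carries no red point, $R'=\UR(u,P')$ when both $P'_{NW}$ and $P'_{SE}$ carry red points, and $R'=\STR(u,P')$ when exactly one of them does. Parts~(i) and~(ii) are related by reflecting the whole configuration (the point $u$, the red points, $\O$, and $P'$) in the line of slope~$1$ through $u$: this interchanges the labels $NW\leftrightarrow SE$, fixes $SW$ and $NE$, exchanges ``rightward'' with ``upward'' moves, sends $\UR$ to its reflected analogue, and carries the staircase construction to the same construction performed on the reflected data; since that construction is symmetric under the reflection, (ii) follows from (i), so it suffices to prove (i) and (iii).

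The backbone is two positional observations. First, a \emph{position fact}: since $P'$ and $\O$ are translates of a common unit square, the hypothesis that $P'$ contains a prescribed quadrant of $\O$ confines the left edge of $P'$ to a vertical slab of width $\tfrac12$ adjoining the left edge of $\O$, and the bottom edge of $P'$ to a horizontal slab of height $\tfrac12$ adjoining the bottom edge of $\O$. Second, a \emph{staircase fact}: the $SW$ vertex of any $P'\in\mathcal{PS}(u)_{SW}$ lies in $R(u)_{SW}$, whence $\mathrm{right}(P')\ge\mathrm{right}(R(u))$ and $\mathrm{top}(P')\ge\mathrm{top}(R(u))$, so $P'\supseteq R(u)_{NE}$. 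Combining these in case~(i): ``$P'$ contains $\O_{NW}$'' forces $P'_{SE}\subseteq\O$, hence $P'_{SE}$ is red-point free, so the branch $R'=\UR(u,P')$ is impossible --- either $R'=P'$ (then $R'\supseteq\O_{NW}$ and its bottom edge equals that of $P'$, so both conclusions hold at once) or only $P'_{NW}$ has red points and $R'=\STR(u,P')$. In case~(iii), combining ``$P'$ contains $\O_{SW}$'' with the staircase fact gives $\O\cap R(u)\subseteq P'$, which is exactly the standing set-up required by Lemma~\ref{lm:type 2 existance}.

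For the sub-routines I would lean on Lemma~\ref{lm:type 2 existance}: whenever the construction is about to invoke $\UR$ on a current square $H$, I argue that $H$ still satisfies $\O\cap R(u)\subseteq H$ and that $R(u)\cap H$ is red-point free, and then conclude $\UR(u,H)\ne\emptyset$ with output containing $\O_{SW}$. The red-point-freeness of $R(u)\cap H$ is read off from how $H$ was reached: in case~(iii) with a direct $\UR$ call, $H=P'$, and a red point of $R(u)$ inside $P'$ would have to lie in $P'_{SW}$ (impossible for a staircase square) or in the portion of $P'_{NW}\cup P'_{SE}$ still inside $R(u)$, which the ``nearest red point''/staircase definitions exclude; when $\UR$ is invoked inside $\STR$, $H$ is the square after the staircase slide, and sliding along $ST_{SW}(\Q(u)_{SW})\subseteq R(u)$ only trims red points out of $R(u)\cap H$. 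For the containments, the two position facts show that every move keeps the left edge weakly to the left of $\mathrm{left}(\O)$ until it reaches it and the bottom edge weakly below $\mathrm{bot}(\O)$ until it reaches it (because $\O$ is red-point free, a rightward move cannot push the left edge past $\mathrm{left}(\O)$, and an upward move cannot push the bottom edge past $\mathrm{bot}(\O)$), so $\O_{SW}$ stays covered throughout; this already settles case~(iii). In case~(i), after the staircase slide of $\STR$ the square has moved right and down, so its bottom edge only dropped; if $P'_{SE}$ then carries red points they lie below the original $\mathrm{bot}(P')$ (which was at or below the horizontal bisector of $\O$), so the subsequent $\UR$ raises the bottom edge back up to at most $\mathrm{bot}(P')$, giving the ``bottom edge of $R'$ not above that of $P'$'' clause; $R'$ then contains $\O_{NW}$ or $\O_{SW}$ according to whether its final bottom edge lies below or above $\mathrm{bot}(\O)$.

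The step I expect to be the main obstacle is the $\STR$ branch, intertwined with the red-point-freeness checks for the $\UR$ calls: one must show the staircase slide of $P'$ terminates with $u$ still inside the square and with the upper half (resp.\ the relevant half) of the square swept clean of red points --- which needs a precise localization of where the red points originally in $P'_{NW}$ can sit relative to $\O$ and to $R(u)$ --- and then that the follow-up $\UR$ does not overshoot, i.e.\ that every red point which can enter the bottom-right quadrant during the slide has $y$-coordinate at most $\mathrm{bot}(P')$. Once these localization facts are in hand, the rest is a routine combination of the position fact, the staircase fact, and Lemma~\ref{lm:type 2 existance}.
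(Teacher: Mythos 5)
Your plan reproduces the paper's framework --- branch on how $R'$ is produced ($R'=P'$, a \UR{} call, or an \STR{} call), reduce (ii) to (i) by symmetry, feed the \UR{} calls into Lemma~\ref{lm:type 2 existance}, and track the left/bottom edges of the moving square against those of $\O$ --- and several of your observations are correct and match the paper (e.g.\ in case~(i) the hypothesis $P'\supseteq\O_{NW}$ forces $P'_{SE}\subseteq\O$, so the ``both quadrants dirty'' branch cannot occur; in case~(iii) one checks $\O\cap R(u)\subseteq P'$ and that $R(u)\cap P'$ is red-point free). But you explicitly defer the \STR{} branch as the ``main obstacle'', and that branch is precisely where the content of the lemma lies, so the proposal has a genuine gap. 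Worse, the justification you do give for edge control --- ``because $\O$ is red-point free, a rightward move cannot push the left edge past the left edge of $\O$'' --- is not valid as stated for the staircase slide in case~(i): there the bottom edge of $P'$ may lie up to $\tfrac12$ above the bottom edge of $\O$, so when the sliding square's left edge is at the left edge of $\O$ its upper half can protrude up to $\tfrac12$ above the top edge of $\O$, a region that may well contain red points (they lie outside $R(u)$, so nothing in the staircase construction excludes them). Hence red-point-freeness of $\O$ alone does not stop the slide at that position, and without more, the slide could a priori carry the left edge past the left edge of $\O$ or even expel $u$, destroying both the nonemptiness and the containment claims.

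The missing localization fact, which is what the paper's proof supplies, is this: any point of the staircase $ST_{SW}(\Q(u)_{SW})$ whose $x$-coordinate exceeds that of the left edge of $\O$ has $y$-coordinate at most that of the bottom edge of $\O$ (its defining staircase points are red points of $R(u)$ lying within the horizontal extent of $\O$, and $\O$ is red-point free). Consequently, by the time the $SW$ vertex of the sliding square reaches the vertical line through the left edge of $\O$, the square's bottom edge has already dropped to or below the bottom edge of $\O$, so at that moment $H_{NW}\cup H_{NE}\subseteq\O$ is red-point free and \STR{} must have stopped, with the left edge never right of the left edge of $\O$ and with $u$ still interior. This is what guarantees termination of the slide, localizes any red points then present in $H_{SE}$ to $y$-coordinates at most the bottom edge of $\O$ (so the follow-up \UR{} raises the bottom edge at most to the bottom edge of $\O\le$ bottom edge of $P'$, giving the bottom-edge clause and the ``$\O_{NW}$ or $\O_{SW}$'' containment), and is also what makes your Lemma~\ref{lm:type 2 existance}-style analysis of the embedded \UR{} call go through. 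A secondary slip: in case~(iii) you argue via a direct \UR{} call with $H=P'$, but when only one of $P'_{NW},P'_{SE}$ contains red points the construction invokes \STR{} there too; that subcase is easier (the bottom edge of $P'$ is already at or below that of $\O$), but it still needs the same staircase localization rather than the bare appeal to $\O$ being red-point free.
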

    \begin{proof}
         Due to Lemma~\ref{lm:P in PS(u)_SW}, we have that $\mathcal{PS}(u)_{SW}$ is non-empty.
         Assume $\mathcal{PS}(u)_{SW}=(P_1,\dots,P_{l})$, for some integer $l$ (as defined in Section~\ref{sec:possible_candidate_square}). Now, as defined in Section~\ref{sec:type1}, consider the three staircase squares $H_1=P_1, H_2=P_{\lceil\frac{l}{2}\rceil}$ and $H_3=P_{l}$. So, due to Lemma~\ref{lm:P_1 contains O_NW}, we have that the staircase square $H_j$, for $j\in [3]$, contains either $\O_{NW}$, $\O_{SW}$ or $\O_{SE}$. Also, recall that $R_j$ is a Type~1 candidate square of $u$, constructed by using the staircase square $H_j$ for each $j\in [3]$ (see Section~\ref{sec:type1}). 

         (i) Suppose that $P'=H_j$ for some $j\in [3]$ and it contains $\O_{NW}$ (see Figure~\ref{fig:type 1 existance}). Then, $R'=R_j$ is the candidate square of $u$ constructed from the staircase square $P'=H_j$. Now, if $H_j$ contains no red point, then $R_j=H_j\in\S(u)$; in that case, we have nothing to prove. Otherwise, $H_j$ contains some red points. Now, we first argue the existence of $R_j$. Due to Observation~\ref{obs:Pj_SW} and the fact that $H_j$ contains $\O_{NW}$, we have that only the sub-square $H_{j(NW)}$ of $H_j$ contains some red points. In that case, we evoke \STR$(u, H_j)$ (see Section~\ref{sec:type1}). Here, we move it rightwards along the staircase $ST_{SW}(\Q(u)_{SW})$ until $H_{j(NW)}\cup H_{j(NE)}$ becomes red point free or $H_j$ does not contain~$u$. After that, if $H_{j(SE)}$ contains some red points and $u\in H_j$, then we evoke \UR$(u, H_j)$. Hence, before the left edge of $H_j$ touches the left edge of $\O$, either $H_j$ becomes red point free or $H_{j(SE)}$ contains some red points (note that in both cases, the point $u\in H_j$). In the former case, we set $R_j=H_j$, and in the latter case, the existence of the candidate square $R_j$ of $u$ can be shown similarly as Lemma~\ref{lm:type 2 existance}. Therefore, $R_j\neq \emptyset$.

         Now, we argue that $R_j$ contains either $\O_{NW}$ or $\O_{SW}$ (see Figure~\ref{fig:type 1 existance}). Recall that in this case, during the movement of $H_j$, the left edge of $H_j$ always lies between the left edge of $\O$ and the left edge of $R(u)$. Therefore, the left edge of $H_j$ never lies on the right side of the left edge of $\O$. As a result, $H_j$, a translated copy of $\O$, always contains either $\O_{NW}$ or $\O_{SW}$. Thus, $R_j$ contains either $\O_{NW}$ or $\O_{SW}$.

         Next, we argue that the bottom edge of $R_j$ does not lie above the bottom edge of $P'$ (see Figure~\ref{fig:type 1 existance}). During the movement of $H_j$, the square $H_j$, first, is moved in the rightward direction along the staircase, and when it is moved in the upward direction, the bottom edge of $H_j$ already lies below the bottom edge of $\O$. So, it is moved upwards until the bottom edge of $H_j$ touches the bottom edge of $\O$. Since the bottom edge of the staircase square $P'$ lies above the bottom edge of $\O$, the bottom edge of $R_j$ does not lie above the bottom edge of $P'$.

        (ii) We can similarly prove this as (i).
        
        (iii) We can similarly prove this as (i).
    \end{proof}

In the next lemma, we will prove that $\S(u)\neq\emptyset$, which guarantees that the Algorithm~\ref{alg:occp} indeed outputs a valid solution.

    \begin{lemma}\label{lm:set of candidate squares is non-empty}
         Let $u$ be an input blue point in $\O_{SW}$. Then, at least one non-empty candidate square of $u$ exists, i.e., $\S(u)\neq\emptyset$.
    \end{lemma}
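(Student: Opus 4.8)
The plan is to derive $\S(u) \neq \emptyset$ directly from Lemma~\ref{lema:candidate_squares}, which says that for any input blue point $u \in \O_{SW}$ one of two properties holds: either P1, that a non-empty Type~2 candidate square of $u$ exists (and contains $\O_{SW}$), or P2, that $\Q(u)_{SW} \cap R(u)_{SW} \neq \emptyset$. The argument is simply a two-case split on which of these holds.

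\begin{proof}
Since $u$ is an input blue point lying in $\O_{SW}$, we may apply Lemma~\ref{lema:candidate_squares}. Hence one of the properties P1 or P2 holds.

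If P1 holds, then there exists a non-empty Type~2 candidate square of $u$; in particular $\S(u)$ contains this square, so $\S(u) \neq \emptyset$.

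If P2 holds, then $\Q(u)_{SW} \cap R(u)_{SW} \neq \emptyset$. By Lemma~\ref{lm:P in PS(u)_SW}, the set $\mathcal{PS}(u)_{SW}$ of $SW$ staircase squares of $u$ is then non-empty, and there is a staircase square $P \in \mathcal{PS}(u)_{SW}$ that contains $\O_{SW}$. Write $\mathcal{PS}(u)_{SW} = (P_1, \dots, P_l)$ and, as in Section~\ref{sec:type1}, let $H_1 = P_1$, $H_2 = P_{\lceil l/2 \rceil}$, $H_3 = P_l$ be the three staircase squares from which the Type~1 candidate squares $R_1, R_2, R_3$ are constructed. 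By Lemma~\ref{lm:P_1 contains O_NW}, there is an index $i \in [l]$ such that $P_1, \dots, P_{i-1}$ each contain $\O_{NW}$ or $\O_{SW}$, while $P_{i+1}, \dots, P_l$ each contain $\O_{SW}$ or $\O_{SE}$, and $P_i$ contains $\O_{SW}$; in every case each of $H_1, H_2, H_3$ contains one of $\O_{NW}$, $\O_{SW}$, $\O_{SE}$. Fix any $j \in [3]$; then applying the appropriate part (i), (ii) or (iii) of Lemma~\ref{lm:type 1 existance} to $R' = R_j$ (constructed from $P' = H_j$) yields $R_j \neq \emptyset$. Hence $\S(u) \neq \emptyset$.

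In both cases $\S(u) \neq \emptyset$, which completes the proof.
\end{proof}

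The only subtlety is bookkeeping: making sure the orientation conventions match, i.e.\ that $u \in \O_{SW}$ forces $R(u)_{NE}$ to be red-point free (as used throughout the preceding lemmas), and that the ``W.l.o.g.'' reductions in Section~\ref{sec:type1} and Section~\ref{sec:all_candidate_square} are consistent with the $SW$ staircase picture invoked here; no new geometric work is required beyond citing the three lemmas above.
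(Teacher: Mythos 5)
Your proof is correct and follows essentially the same route as the paper: invoke Lemma~\ref{lema:candidate_squares} to split into properties P1 and P2, finish immediately in the P1 case, and in the P2 case obtain a non-empty Type~1 candidate square via Lemma~\ref{lm:type 1 existance} (the paper cites only that lemma, leaving implicit the intermediate facts from Lemma~\ref{lm:P in PS(u)_SW} and Lemma~\ref{lm:P_1 contains O_NW} that you spell out). No substantive difference; your version is just slightly more explicit about why Lemma~\ref{lm:type 1 existance} applies.
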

    \begin{proof}
        Due to Lemma~\ref{lema:candidate_squares}, we have that either at least one non-empty Type~2 candidate square of $u$ exists or $\Q(u)_{SW}\cap R(u)_{SW}\neq \emptyset$. For the former case, we are done, i.e., $\S(u)\neq\emptyset$. In the latter case, the existence of a non-empty candidate square of $u$ follows from Lemma~\ref{lm:type 1 existance}.
    \end{proof}

\subsection{Competitive Analysis}\label{sec:comp_analy}

In this section, we analyze our algorithm to obtain its competitive ratio. First, we consider the following lemma.

    \begin{lemma}\label{lm: type 1 either or}
        Let $u$ be an input blue point in $\O_{SW}$ such that $\Q(u)_{SW}\cap R(u)_{SW}\neq \emptyset$. Then, either at least one Type~1 candidate square of $u$ contains $\O_{SW}$ or the union of the Type~1 candidate squares of $u$  contains $\O_{NW} \cup \O_{SE}$.
    \end{lemma}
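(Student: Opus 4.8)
I assume the hypotheses of Lemma~\ref{lm: type 1 either or}: $u\in\O_{SW}$ and $\Q(u)_{SW}\cap R(u)_{SW}\neq\emptyset$. By Lemma~\ref{lm:P in PS(u)_SW} the sequence $\mathcal{PS}(u)_{SW}=(P_1,\dots,P_l)$ is non-empty and some $P_i$ in it contains $\O_{SW}$; fix such an index $i$ as in Lemma~\ref{lm:P_1 contains O_NW}. Recall the three staircase squares used to build the Type~1 candidates: $H_1=P_1$, $H_2=P_{\lceil l/2\rceil}$, $H_3=P_l$, with $R_j$ the Type~1 candidate built from $H_j$.

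**Plan.** The proof splits on where the chosen index $i$ sits relative to $\lceil l/2\rceil$. The idea is that the ``middle'' staircase square $H_2=P_{\lceil l/2\rceil}$ is the pivot: if $H_2$ itself contains $\O_{SW}$, then by Lemma~\ref{lm:type 1 existance}(iii) the candidate $R_2$ contains $\O_{SW}$ and we are in the first alternative, done. Otherwise, by Lemma~\ref{lm:P_1 contains O_NW} applied to the index pair $(\lceil l/2\rceil, i)$ (or by the monotone structure of the staircase), $H_2$ contains exactly one of $\O_{NW}$ or $\O_{SE}$, which pins down whether $i>\lceil l/2\rceil$ or $i<\lceil l/2\rceil$. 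Say $H_2$ contains $\O_{NW}$ (the $\O_{SE}$ case is symmetric): then $i>\lceil l/2\rceil$, so $H_1=P_1$ also has index $<i$ and, being even earlier on the staircase, contains $\O_{NW}$ as well, while $H_3=P_l$ has index $\ge i$ and contains $\O_{SW}$ or $\O_{SE}$ by Lemma~\ref{lm:P_1 contains O_NW}(ii). If $H_3$ contains $\O_{SW}$ we again land in the first alternative via Lemma~\ref{lm:type 1 existance}(iii). So the remaining (generic) case is: $H_2$ contains $\O_{NW}$ and $H_3$ contains $\O_{SE}$. Now apply Lemma~\ref{lm:type 1 existance}(i) to $H_2$: $R_2\neq\emptyset$ and contains $\O_{NW}$ or $\O_{SW}$; if it contains $\O_{SW}$ we are done, so assume $R_2$ contains $\O_{NW}$. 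Apply Lemma~\ref{lm:type 1 existance}(ii) to $H_3$: $R_3$ contains $\O_{SE}$ or $\O_{SW}$; assume it contains $\O_{SE}$. It remains to show $R_2\cup R_3 \supseteq \O_{NW}\cup\O_{SE}$, i.e.\ that together they also cover $\O_{SW}$; but $R_2\supseteq\O_{NW}$ means $R_2$ covers the top half of $\O$ including the top half of $\O_{SW}$, and $R_3\supseteq\O_{SE}$ means $R_3$ covers the right half of $\O$ including the right half of $\O_{SW}$ — these two halves overlap to cover all of $\O_{SW}$? No: top-half $\cup$ right-half of a square is the whole square minus the open bottom-left quarter. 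So $R_2\cup R_3$ need not cover $\O_{SW}$, and indeed the lemma only claims coverage of $\O_{NW}\cup\O_{SE}$ in that case — which is immediate: $R_2\supseteq\O_{NW}$ and $R_3\supseteq\O_{SE}$.

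**Executing the cases carefully.** I would organize the write-up as: (1) If any of $R_1,R_2,R_3$ contains $\O_{SW}$, the first alternative holds; so assume none does. (2) Use Lemma~\ref{lm:type 1 existance}(iii) contrapositively: then none of $H_1,H_2,H_3$ contains $\O_{SW}$, so by Lemma~\ref{lm:P_1 contains O_NW} each $H_j$ contains exactly one of $\O_{NW},\O_{SE}$, and along the staircase the type can only switch from ``$\O_{NW}$'' to ``$\O_{SE}$'' as the index increases (this monotonicity is exactly the content of Lemma~\ref{lm:P_1 contains O_NW}: before $P_i$ it's $\O_{NW}$ or $\O_{SW}$, after it's $\O_{SW}$ or $\O_{SE}$; with $\O_{SW}$ excluded, before $P_i$ is $\O_{NW}$ and after is $\O_{SE}$, and $P_i$ itself is excluded since $H_j\ne P_i$ — wait, one must handle $P_i$ possibly equalling some $H_j$, but then $H_j\supseteq\O_{SW}$, contradicting the case assumption, so indeed $H_j\ne P_i$ for all $j$). (3) Hence $H_1$ contains $\O_{NW}$ (it is the first square, index $1\le$ any other). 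If $H_3$ contains $\O_{NW}$ too, then all of $H_1,H_2,H_3$ contain $\O_{NW}$; pick the one — it's $H_1$, or actually any — feed it to Lemma~\ref{lm:type 1 existance}(i): $R_1\supseteq\O_{NW}$; but then I need $\O_{SE}$ covered too, and if no $H_j$ contains $\O_{SE}$ that means $i=l$... I need to double check this sub-case: if $i=l$ then $H_3=P_l=P_i\supseteq\O_{SW}$, contradiction. So $i<l$, hence $H_3=P_l$ has index $>i$, so $H_3$ contains $\O_{SW}$ or $\O_{SE}$, and excluding $\O_{SW}$, $H_3\supseteq\O_{SE}$. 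Good — so we always get $H_3\supseteq\O_{SE}$ in the remaining case. (4) Then $R_1$ (from $H_1\supseteq\O_{NW}$) contains $\O_{NW}$ or $\O_{SW}$; excluding $\O_{SW}$, $R_1\supseteq\O_{NW}$. And $R_3$ (from $H_3\supseteq\O_{SE}$) contains $\O_{SE}$ or $\O_{SW}$; excluding $\O_{SW}$, $R_3\supseteq\O_{SE}$. Therefore $R_1\cup R_3\supseteq\O_{NW}\cup\O_{SE}$, which is the second alternative.

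**Main obstacle.** The delicate point is the monotonicity/consistency claim in step (2)–(3): that with $\O_{SW}$ excluded as a possibility, the ``type'' ($\O_{NW}$ vs $\O_{SE}$) of the $H_j$'s is consistent with a single switch point, so that $H_1$ is forced to the $\O_{NW}$ side and $H_3=P_l$ is forced to the $\O_{SE}$ side. This is really just Lemma~\ref{lm:P_1 contains O_NW} unwound — every $H_j$ with index $<i$ contains $\O_{NW}$ or $\O_{SW}$, every $H_j$ with index $>i$ contains $\O_{SW}$ or $\O_{SE}$, and $P_i$ itself is the one containing $\O_{SW}$ — but one must be careful that $1\le\lceil l/2\rceil\le l$ and that the indices of $H_1,H_2,H_3$ avoid $i$ under the case hypothesis. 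Assembling the six short sub-cases ($H_2$ on the $\O_{NW}$ side vs.\ $\O_{SE}$ side, times the disposition of $H_1,H_3$) cleanly, using symmetry to halve the work, is the only real labor; each sub-case is a one-line appeal to Lemma~\ref{lm:type 1 existance}.
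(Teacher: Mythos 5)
Your proposal is correct and follows essentially the same route as the paper: invoke Lemma~\ref{lm:P in PS(u)_SW} for non-emptiness, Lemma~\ref{lm:P_1 contains O_NW} to place $H_1=P_1$ on the $\O_{NW}/\O_{SW}$ side and $H_3=P_l$ on the $\O_{SW}/\O_{SE}$ side, and Lemma~\ref{lm:type 1 existance} to transfer these containments to the candidates $R_1,R_3$, concluding that either some candidate contains $\O_{SW}$ or $R_1\cup R_3\supseteq \O_{NW}\cup\O_{SE}$. The only difference is cosmetic: your preliminary pivoting on $H_2=P_{\lceil l/2\rceil}$ and the explicit exclusion of $H_j=P_i$ are not needed (the paper argues directly from $P_1$ and $P_l$), but your final case execution is exactly the paper's argument.
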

    \begin{proof}
        First, due to Lemma~\ref{lm:P in PS(u)_SW}, we have that $\mathcal{PS}(u)_{SW}\neq \emptyset$.
        Assume $\mathcal{PS}(u)_{SW}=(P_1,\dots,P_{l})$, for some integer $l$, is the sequence of $SW$ staircase squares of~$u$. Now, as defined in Section~\ref{sec:type1}, consider the three staircase squares $H_1=P_1, H_2=P_{\lceil\frac{l}{2}\rceil}$ and $H_3=P_{l}$. 
        Due to Lemma~\ref{lm:P_1 contains O_NW}, we have that $P_1$ contains either $\O_{NW}$ or $\O_{SW}$, and the square $P_l$ contains either $\O_{SW}$ or $\O_{SE}$.
        Also, recall that $R_j$ is a candidate square of $u$ constructed by using the staircase square $H_j$ for each $j\in [3]$ (see Section~\ref{sec:type1}). So, due to Lemma~\ref{lm:type 1 existance}, we have that for each $j\in[3]$, the candidate square $R_j\neq\emptyset$.

        Since $P_1$ contains either $\O_{NW}$ or $\O_{SW}$, then $R_1$ contains either $\O_{NW}$ or $\O_{SW}$ (due to Lemma~\ref{lm:P_1 contains O_NW}~(i) and Lemma~\ref{lm:type 1 existance}~(i)). Since $P_l$ contains either $\O_{SW}$ or $\O_{SE}$, then $R_3$ contains either $\O_{SW}$ or $\O_{SE}$ (due to Lemma~\ref{lm:P_1 contains O_NW}~(ii) and Lemma~\ref{lm:type 1 existance}~(ii)). Therefore, we have either at least one $R_j$ contains $\O_{SW}$ or $\bigcup_{j=1}^{3} R_j$ contains $\O_{NW}\cup \O_{SE}$. Thus, the lemma follows. See Figure~\ref{fig:opt_square_cases1} and  Figure~\ref{fig:opt_square_case_5.3} for illustrations of the lemma in different cases.
    \end{proof}

In the following lemma, we prove that Algorithm~\ref{alg:occp} constructs at most $5\log_2 (m)$ squares that cover all the blue points contained in $\O_{SW}$, where $m\geq 2$.

\begin{lemma}\label{lema:5log(m)}
    Algorithm~\ref{alg:occp} constructs at most $5\log_2 (m)$ candidate squares that cover all the blue points in $\O_{SW}$, where $m\ (\geq 2)$ is the number of red points.
\end{lemma}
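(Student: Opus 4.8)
The plan is a binary-search argument: I charge the (at most) five candidate squares of each uncovered blue point of $\O_{SW}$ against one step of a search whose depth is $\log_2 m$. Let $u_1,u_2,\dots$ be, in order of arrival, the blue points of $\O_{SW}$ that are uncovered when they arrive, so that Algorithm~\ref{alg:occp} inserts $\S(u_j)$, a set of at most five squares, for each of them. It therefore suffices to show that at most $\log_2 m$ such points $u_j$ occur. To this end I would attach to each surviving $u_j$ the set $S_j$ of red points governing its $SW$ staircase --- namely the interior points $q_0,\dots,q_k$ of $\Q(u_j)_{SW}$, which all lie in $R(u_j)_{SW}\cap\P_r$ --- and establish: \textbf{(a)} $|S_1|\le m$; \textbf{(b)} if $S_j=\emptyset$ then already $\S(u_j)$ covers all of $\O_{SW}$, so the process stops at step $j$; and \textbf{(c)} whenever $\S(u_j)$ does not cover $\O_{SW}$, we have $|S_{j+1}|\le\tfrac12|S_j|$. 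Item \textbf{(b)} holds because $S_j=\emptyset$ means $\Q(u_j)_{SW}\cap R(u_j)_{SW}=\emptyset$, so Lemma~\ref{lema:candidate_squares} gives property P1: a Type~2 candidate square contains $\O_{SW}$. Granting \textbf{(a)}--\textbf{(c)}, the halving in \textbf{(c)} can be applied at most $\log_2 m$ times (here is where $m\ge2$ enters, so that a careful count of the search --- including the rounding in $\lceil l/2\rceil$ and the stopping rule \textbf{(b)} --- closes at depth $\log_2 m$), hence at most $\log_2 m$ surviving points occur, and the algorithm inserts at most $5\log_2 m$ candidate squares to cover all blue points of $\O_{SW}$.

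For the inductive step I would use the structure already developed. Fix a surviving $u_j$ with $S_j\ne\emptyset$. By Lemma~\ref{lema:candidate_squares} either a Type~2 square of $u_j$ contains $\O_{SW}$ (stop), or $\Q(u_j)_{SW}\cap R(u_j)_{SW}\ne\emptyset$, whence by Lemma~\ref{lm:P in PS(u)_SW} the staircase $\mathcal{PS}(u_j)_{SW}=(P_1,\dots,P_l)$ is non-empty and by Lemma~\ref{lm:P_1 contains O_NW} there is a pivot index $i$ with $P_i\supseteq\O_{SW}$, while $P_1,\dots,P_{i-1}$ each contain $\O_{NW}$ (or $\O_{SW}$) and $P_{i+1},\dots,P_l$ each contain $\O_{SW}$ (or $\O_{SE}$). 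The algorithm forms its Type~1 squares $R_1,R_2,R_3$ from $H_1=P_1$, the median $H_2=P_{\lceil l/2\rceil}$, and $H_3=P_l$. If $H_1,H_2$ or $H_3$ contains $\O_{SW}$, then by Lemma~\ref{lm:type 1 existance}(iii) the corresponding $R_t$ contains $\O_{SW}$ and the process stops. Otherwise $H_2$ contains $\O_{NW}$ or $\O_{SE}$, which by the pivot structure places $i$ strictly in the upper or lower half of $\{1,\dots,l\}$, discarding at least half of the staircase. Combining Lemma~\ref{lm: type 1 either or} (the three Type~1 squares cover $\O_{NW}\cup\O_{SE}$) with the edge refinements of Lemma~\ref{lm:type 1 existance}(i)--(ii) --- the bottom edge of $R_1$ lies no higher than that of $P_1$, the left edge of $R_3$ lies no further right than that of $P_l$, and the relevant edge of $R_2$ is pinned by the median square $H_2$ --- one shows that $R_1\cup R_2\cup R_3$ already covers all of $\O_{SW}$ except an axis-aligned corner region cut off by the median staircase square; every surviving blue point then lies in that corner, and the $SW$ staircase of the next surviving point $u_{j+1}$ is built from red points lying in the retained half of $\mathcal{PS}(u_j)_{SW}$, giving $|S_{j+1}|\le\tfrac12|S_j|$. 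Item \textbf{(a)} is immediate since $S_1\subseteq R(u_1)_{SW}\cap\P_r$.

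I expect the main obstacle to be item \textbf{(c)}: pinning down precisely the residual uncovered corner of $\O_{SW}$ from the edge-position clauses of Lemma~\ref{lm:type 1 existance}, and then verifying that the staircase governing the \emph{next} surviving blue point $u_{j+1}$ --- whose centred square $R(u_{j+1})$ differs from $R(u_j)$ --- is supported only on the half of the previous staircase kept by the median test, so that the red-point count genuinely halves rather than dropping by just one segment. A secondary but necessary point is the accounting: confirming that a governing set with no red points really forces a Type~2 square over $\O_{SW}$ (the base case \textbf{(b)}), and that the ceilings in the binary search together with $m\ge2$ keep the search depth at $\log_2 m$ and the per-step charge at five squares.
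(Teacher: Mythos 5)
Your outline follows the same route as the paper's proof: charge at most five candidate squares to each blue point of $\O_{SW}$ that is uncovered on arrival, and show that the number of such points is at most $\log_2 m$ by a binary search on the sequence of $SW$ staircase squares, with the median square $H_2=P_{\lceil l/2\rceil}$ locating the pivot square that contains $\O_{SW}$ and thereby discarding half of the staircase. Items (a) and (b) are fine as you state them (if $\Q(u_j)_{SW}\cap R(u_j)_{SW}=\emptyset$, Lemma~\ref{lema:candidate_squares} forces property P1, so a Type~2 candidate square contains $\O_{SW}$ and the process stops).

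The genuine gap is item (c), which you yourself flag as the main obstacle but do not close, and it is precisely the technical heart of the argument. To get $|S_{j+1}|\le\tfrac12|S_j|$ one needs three facts, none of which follows formally from Lemmas~\ref{lm:P_1 contains O_NW}--\ref{lm: type 1 either or} alone: (i) the next uncovered point $u_{j+1}$ must lie in $R(u_j)_{SW}\cap\O_{SW}$ (because $R_1$ and $R_3$ together cover $(R(u_j)_{NW}\cup R(u_j)_{NE}\cup R(u_j)_{SE})\cap\O_{SW}$); (ii) $u_{j+1}$ lies outside every staircase square on the discarded side of the median, including $H_1$ or $H_3$ and the median itself (this uses the edge-position clauses of Lemma~\ref{lm:type 1 existance}(i)--(ii)); and, most delicately, (iii) $\Q(u_{j+1})_{SW}\subseteq\Q(u_j)_{SW}$. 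Point (iii) is not automatic: the staircase of $u_{j+1}$ is built from the dominating set of $R(u_{j+1})_{SW}\cap R'^{(j+1)}_{SW}\cap\P_r$, where $R(u_{j+1})$ is centred at a different point and the anchor points $r_1,r_2$ are chosen anew, so a priori it could be supported on red points that never appeared on $u_j$'s staircase, in which case the count would not halve. The paper proves this containment by showing that the region bounded by the left and bottom edges of the new anchor square $R'^{(j+1)}$, the old staircase $ST_{SW}(\Q(u_j)_{SW})$, and the edges of $R(u_j)_{SW}$ is red point free, so the new dominating set is exactly a subset of the old staircase points; some argument of this kind is indispensable, and without it your halving claim, and hence the bound $\log_2 m$ on the number of surviving points, is unsupported.
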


    \begin{figure}[htbp]
    \centering
    \begin{subfigure}[b]{.49\textwidth}
        \centering
        \includegraphics[page=1, width=75mm]{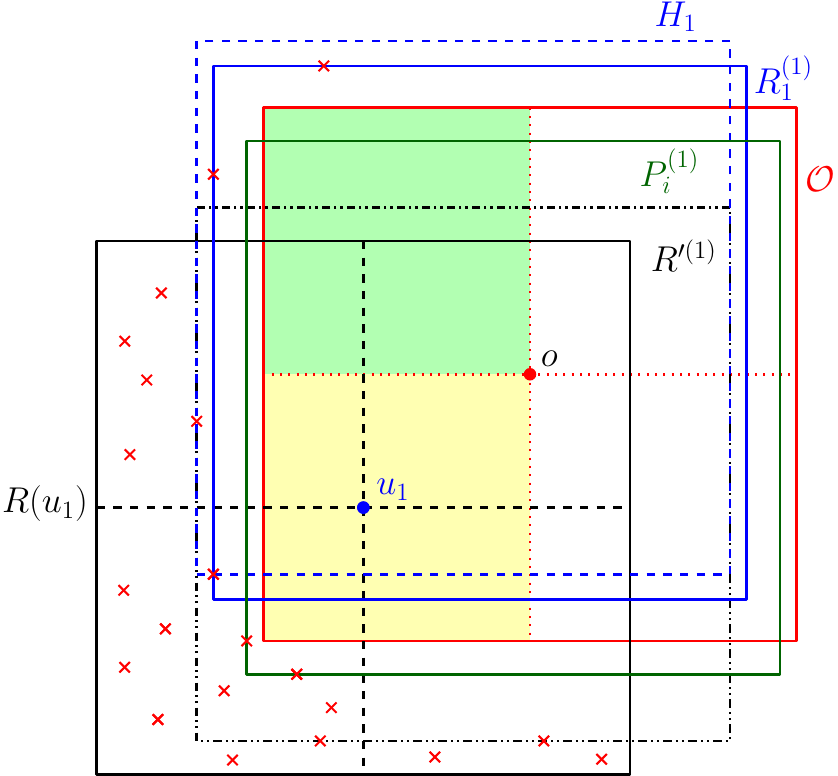}
        \caption{}
        \label{fig:main_lemma_1}
    \end{subfigure}
    \hfill
    \begin{subfigure}[b]{.49\textwidth}
        \centering
        \includegraphics[page=2, width=75mm]{Main_Lemma.pdf}
        \caption{}
        \label{fig:main_lemma_2}
    \end{subfigure}
    \caption{Let $\O$ (red) be an optimum square containing a blue point $u_1$. Also, let $R(u_1)$ (black) be the square centered at the blue point $u_1$. (a) Both the squares $H_1$ (blue/dashed) and $R^{(1)}_1$ (blue) contain $\O_{NW}$ (green shaded area). Also, the staircase square $P^{(1)}_i$ (dark green) contain $\O_{SW}$ (yellow shaded area). (b)~The candidate square $R^{(1)}_1$ (blue) of $u_1$ contains $(R(u_1)_{NW}\cup R(u_1)_{NE})\cap\O_{SW}$ (green shaded area).}
    \label{fig:main_lemma1}
    \end{figure}

\begin{proof}
    Let $u_1, u_2, \dots, u_k$ be the largest sub-sequence of the input sequence such that for each $i\in [k]$, $u_i\in\O_{SW}$ and it is uncovered upon its arrival. If any candidate square of $u_1$ contains $\O_{SW}$, then the lemma follows. W.l.o.g., assume that none of the candidate squares of~$u_1$ contains $\O_{SW}$. {Thus, due to Lemma~\ref{lema:candidate_squares}, we have $\Q(u_1)_{SW}\cap R(u_1)_{SW}\neq \emptyset$. As a result of this and Lemma~\ref{lm:P in PS(u)_SW}, $\mathcal{PS}(u_1)_{SW}\neq \emptyset$.} Let $\mathcal{PS}(u_1)_{SW}=(P^{(1)}_1,$ $P^{(1)}_2, \dots, P^{(1)}_{k_1})$, for some integer $k_1 \leq m$, be the sequence of $SW$ staircase squares of $u_1$ constructed by using the sequence of $SW$ staircase points $\Q(u_1)_{SW}$. Let $R_1^{(1)}, R_2^{(1)},$ $R_3^{(1)}~\in~\S(u_1)$ be the three candidate squares constructed by using three staircase squares $H_1, H_2$ and $H_3$, respectively, where $H_1=P^{(1)}_1, H_2=P^{(1)}_{\lceil\frac{k_1}{2}\rceil}$ and $H_3=P^{(1)}_{k_1}$ (as defined in Section~\ref{sec:type1}). Due to Lemma~\ref{lm:P_1 contains O_NW} and Lemma~\ref{lm:type 1 existance}, we have $R^{(1)}_j\neq\emptyset$ for each $j\in [3]$. Note that due to Lemma~\ref{lm: type 1 either or}, we have $\bigcup_{j=1}^3 R_j^{(1)}$ contains $\O_{NW}\cup\O_{SE}$. As a result, $u_1$ has at least two distinct candidate squares. 
    Now, we claim that $H_1$ contains $\O_{NW}$, and the square $H_3$ contains $\O_{SE}$.
    \begin{claim}\label{clm:H1_contains_R-NW}
    \begin{enumerate}
        \item[(i)] The staircase square $H_1=P^{(1)}_1$ contains $\O_{NW}$.
        \item[(ii)] The staircase square $H_3=P^{(1)}_{k_1}$ contains $\O_{SE}$.
    \end{enumerate}
    \end{claim}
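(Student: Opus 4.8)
The plan is to prove both parts by contradiction, chaining Lemma~\ref{lm:P in PS(u)_SW}, Lemma~\ref{lm:P_1 contains O_NW} and Lemma~\ref{lm:type 1 existance} against the standing assumption of Lemma~\ref{lema:5log(m)} that no candidate square of $u_1$ contains $\O_{SW}$; no new geometric estimate should be needed.

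First I would recall that the hypotheses $u_1\in\O_{SW}$ and $\Q(u_1)_{SW}\cap R(u_1)_{SW}\neq\emptyset$ are in force, so Lemma~\ref{lm:P in PS(u)_SW} applies and $\mathcal{PS}(u_1)_{SW}=(P^{(1)}_1,\dots,P^{(1)}_{k_1})$ is non-empty with some element containing $\O_{SW}$. Fix the index $i$ supplied by Lemma~\ref{lm:P_1 contains O_NW}: then $P^{(1)}_i$ contains $\O_{SW}$, every $P^{(1)}_j$ with $j<i$ contains $\O_{NW}$ or $\O_{SW}$, and every $P^{(1)}_j$ with $j>i$ contains $\O_{SW}$ or $\O_{SE}$. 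For part (i), observe that $H_1=P^{(1)}_1$ is either $P^{(1)}_i$ itself (when $i=1$) or one of the squares $P^{(1)}_1,\dots,P^{(1)}_{i-1}$ (when $i\geq 2$); in both situations $H_1$ contains either $\O_{NW}$ or $\O_{SW}$.

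Next I would rule out the possibility that $H_1$ contains $\O_{SW}$. The candidate square $R^{(1)}_1\in\S(u_1)$ is precisely the Type~1 square built from the staircase square $H_1$ as in Section~\ref{sec:type1}; since the hypotheses of Lemma~\ref{lm:type 1 existance} hold, part (iii) of that lemma gives that if $H_1$ contains $\O_{SW}$ then $R^{(1)}_1\neq\emptyset$ and $R^{(1)}_1$ contains $\O_{SW}$ — contradicting the assumption that no candidate square of $u_1$ contains $\O_{SW}$. Hence $H_1$ contains $\O_{NW}$, which proves (i). Part (ii) is the mirror image: $H_3=P^{(1)}_{k_1}$ equals $P^{(1)}_i$ (when $i=k_1$) or lies among $P^{(1)}_{i+1},\dots,P^{(1)}_{k_1}$ (when $i<k_1$), so it contains $\O_{SW}$ or $\O_{SE}$; and if it contained $\O_{SW}$, then Lemma~\ref{lm:type 1 existance}(iii) applied to $R^{(1)}_3$ would again contradict the standing assumption, forcing $H_3$ to contain $\O_{SE}$.

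I expect the only delicate point to be the boundary indices $i=1$ and $i=k_1$, where the corresponding range of staircase squares in Lemma~\ref{lm:P_1 contains O_NW} is empty; there one must fall back on the fact (read off from the proof of Lemma~\ref{lm:P_1 contains O_NW}, where $P^{(1)}_i$ is taken to be the square of Lemma~\ref{lm:P in PS(u)_SW}) that $P^{(1)}_i$ itself contains $\O_{SW}$, after which the contradiction via $R^{(1)}_1$ (resp. $R^{(1)}_3$) is immediate. Everything else is a routine chaining of the cited lemmas.
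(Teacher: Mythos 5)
Your proposal is correct and follows essentially the same route as the paper: invoke Lemma~\ref{lm:P_1 contains O_NW} (backed by Lemma~\ref{lm:P in PS(u)_SW}) to get that $H_1$ contains $\O_{NW}$ or $\O_{SW}$, then use Lemma~\ref{lm:type 1 existance}(iii) together with the standing assumption that no candidate square of $u_1$ contains $\O_{SW}$ to exclude the second option, with (ii) handled symmetrically. Your explicit treatment of the boundary indices $i=1$ and $i=k_1$ is a minor refinement the paper leaves implicit, but it is the same argument.
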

    \begin{proof}
        (i) Due to Lemma~\ref{lm:P_1 contains O_NW}, we have that the square $H_1=P^{(1)}_1$ contains either $\O_{NW}$ or $\O_{SW}$. Since $R_1^{(1)}$ does not contain $\O_{SW}$, the square $H_1$ also does not contain $\O_{SW}$ (due to Lemma~\ref{lm:type 1 existance}~(iii)). As a result, we have that the staircase square $H_1=P^{(1)}_1$ contains $\O_{NW}$.
        

        (ii) We can prove this in a similar way as (i).
    \end{proof}
    \noindent Due to Claim~\ref{clm:H1_contains_R-NW}~(i) and Lemma~\ref{lm:type 1 existance}~(i), the candidate square $R_1^{(1)}$ contains either $\O_{NW}$ or $\O_{SW}$. Recall that none of the candidate squares of $u_1$ contains $\O_{SW}$. Thus, the candidate square $R_1^{(1)}$ contains $\O_{NW}$ (see Figure~\ref{fig:main_lemma_1}). Similarly, due to Claim~\ref{clm:H1_contains_R-NW}~(ii) and Lemma~\ref{lm:type 1 existance}~(ii), the candidate square $R_3^{(1)}$ contains $\O_{SE}$.

    Since $R_2^{(1)}$ does not contain $\O_{SW}$, the staircase square $H_2=P^{(1)}_{\lceil\frac{k_1}{2}\rceil}$ also does not contain $\O_{SW}$ (due to Lemma~\ref{lm:type 1 existance}~(iii)). From Lemma~\ref{lm:P in PS(u)_SW}, we have a staircase square of $u_1$, say $P^{(1)}_i\in \mathcal{PS}(u_1)_{SW}$, for some $i\in[k_1]$, that contains $\O_{SW}$ (see Figure~\ref{fig:main_lemma_1}).  As a result, $i\neq \lceil\frac{k_1}{2}\rceil$. Now, in a similar way as  Claim~\ref{clm:H1_contains_R-NW}, one can prove the following claim.
    \begin{claim}\label{clm:H2_contains_R-NW_R-SE}
    \begin{enumerate}
        \item[(i)] If $i > \lceil\frac{k_1}{2}\rceil$, the staircase square $H_2=P^{(1)}_{\lceil\frac{k_1}{2}\rceil}$ contains $\O_{NW}$.
        \item[(ii)] If $i < \lceil\frac{k_1}{2}\rceil$, the square $H_2=P^{(1)}_{\lceil\frac{k_1}{2}\rceil}$ contains $\O_{SE}$.
    \end{enumerate}
    \end{claim}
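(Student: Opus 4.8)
The plan is to reuse the argument of Claim~\ref{clm:H1_contains_R-NW} almost verbatim, now applied to the middle staircase square $H_2=P^{(1)}_{\lceil\frac{k_1}{2}\rceil}$ instead of the endpoints $H_1,H_3$. The set-up we may assume is already in place: by Lemma~\ref{lm:P in PS(u)_SW} there is an index $i\in[k_1]$ with $P^{(1)}_i$ containing $\O_{SW}$, and since $R_2^{(1)}$ does not contain $\O_{SW}$, Lemma~\ref{lm:type 1 existance}~(iii) forces $H_2=P^{(1)}_{\lceil\frac{k_1}{2}\rceil}$ not to contain $\O_{SW}$ either; hence $\lceil\frac{k_1}{2}\rceil\neq i$, so the index $\lceil\frac{k_1}{2}\rceil$ is either strictly less than $i$ or strictly greater than $i$. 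These two possibilities are exactly the hypotheses of parts (i) and (ii).

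For part (i), assume $i>\lceil\frac{k_1}{2}\rceil$. Then $H_2=P^{(1)}_{\lceil\frac{k_1}{2}\rceil}$ is one of the squares $P^{(1)}_1,\dots,P^{(1)}_{i-1}$, so Lemma~\ref{lm:P_1 contains O_NW}~(i) tells us that $H_2$ contains either $\O_{NW}$ or $\O_{SW}$. Combining this with the fact established above that $H_2$ does not contain $\O_{SW}$, we conclude that $H_2$ contains $\O_{NW}$, as claimed. For part (ii), assume $i<\lceil\frac{k_1}{2}\rceil$. Then $H_2$ is one of $P^{(1)}_{i+1},\dots,P^{(1)}_{k_1}$, and Lemma~\ref{lm:P_1 contains O_NW}~(ii) gives that $H_2$ contains either $\O_{SW}$ or $\O_{SE}$; ruling out $\O_{SW}$ again, $H_2$ contains $\O_{SE}$.

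There is essentially no obstacle here: it is a routine mirror of Claim~\ref{clm:H1_contains_R-NW}. The only points requiring a little care are (a) verifying that $\lceil\frac{k_1}{2}\rceil\neq i$ before splitting into the two cases --- this is where the assumption that $R_2^{(1)}$ does not contain $\O_{SW}$ together with Lemma~\ref{lm:type 1 existance}~(iii) gets used --- and (b) matching the direction of the inequality to the correct part of Lemma~\ref{lm:P_1 contains O_NW}: the hypothesis $i>\lceil\frac{k_1}{2}\rceil$ places the index $\lceil\frac{k_1}{2}\rceil$ below $i$, so part (i) of that lemma applies, while $i<\lceil\frac{k_1}{2}\rceil$ places it above $i$, so part (ii) applies.
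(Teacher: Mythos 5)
Your proposal is correct and follows exactly the route the paper intends: the paper proves this claim by mirroring Claim~\ref{clm:H1_contains_R-NW}, i.e., using Lemma~\ref{lm:type 1 existance}~(iii) to rule out $H_2$ containing $\O_{SW}$ (which also gives $i\neq\lceil\frac{k_1}{2}\rceil$) and then applying the appropriate part of Lemma~\ref{lm:P_1 contains O_NW} according to whether $\lceil\frac{k_1}{2}\rceil$ lies before or after $i$. No gaps; this matches the paper's argument.
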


    \noindent Due to Claim~\ref{clm:H2_contains_R-NW_R-SE}, Lemma~\ref{lm:type 1 existance} and the fact that none of the candidate squares of $u_1$ contains $\O_{SW}$, we have that if $i > \lceil\frac{k_1}{2}\rceil$, the candidate square $R^{(1)}_{2}$ contains $\O_{NW}$, and if $i < \lceil\frac{k_1}{2}\rceil$, the square $R^{(1)}_{2}$ contains $\O_{SE}$.

   \begin{figure}[htbp]
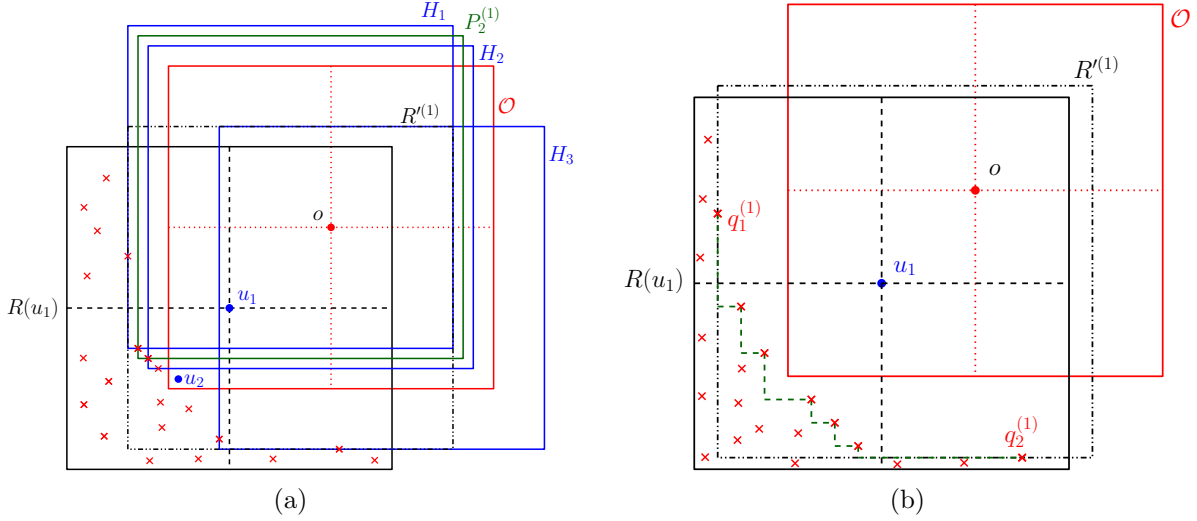

    \centering
    \begin{subfigure}[b]{.49\textwidth}
        \centering
        \includegraphics[page=3, width=75mm]{Main_Lemma.pdf}
        \caption{}
        \label{fig:main_lemma_3}
    \end{subfigure}
    \hfill
    \begin{subfigure}[b]{.49\textwidth}
        \centering
        \includegraphics[page=4, width=75mm]{Main_Lemma.pdf}
        \caption{}
        \label{fig:main_lemma_4}
    \end{subfigure}
    \caption{Let $\O$ (red) be an optimum square containing a blue point $u_1$. Also, let $R(u_1)$ (black) be the square centered at the blue point $u_1$. (a) The blue point $u_2$ does not lie in the staircase square $P^{(1)}_2$ (dark green) of $u_1$. (b)~$R'^{(1)}$ (black/dashed) is a square whose left edge and the bottom edge pass through the points $q_1^{(1)}$ and $q_2^{(1)}$, respectively, and $\Q(u_1)_{SW}$ is the set of all the red (cross) points lying in the staircase (dark green/dashed).}
    \label{fig:main_lemma2}
    \end{figure}

    Now, consider the next blue point $u_2$. We claim that $u_2$ must be in $R(u_1)_{SW}\cap\O_{SW}$.
    \begin{claim}\label{clm:b2_Rb1SW}
        The blue point $u_2\in R(u_1)_{SW}\cap\O_{SW}$.
    \end{claim}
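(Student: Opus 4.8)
The plan is to pin down the positions of the two Type~1 candidate squares $R_1^{(1)}$ and $R_3^{(1)}$ of $u_1$ precisely enough to see that together they cover the whole part of $\O_{SW}$ lying outside $R(u_1)_{SW}$; then, since $u_2\in\O_{SW}$ is uncovered on arrival while $R_1^{(1)},R_3^{(1)}\in\S(u_1)$ were placed by Algorithm~\ref{alg:occp} already when $u_1$ was processed, $u_2$ has no choice but to lie in $R(u_1)_{SW}\cap\O_{SW}$.

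First I would locate $R_1^{(1)}$. We already know that $R_1^{(1)}$ contains $\O_{NW}$ and that it was constructed from the staircase square $H_1=P^{(1)}_1$, whose bottom edge passes through a point of $\Q(u_1)_{SW}$ contained in $R(u_1)_{SW}$ (or through $r_2\in R(u_1)_{SE}$), hence a point at height at most $u_1(y)$. By Lemma~\ref{lm:type 1 existance}(i) the bottom edge of $R_1^{(1)}$ does not lie above the bottom edge of $H_1$, so it too sits at height at most $u_1(y)$, and by the proof of Lemma~\ref{lm:type 1 existance}(i) the left edge of $R_1^{(1)}$ stays between the left edge of $R(u_1)$ and the left edge of $\O$. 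Since $R_1^{(1)}$ is a unit square containing $\O_{NW}$, these constraints pin its four edges: left edge weakly left of the left edge of $\O$, right edge weakly right of the right edge of $R(u_1)$, top edge weakly above the top edge of $\O$, bottom edge at height at most $u_1(y)$. Comparing these against the coordinates of $u_1$ and of the centre $o$ of $\O$, and using $u_1\in\O_{SW}$, shows $R_1^{(1)}$ contains the rectangle $(R(u_1)_{NW}\cup R(u_1)_{NE})\cap\O_{SW}$ (the green region in Figure~\ref{fig:main_lemma_2}). Symmetrically, starting from $H_3=P^{(1)}_{k_1}$ (which contains $\O_{SE}$, and whose left edge passes through a point of $\Q(u_1)_{SW}$ at abscissa at most $u_1(x)$) together with Lemma~\ref{lm:type 1 existance}(ii) (the left edge of $R_3^{(1)}$ does not lie to the right of the left edge of $H_3$), one gets $R_3^{(1)}\supseteq(R(u_1)_{SE}\cup R(u_1)_{NE})\cap\O_{SW}$.

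To finish, recall that $R(u_1)$ contains $\O_{SW}$ (the opening observation of Lemma~\ref{lema:candidate_squares}), so $\O_{SW}$ decomposes as the union of the four pieces $R(u_1)_X\cap\O_{SW}$ over $X\in\{NW,NE,SE,SW\}$. The preceding step shows the $NW$, $NE$ and $SE$ pieces all lie in $R_1^{(1)}\cup R_3^{(1)}$, hence $\O_{SW}\setminus\bigl(R_1^{(1)}\cup R_3^{(1)}\bigr)\subseteq R(u_1)_{SW}\cap\O_{SW}$. As $u_2\in\O_{SW}$ is not covered by any square placed so far, and in particular not by $R_1^{(1)}$ or $R_3^{(1)}$, it must lie in $R(u_1)_{SW}\cap\O_{SW}$, which is the claim.

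I expect the only real work to be the last bit of coordinate bookkeeping in the second paragraph: upgrading the one-sided edge constraints handed to us by Lemma~\ref{lm:type 1 existance} into the two exact containments. This is routine, but it is where the degenerate configurations have to be dispatched --- for instance $H_1$ being red-point-free so that $R_1^{(1)}=H_1$ outright, a staircase-square sequence consisting of a single square so that $H_1=H_3$, or $u_2$ sitting on an edge shared by $\O_{SW}$ and $R_1^{(1)}$ (ruled out because a relevant blue point must lie in the interior of $\O$) --- and none of these change the conclusion.
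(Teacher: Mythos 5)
Your proposal is correct and follows essentially the same route as the paper: it establishes that $R_1^{(1)}$ contains $(R(u_1)_{NW}\cup R(u_1)_{NE})\cap\O_{SW}$ and $R_3^{(1)}$ contains $(R(u_1)_{SE}\cup R(u_1)_{NE})\cap\O_{SW}$, so the uncovered point $u_2\in\O_{SW}$ is forced into $R(u_1)_{SW}\cap\O_{SW}$. The only cosmetic difference is how you source the bottom-edge (resp.\ left-edge) bound — via the staircase point on $H_1$ and Lemma~\ref{lm:type 1 existance} rather than via ``$u_1\in R_1^{(1)}$ and $R_1^{(1)}$ does not contain $\O_{SW}$'' as the paper does — which changes nothing substantive.
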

    \begin{proof}
        Since $u_1\in \O_{SW}$, each of the sub-squares of $R(u_1)$ has a non-empty intersection with $\O_{SW}$ (see Figure~\ref{fig:main_lemma_2}). Now, as $u_1\in R_1^{(1)}$ and the square $R_1^{(1)}$ does not contain $\O_{SW}$,  the bottom edge of $R_1^{(1)}$ lies between the top edge of $R(u_1)_{SW}$ and the bottom edge of $\O$. This implies that the top edge of $R_1^{(1)}$ lies above the top edge of $\O$. Similarly, we can show that the left edge of $R_1^{(1)}$ lies between the left edge of $R(u_1)$ and the left edge of $\O$. This implies that the right edge of $R_1^{(1)}$ lies between the right edge of $R(u_1)$ and the right edge of $\O$. Hence, we have that $R^{(1)}_1$ contains $(R(u_1)_{NW}\cup R(u_1)_{NE})\cap\O_{SW}$. Similarly, we can show that $R^{(1)}_3$ contains $(R(u_1)_{SE}\cup R(u_1)_{NE})\cap\O_{SW}$. Due to this and since $R(u_1)$ contains $\O_{SW}$, the next blue point $u_2$ must lie in $R(u_1)_{SW}$. From our assumption, we have that $u_2\in \O_{SW}$; hence, the claim follows.
    \end{proof}
    \noindent Now again, if a candidate square of $u_2$ contains $\O_{SW}$, then we are done. Otherwise, we consider the sequence of $SW$ staircase squares $\mathcal{PS}(u_2)_{SW}= (P^{(2)}_1, P^{(2)}_2, \dots, P^{(2)}_{k_2})$ of $u_2$ constructed by using the sequence of $SW$ staircase points $\Q(u_2)_{SW}\ (\neq \emptyset)$, and select at most three candidate squares of $u_2$ obtained from three squares in $\mathcal{PS}(u_2)_{SW}$.

    \begin{claim}\label{clm:b2_P1j}
    \begin{enumerate}
        \item[(i)] If $i > \lceil\frac{k_1}{2}\rceil$, the blue point $u_2\notin P^{(1)}_j$ for each $j\in \{1,\dots,\lceil\frac{k_1}{2}\rceil\}\cup\{k_1\}$.
        \item[(ii)] If $i < \lceil\frac{k_1}{2}\rceil$, the blue point $u_2\notin P^{(1)}_j$ for each $j\in \{1\}\cup \{\lceil\frac{k_1}{2}\rceil,\dots, k_1\}$.
    \end{enumerate}
    \end{claim}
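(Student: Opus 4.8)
The plan is to establish (i) in detail and obtain (ii) as its mirror image, interchanging the bottom edge with the left edge and $\O_{NW}$ with $\O_{SE}$. I would fix coordinates with $\O=[0,1]^2$ and write $u_1=(c_1,c_2)$; since $u_1\in\O_{SW}$ we have $c_1,c_2\le\tfrac12$, and Claim~\ref{clm:b2_Rb1SW} places $u_2$ in $R(u_1)_{SW}\cap\O_{SW}=[0,c_1]\times[0,c_2]$. I would also record three facts for repeated use: $u_2$ is uncovered on arrival, so $u_2\notin R^{(1)}_1\cup R^{(1)}_2\cup R^{(1)}_3$; every non-empty candidate square of $u_1$ contains $u_1$, but by assumption none contains $\O_{SW}$; and along $\mathcal{PS}(u_1)_{SW}=(P^{(1)}_1,\dots,P^{(1)}_{k_1})$ the heights of the bottom edges are non-increasing in the index while the $x$-coordinates of the left edges are non-decreasing (each such edge passes through a point of $\Q(u_1)_{SW}$, and those points have $x$-coordinate non-decreasing and $y$-coordinate non-increasing along the staircase). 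The proof then splits into a ``bottom-edge half'' that kills the indices sitting to the upper left and a symmetric ``left-edge half'' for the single index $k_1$.

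For the indices $j\in\{1,\dots,\lceil k_1/2\rceil\}$ in case (i): since $i>\lceil k_1/2\rceil$, Claim~\ref{clm:H2_contains_R-NW_R-SE}(i) gives that $H_2=P^{(1)}_{\lceil k_1/2\rceil}$ contains $\O_{NW}$, and then Lemma~\ref{lm:type 1 existance}(i) gives that $R^{(1)}_2\neq\emptyset$, that $R^{(1)}_2$ contains $\O_{NW}$ (it cannot contain $\O_{SW}$), and that the bottom edge of $R^{(1)}_2$ does not lie above the bottom edge of $H_2$. The key step is then a coordinate computation: because $R^{(1)}_2$ contains $\O_{NW}$, its horizontal extent contains $[0,\tfrac12]$ and hence $u_2(x)\in[0,c_1]$, while its top edge sits at height $\ge 1>c_2\ge u_2(y)$; as $u_2\notin R^{(1)}_2$, the only possibility left is that $u_2(y)$ is at most the height of the bottom edge of $R^{(1)}_2$. (A routine check disposes of the degenerate sub-case in which $u_2(x)$ falls on the left edge of $R^{(1)}_2$: that would force this edge onto the left edge of $\O$, after which $u_1\in R^{(1)}_2\cap\O_{SW}$ would make $R^{(1)}_2$ contain $\O_{SW}$, contradicting our standing assumption.) Combining this with Lemma~\ref{lm:type 1 existance}(i), $u_2(y)$ is at most the height of the bottom edge of $P^{(1)}_{\lceil k_1/2\rceil}$, and by the monotonicity recorded above, at most the height of the bottom edge of $P^{(1)}_j$ for every $j\le\lceil k_1/2\rceil$; hence $u_2$ lies on or below that edge, so $u_2\notin P^{(1)}_j$.

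For $j=k_1$ I would run the mirror argument. Claim~\ref{clm:H1_contains_R-NW}(ii) gives that $H_3=P^{(1)}_{k_1}$ contains $\O_{SE}$, and Lemma~\ref{lm:type 1 existance}(ii) gives that $R^{(1)}_3\neq\emptyset$, that $R^{(1)}_3$ contains $\O_{SE}$ (not $\O_{SW}$), and that the left edge of $R^{(1)}_3$ does not lie to the right of the left edge of $H_3$. Since $R^{(1)}_3$ contains $\O_{SE}$, its vertical extent contains $[0,\tfrac12]\ni u_2(y)$ and its right edge sits at $x\ge 1>c_1\ge u_2(x)$, so $u_2\notin R^{(1)}_3$ forces $u_2(x)$ to be at most the $x$-coordinate of the left edge of $R^{(1)}_3$, hence of $H_3=P^{(1)}_{k_1}$; therefore $u_2\notin P^{(1)}_{k_1}$. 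This completes (i). For (ii), where $i<\lceil k_1/2\rceil$, the two halves swap: $H_2=P^{(1)}_{\lceil k_1/2\rceil}$ now contains $\O_{SE}$ by Claim~\ref{clm:H2_contains_R-NW_R-SE}(ii), so the left-edge argument together with the non-decreasing left edges of the $P^{(1)}_j$ rules out $j\in\{\lceil k_1/2\rceil,\dots,k_1\}$; and $H_1=P^{(1)}_1$ contains $\O_{NW}$ by Claim~\ref{clm:H1_contains_R-NW}(i), so the bottom-edge argument rules out $j=1$.

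I expect the main obstacle to be exactly that coordinate step: one must notice that a candidate square already covering the \emph{entire} half-square $\O_{NW}$ (resp.\ $\O_{SE}$) automatically covers the whole slab $[0,\tfrac12]$ in the relevant coordinate --- the very slab in which $u_2$ lives --- so $u_2$ can escape that square only by dropping below its bottom edge (resp.\ sliding left of its left edge); Lemma~\ref{lm:type 1 existance} then transfers this escape past the corresponding edge of $P^{(1)}_{\lceil k_1/2\rceil}$, and the monotonicity of the staircase squares' edges propagates it across the claimed index range. The rest is routine bookkeeping within the case split already set up for $u_1$.
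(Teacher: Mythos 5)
Your proposal is correct and follows essentially the same route as the paper's proof: it combines Claim~\ref{clm:b2_Rb1SW}, Claims~\ref{clm:H1_contains_R-NW} and~\ref{clm:H2_contains_R-NW_R-SE}, and Lemma~\ref{lm:type 1 existance}~((i)--(ii)) to force $u_2$ below the bottom edge of $H_2$ (and left of the left edge of $H_3$), then uses the monotonicity of the staircase squares' edges (which the paper leaves implicit) to cover the whole index range, with (ii) by symmetry. One small remark: your parenthetical dismissal of the case ``$u_2(x)$ on the left edge of $R^{(1)}_2$'' is not quite right as argued (a left edge on $\O$'s left edge together with $u_1\in R^{(1)}_2$ does not force $R^{(1)}_2\supseteq\O_{SW}$, since its bottom edge may still lie above that of $\O$), but the case is vacuous anyway because $u_2$ lies in the interior of $\O$ while that left edge lies on or left of $\O$'s left edge, so this does not affect correctness.
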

    \begin{proof}
        (i) Due to Lemma~\ref{lm:type 1 existance}~((i)-(ii)), Claim~\ref{clm:H1_contains_R-NW} and Claim~\ref{clm:H2_contains_R-NW_R-SE}(i), we have that the bottom edge of $R_1^{(1)}$ and $R_2^{(1)}$ do not lie above the bottom edge of $H_1=P^{(1)}_1$ and $H_2=P^{(1)}_{\lceil\frac{k_1}{2}\rceil}$, respectively, and the left edge of $R_3^{(1)}$ does not lie on the right side of the left edge of $H_3=P^{(1)}_{k_1}$ (see Figure~\ref{fig:main_lemma_3}). Also, $u_2\in\O_{SW}\cap R(u_1)_{SW}$ (due to Claim~\ref{clm:b2_Rb1SW}). As a result, the point $u_2$ lies below the bottom edges of $H_1, H_2$, and on the left side of the left edge of $H_3$. Therefore, we have $u_2\notin P^{(1)}_j$ for each $j\in \{1,\dots,\lceil\frac{k_1}{2}\rceil\}\cup\{k_1\}$.

        (ii) We can similarly prove this as (i).
    \end{proof}
    \noindent Now, we claim that $\Q(u_2)_{SW}\subseteq \Q(u_1)_{SW}$. As a result of this and due to  Claim~\ref{clm:b2_P1j}, we have that if $i > \lceil\frac{k_1}{2}\rceil$, then $\mathcal{PS}(u_2)_{SW}$ is a sub-sequence of $(P^{(1)}_{(\lceil \frac{k_1}{2}\rceil+1)},P^{(1)}_{(\lceil \frac{k_1}{2}\rceil+2)},$ $\dots,P^{(1)}_{(k_1-1)})$, and if $i < \lceil\frac{k_1}{2}\rceil$, then $\mathcal{PS}(u_2)_{SW}$ is a sub-sequence of $(P^{(1)}_{2},P^{(1)}_{3},$ $\dots,P^{(1)}_{(\lceil \frac{k_1}{2}\rceil-1)})$. Therefore, $|\mathcal{PS}(u_2)_{SW}|=k_2<\frac{k_1}{2}\leq \frac{m}{2}$.

    \begin{claim}\label{clm:Q(u)}
        $\Q(u_2)_{SW}\subseteq \Q(u_1)_{SW}$, where $\Q(u_1)_{SW}$ and $\Q(u_2)_{SW}$ are the staircase points set of $u_1$ and $u_2$, respectively.
    \end{claim}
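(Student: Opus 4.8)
The plan is to replace the recursive definitions by an explicit description of the staircase‑point sets and then establish a nesting (in fact a contiguity) relation between $\Q(u_2)_{SW}$ and $\Q(u_1)_{SW}$.

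\textbf{Step 1: an explicit description of $\Q(u)_{SW}$.} Unwinding the construction of Section~\ref{sec:possible_candidate_square}: since $r_1(u)$ lies in $R(u)_{NW}$ (or is the $NW$ vertex of $R(u)$) and $r_2(u)$ lies in $R(u)_{SE}$ (or is the $SE$ vertex), intersecting $R(u)_{SW}$ with $R'_{SW}$ simply clips $R(u)_{SW}$ to the axis-parallel rectangle $B(u)$ whose left and bottom edges pass through $r_1(u)$ and $r_2(u)$ and whose right and top edges lie on the lines $x=u(x)$ and $y=u(y)$; moreover $r_1(u)$ has the largest $y$-coordinate and $r_2(u)$ the largest $x$-coordinate among $\{r_1(u),r_2(u)\}\cup(\P_r\cap B(u))$. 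Hence, as a set, $\Q(u)_{SW}$ is precisely the set of $SW$-dominating points of $\{r_1(u),r_2(u)\}\cup(\P_r\cap B(u))$; read off in order, it is $r_1(u)$, then the $SW$-staircase of the red points inside $B(u)$, then $r_2(u)$. I would record this as a short preliminary lemma.

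\textbf{Step 2: transferring the data from $u_1$ to $u_2$.} By Claim~\ref{clm:b2_Rb1SW}, $u_2\in R(u_1)_{SW}\cap\O_{SW}$, so $u_1(x)-1/2\le u_2(x)\le u_1(x)$ and $u_1(y)-1/2\le u_2(y)\le u_1(y)$; and by Lemma~\ref{lm:P in PS(u)_SW} some staircase square $P\in\mathcal{PS}(u_1)_{SW}$ contains $\O_{SW}$, hence $u_2\in P$. The two things I would establish are: (a) $r_1(u_2),r_2(u_2)\in\Q(u_1)_{SW}$; and (b) consequently $B(u_2)\subseteq B(u_1)$, and $\P_r\cap B(u_2)$ is exactly the run of red points of $B(u_1)$ lying weakly between $r_1(u_2)$ and $r_2(u_2)$ on the staircase of $u_1$, so that its $SW$-staircase is the corresponding contiguous portion of the $SW$-staircase of $\P_r\cap B(u_1)$. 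The mechanism behind both is that $\O$ is red-point free while $R(u_1)_{NE}$ and $R(u_2)_{NE}$ sit inside $\O$ (their centres lie in $\O_{SW}$), which forces every red point of $R(u_1)_{NW}$ or $R(u_2)_{NW}$ into the slab to the left of the left edge of $\O$ and every red point of $R(u_1)_{SE}$ or $R(u_2)_{SE}$ into the slab below the bottom edge of $\O$; combined with the coordinate inequalities above, this makes any red point of $B(u_1)$ that strictly dominates a point of $\Q(u_2)_{SW}$ fall inside $R(u_2)_{NW}$ or $R(u_2)_{SE}$, contradicting the maximality that defines $r_1(u_2)$ or $r_2(u_2)$. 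Putting (a), (b) and Step~1 together yields that $\Q(u_2)_{SW}$ is the point set of a contiguous sub-sequence of $\Q(u_1)_{SW}$; in particular $\Q(u_2)_{SW}\subseteq\Q(u_1)_{SW}$, which, together with Claim~\ref{clm:b2_P1j}, is exactly what is needed to place $\mathcal{PS}(u_2)_{SW}$ inside $\mathcal{PS}(u_1)_{SW}$ as the surrounding argument requires.

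\textbf{Main obstacle.} The delicate step is part (a): showing that the bracket points $r_1(u_2),r_2(u_2)$ are themselves staircase points of $u_1$. This is where the geometry is tightest, and one must also handle the degenerate sub-cases in which $r_1(u)$ or $r_2(u)$ is a corner of $R(u)$ rather than an actual red point; here the case split of Lemma~\ref{lema:candidate_squares} (Cases~2 and~5) and the localisation of $u_2$ given by Claim~\ref{clm:b2_P1j} (which places it strictly between the relevant staircase squares of $u_1$) are precisely the inputs that make the argument close. A secondary nuisance is keeping strict versus weak inequalities straight at the various grid lines through $\O$, $R(u_1)$ and $R(u_2)$; as elsewhere in the paper I would assume the red points are in sufficiently general position. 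Once (a) is in hand, (b) and the conclusion are routine bookkeeping.
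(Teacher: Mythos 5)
Your plan follows the same route as the paper's proof: nest the clipped region of $u_2$ inside that of $u_1$, show that the endpoints of $u_2$'s staircase are themselves staircase points of $u_1$, and then use $SW$-dominance together with the red-point-free region beyond $ST_{SW}(\Q(u_1)_{SW})$ to conclude that $\Q(u_2)_{SW}$ is a contiguous sub-run of $\Q(u_1)_{SW}$ (the paper indeed shows $\Q(u_2)_{SW}=S=\Q(u_1)_{SW}\cap R'^{(2)}$). However, the step you label the ``main obstacle'' — your part (a), that $r_1(u_2),r_2(u_2)\in\Q(u_1)_{SW}$, together with $B(u_2)\subseteq B(u_1)$ — is exactly where the paper's proof does its real work, and your proposal does not carry it out. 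The mechanism you sketch (a red point of $B(u_1)$ dominating a point of $\Q(u_2)_{SW}$ would land in $R(u_2)_{NW}$, $R(u_2)_{SE}$, or the red-free $R(u_2)_{NE}$) is fine for showing that the \emph{interior} staircase points of $u_2$ remain $SW$-dominating inside $B(u_1)$, but it neither shows that the two bracket points of $u_2$ lie in $\Q(u_1)_{SW}$, nor rules out the degenerate case in which $R(u_2)_{NW}$ or $R(u_2)_{SE}$ is red-point free (so that a bracket point is a corner of $R(u_2)$ rather than a red point), nor yields the coordinate comparisons $r_1(u_2)(x)\geq r_1(u_1)(x)$ and $r_2(u_2)(y)\geq r_2(u_1)(y)$ needed for $B(u_2)\subseteq B(u_1)$.

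The paper closes all three of these at once with a device your plan never identifies: it takes the specific staircase points $v_1\in\Q(u_1)_{SW}$ on the bottom edge of $H_2=P^{(1)}_{\lceil k_1/2\rceil}$ and $v_2\in\Q(u_1)_{SW}$ on the left edge of $H_3=P^{(1)}_{k_1}$, and uses Claim~\ref{clm:H1_contains_R-NW} and Claim~\ref{clm:b2_P1j} (the quantitative content of the localisation of $u_2$: $u_2(x)\geq v_1(x)$, $u_2(y)\leq v_1(y)$, and symmetrically for $v_2$) to conclude $v_1\in R(u_2)_{NW}$ and $v_2\in R(u_2)_{SE}$. This guarantees that both sub-squares contain red points (killing the degenerate sub-case), and forces $q_1^{(2)}(x)\geq v_1(x)\geq q_1^{(1)}(x)$ and $q_2^{(2)}(y)\geq v_2(y)\geq q_2^{(1)}(y)$, which is precisely what drives the nesting $R'^{(2)}_{SW}\cap R(u_2)_{SW}\subseteq R'^{(1)}_{SW}\cap R(u_1)_{SW}$ and, combined with the red-point-free region NE of the staircase of $u_1$, the identification of $\Q(u_2)_{SW}$ with a subset of $\Q(u_1)_{SW}$. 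Without supplying this (or an equivalent) argument, your proposal is a correct outline of the same strategy but has a genuine gap at its pivotal step.
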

    \begin{proof}
        W.l.o.g, we assume that $i > \lceil\frac{k_1}{2}\rceil$ and the other case $i < \lceil\frac{k_1}{2}\rceil$ is similar in nature. Let $q_1^{(1)}$ and $q_2^{(1)}$ be the initial and the terminal points of the staircase $ST_{SW}(\Q(u_1)_{SW})$, respectively  (see Figure~\ref{fig:main_lemma_4}). As defined in Section~\ref{sec:possible_candidate_square}, consider the square $R'^{(1)}$ whose left and bottom edges pass through the points $q_1^{(1)}$ and $q_2^{(1)}$, respectively.

        
        
        Let $v_1$ and $v_2$ be the two red points such that $v_1\in \Q(u_1)_{SW}$ lies on the bottom edge of $H_2=P^{(1)}_{\lceil\frac{k_1}{2}\rceil}$ and $v_2\in \Q(u_1)_{SW}$ lies on the left edge of $H_3=P^{(1)}_{k_1}$ (see Figure~\ref{fig:main_lemma_5}). Now, we claim that $v_1\in R(u_2)_{NW}$ and $v_2\in R(u_2)_{SE}$. Since $v_1, v_2, u_2\in R(u_1)_{SW}$ and the fact that $R(u_2)$ is a translated copy of $R(u_1)$, the square $R(u_2)$ contains the points $v_1$ and $v_2$. Due to Claim~\ref{clm:H1_contains_R-NW} and Claim~\ref{clm:b2_P1j}, we have $u_2(y)\leq v_1(y)$ and $u_2(x)\geq v_1(x)$. As a result of this and  the fact $v_1\in R(u_2)$, the point $v_1 \in R(u_2)_{NW}$. Similarly, we can argue that $v_2 \in R(u_2)_{SE}$.


    \begin{figure}[htbp]
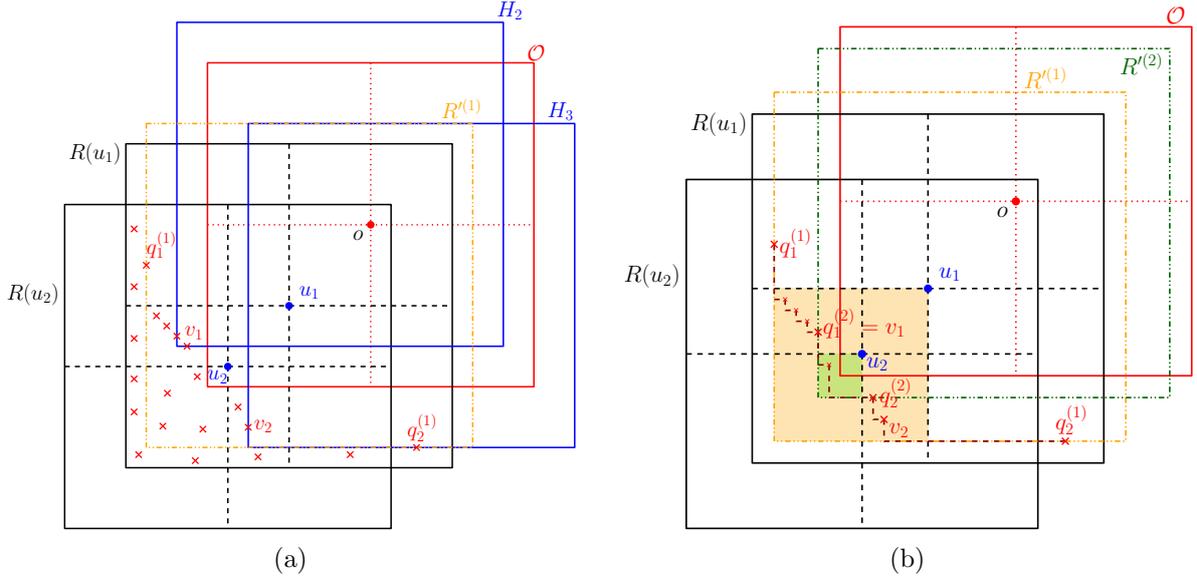

    \centering
    \begin{subfigure}[b]{.49\textwidth}
        \centering
        \includegraphics[page=5, width=75mm]{Main_Lemma.pdf}
        \caption{}
        \label{fig:main_lemma_5}
    \end{subfigure}
    \hfill
    \begin{subfigure}[b]{.49\textwidth}
        \centering
        \includegraphics[page=6, width=75mm]{Main_Lemma.pdf}
        \caption{}
        \label{fig:main_lemma_6}
    \end{subfigure}
    \caption{Let $\O$ (red) be an optimum square containing two blue points $u_1$ and $u_2$. Also, let $R(u_1)$ and $R(u_2)$ (black) be two squares centered at the blue point $u_1$ and $u_2$, respectively. (a) The red points $v_1$ and $v_2$ lie in $R(u_2)_{NW}$ and $R(u_2)_{SE}$, respectively. (b)~$R'^{(2)}_{SW}\cap R(u_2)_{SW}$ (green shaded area) is contained in $R'^{(1)}_{SW}\cap R(u_1)_{SW}$ (orange shaded area).}
    \label{fig:main_lemma3}
    \end{figure}

        Let $q_1^{(2)}$ be the nearest red point in $R(u_2)_{NW}$ to the right edge of $R(u_2)_{NW}$, and $q_2^{(2)}$ be the nearest red point in $R(u_2)_{SE}$ to the top edge of $R(u_2)_{SE}$ (see Figure~\ref{fig:main_lemma_6} and \ref{fig:main_lemma_7}). In other words, $q_1^{(2)}$ and $q_2^{(2)}$ be the initial and the terminal points of the staircase $ST_{SW}(\Q(u_2)_{SW})$, respectively. Since $v_1 \in R(u_2)_{NW}$ and $v_2 \in R(u_2)_{SE}$, we have $q_1^{(2)}(x) \geq v_1(x)$ and $q_2^{(2)}(y) \geq v_2(y)$, respectively. As defined in Section~\ref{sec:possible_candidate_square}, consider the square $R'^{(2)}$ whose left edge and bottom edge pass through the points $q_1^{(2)}$ and $q_2^{(2)}$, respectively. 
        
        Observe that the blue point $u_2$ and the $SW$ vertex of $R'^{(2)}$ both lie inside $R(u_1)_{SW}$. As a result, we have that $R'^{(2)}_{SW}\cap R(u_2)_{SW} \subseteq R(u_1)_{SW}$ (see Figure~\ref{fig:main_lemma_6} and \ref{fig:main_lemma_7}). Now, we also show that $R'^{(2)}_{SW}\cap R(u_2)_{SW}\subseteq R'^{(1)}_{SW}$. Since $q_1^{(1)}(x)\leq v_1(x)\leq q_1^{(2)}(x)$, the left edge of $R'^{(1)}$ lies on the left side of the left edge of $R'^{(2)}$, and since $q_2^{(1)}(y)\leq v_2(y)\leq q_2^{(2)}(y)$, the bottom edge of $R'^{(1)}$ lies below the bottom edge of $R'^{(2)}$. Hence, the $SW$ vertex of $R'^{(2)}$ lies inside the sub-square $R'^{(1)}_{SW}$ (see Figure~\ref{fig:main_lemma_6} and \ref{fig:main_lemma_7}). Due to this and the fact that the blue point $u_2\in R'^{(1)}_{SW}$, we have that $R'^{(2)}_{SW}\cap R(u_2)_{SW} \subseteq R'^{(1)}_{SW}$. In other words, $R'^{(2)}_{SW}\cap R(u_2)_{SW} \subseteq R(u_1)_{SW}\cap R'^{(1)}_{SW}$ (see Figure~\ref{fig:main_lemma_6} and \ref{fig:main_lemma_7}).



    \begin{figure}[htbp]
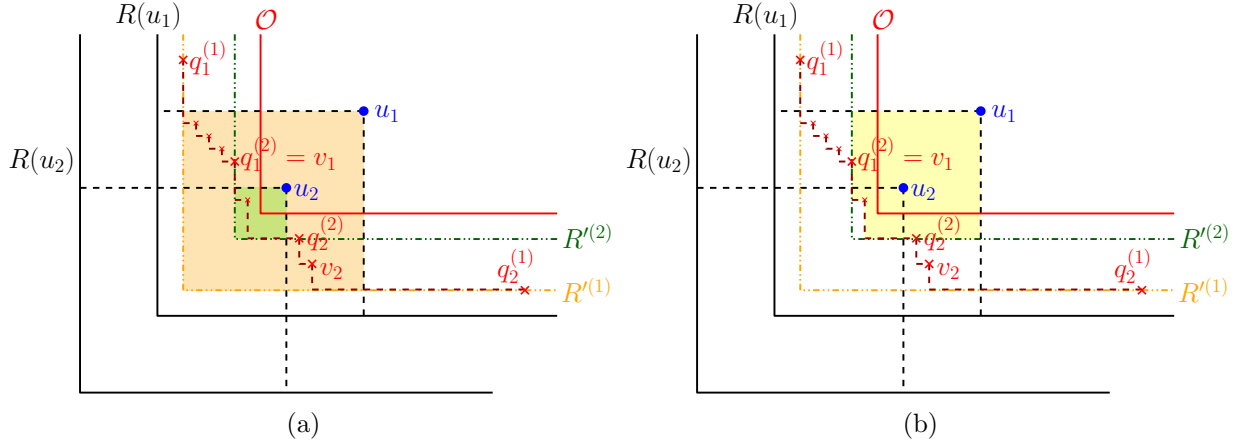

    \centering
    \begin{subfigure}[b]{.49\textwidth}
        \centering
        \includegraphics[page=7, width=80mm]{Main_Lemma.pdf}
        \caption{}
        \label{fig:main_lemma_7}
    \end{subfigure}
    \hfill
    \begin{subfigure}[b]{.49\textwidth}
        \centering
        \includegraphics[page=8, width=80mm]{Main_Lemma.pdf}
        \caption{}
        \label{fig:main_lemma_8}
    \end{subfigure}
    \caption{Let $\O$ (red) be an optimum square containing two blue points $u_1, u_2$, and $R(u_1), R(u_2)$ (black) be two squares centered at the blue point $u_1, u_2$, respectively. (a) $R'^{(2)}_{SW}\cap R(u_2)_{SW}$ (green shaded area) is contained in $R'^{(1)}_{SW}\cap R(u_1)_{SW}$ (orange shaded area). (b)~The yellow shaded region is the area bounded by the left edge of $R'^{(2)}$, the staircase $ST_{SW}(\Q(u_1)_{SW})$, the bottom edge of $R'^{(2)}$, the right edge of $R(u_1)_{SW}$ and the top edge of $R(u_1)_{SW}$ whose interior is red point free.}
    \label{fig:main_lemma4}
    \end{figure}


        Since $q_1^{(2)}, q_2^{(2)} \in \Q(u_1)_{SW}$ and the fact that the left edge and bottom edge of $R'^{(2)}$ pass through the points $q_1^{(2)}$ and $q_2^{(2)}$, respectively, we have that the square $R'^{(2)}$ contains a subset $S\subseteq\Q(u_1)_{SW}$ of red points. Note that $S\setminus\{q_1^{(2)}, q_2^{(2)}\} \subseteq R'^{(2)}_{SW}\cap R(u_2)_{SW}$. 
        Since, from the definition, $\Q(u_1)_{SW}\setminus\{q_1^{(1)}, q_2^{(1)}\}=D_{SW}(R'^{(1)}_{SW}\cap R(u_1)_{SW}\cap \P_r)$ and also $R'^{(2)}$ contains the subset $S\subseteq\Q(u_1)_{SW}$, we have that the interior of the area bounded by the left edge of $R'^{(2)}$, the staircase $ST_{SW}(\Q(u_1)_{SW})$, the bottom edge of $R'^{(2)}$, the right edge of $R(u_1)_{SW}$ and the top edge of $R(u_1)_{SW}$ is red point free (see Figure~\ref{fig:main_lemma_8}). Due to this and $R'^{(2)}_{SW}\cap R(u_2)_{SW} \subseteq R(u_1)_{SW}\cap R'^{(1)}_{SW}$ and $S\setminus\{q_1^{(2)}, q_2^{(2)}\} \subseteq R'^{(2)}_{SW}\cap R(u_2)_{SW}$, we have that $S\setminus\{q_1^{(2)}, q_2^{(2)}\}$ is the set of $SW$ dominating point of $R'^{(2)}_{SW}\cap R(u_2)_{SW}\cap \P_r$, i.e., $S\setminus\{q_1^{(2)}, q_2^{(2)}\}= D_{SW}(R'^{(2)}_{SW}\cap R(u_2)_{SW}\cap \P_r)$. Hence, from the definition of the staircase point set, we have $\Q(u_2)_{SW}=S\subseteq \Q(u_1)_{SW}$, and so the claim follows.
        \end{proof}
        



    Similarly, for the case of $u_3$, either we have a candidate square of $u_3$ containing $\O_{SW}$ or we select at most three candidate squares of $u_3$ obtained from three squares in $\mathcal{PS}(u_3)_{SW}$ such that the sequence $\mathcal{PS}(u_3)_{SW}$ is a sub-sequence of either $(P^{(2)}_{2},\dots, P^{(2)}_{(\lceil \frac{k_2}{2}\rceil-1)})$ or $(P^{(2)}_{(\lceil \frac{k_2}{2}\rceil+1)},\dots, P^{(2)}_{(k_2-1)})$. Thus, $|\mathcal{PS}(u_3)_{SW}|=k_3<\frac{k_2}{2}<\frac{m}{2^2}$. In each step to cover an uncovered blue point in $\O_{SW}$, either the algorithm selects a candidate square containing $\O_{SW}$ or the size of the staircase squares set is reduced by at least half, and each such staircase squares set always contains a square that covers $\O_{SW}$ (due to Lemma~\ref{lm:P in PS(u)_SW}). As a result, before the blue point $u_{(\log_2 (k_1) + 1)}$ is introduced, Algorithm~\ref{alg:occp} must construct a candidate square for some blue point $u_j$, where $j\in [\log_2 (k_1)]$, such that a candidate square of $u_j$ contains $\O_{SW}$. Hence, after $\log_2 (k_1)$th blue point is introduced, all the blue points $u_j$, for $(\log_2 (k_1) + 1)\leq j\leq k$, will be already covered by the previously constructed candidate squares. Therefore, $k\leq \log_2 (k_1)$. Since for each blue point $u_j$, for $j\in [\log_2 (k_1)]$, Algorithm~\ref{alg:occp} reports at most five candidate squares, we have that Algorithm~\ref{alg:occp} constructs at most $5 \log_2 (k_1) \leq 5 \log_2 (m)$ squares that cover $\O_{SW}$. This completes the proof of the lemma.

\end{proof}

Similarly, we can prove an equivalent statement of Lemma~\ref{lema:5log(m)} for other sub-squares of $\O$ such as $\O_{NW}, \O_{NE}$ and $\O_{SE}$.


\begin{lemma}\label{lema:20log(m)}
    Algorithm~\ref{alg:occp} constructs at most $20\log_2 (m)$ candidate squares that cover all the blue points in $\O$, where $m\ (\geq 2)$ is the number of red points.
\end{lemma}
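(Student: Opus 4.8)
The plan is to obtain Lemma~\ref{lema:20log(m)} as an immediate consequence of Lemma~\ref{lema:5log(m)} together with its three reflected analogues, by splitting the optimum square $\O$ into its four sub-squares. First I would note that $\O=\O_{SW}\cup\O_{NW}\cup\O_{NE}\cup\O_{SE}$, so every input blue point lying inside $\O$ lies inside at least one of these four sub-squares. Hence the set of candidate squares that Algorithm~\ref{alg:occp} ever constructs on account of blue points contained in $\O$ equals the union, over $X\in\{SW,NW,NE,SE\}$, of the sets $\mathcal{C}_X$ of candidate squares it constructs on account of the blue points of $\O$ that lie in $\O_X$ and are uncovered upon arrival.

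Next I would bound each $\mathcal{C}_X$ separately. Lemma~\ref{lema:5log(m)} gives $|\mathcal{C}_{SW}|\le 5\log_2 m$, and the three reflected versions of that lemma (reflecting the entire candidate-square machinery across the vertical and/or horizontal line through the centre of $R(u)$, which is exactly the symmetry already used in the paper to define $\Q(u)_{NW}$, $\mathcal{PS}(u)_{NW}$, and so on) give $|\mathcal{C}_{NW}|,|\mathcal{C}_{NE}|,|\mathcal{C}_{SE}|\le 5\log_2 m$ as well. Summing, Algorithm~\ref{alg:occp} constructs at most $4\cdot 5\log_2 m=20\log_2 m$ candidate squares in total for the blue points in $\O$, and these cover every blue point in $\O$, since each such point lies in some $\O_X$ and is therefore covered by $\mathcal{C}_X$ (by the statement of the corresponding reflected lemma, whose proof produces, among those $\le 5\log_2 m$ squares, one containing all of $\O_X$). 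A blue point lying on the common boundary of two sub-squares is at worst counted twice in this bookkeeping, which only loosens the inequality, so it needs no special treatment.

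The only step that is not entirely routine is justifying that Lemma~\ref{lema:5log(m)}, and the lemmas its proof relies on — Lemma~\ref{lema:candidate_squares}, Lemma~\ref{lm:P in PS(u)_SW}, Lemma~\ref{lm:P_1 contains O_NW} and Lemma~\ref{lm:type 1 existance} — really do transfer to $\O_{NW}$, $\O_{NE}$ and $\O_{SE}$, since all of them are stated and proved under the normalisation $u\in\O_{SW}$ (equivalently, with $R(u)_{NE}\subseteq\O$ red-point free). I would therefore spell out that the construction of the five candidate squares in Section~\ref{sec:all_candidate_square} and the definitions of the staircase points and staircase squares in Section~\ref{sec:possible_candidate_square} are invariant (up to relabelling of directions) under the dihedral symmetries of $R(u)$; indeed, for $u\in\O_{NW}$ one has $R(u)_{SE}\subseteq\O$ red-point free, which is exactly the reflection of the $\O_{SW}$ situation, and the paper already records that $\Q(u)_{NW},\mathcal{PS}(u)_{NW},\dots$ are defined symmetrically. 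Applying the appropriate reflection thus carries each of these lemmas, and hence Lemma~\ref{lema:5log(m)} itself, to its analogue for each of the other three sub-squares. Once this symmetry remark is in place, the four-fold sum above finishes the proof.
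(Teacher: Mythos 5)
Your proposal is correct and follows essentially the same route as the paper: the paper also notes that Lemma~\ref{lema:5log(m)} holds for each of $\O_{NW},\O_{NE},\O_{SE}$ by symmetry and then sums the four bounds of $5\log_2 m$ to get $20\log_2 m$. Your extra care in spelling out why the supporting lemmas transfer under reflection is just an expanded version of the paper's "similarly" remark, not a different argument.
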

\begin{proof}
    From Lemma~\ref{lema:5log(m)}, we have that at most $5\log_2 (m)$ squares reported by Algorithm~\ref{alg:occp} are sufficient to cover all the blue points that lie in only one sub-square of $\O$.
    Hence, Algorithm~\ref{alg:occp} constructs at most $20\log_2 (m)$ candidate squares that cover all the blue points in $\O$. Thus, the lemma follows.
\end{proof}

\begin{figure}[htbp]
\centering
\includegraphics[page=1, width=50mm]{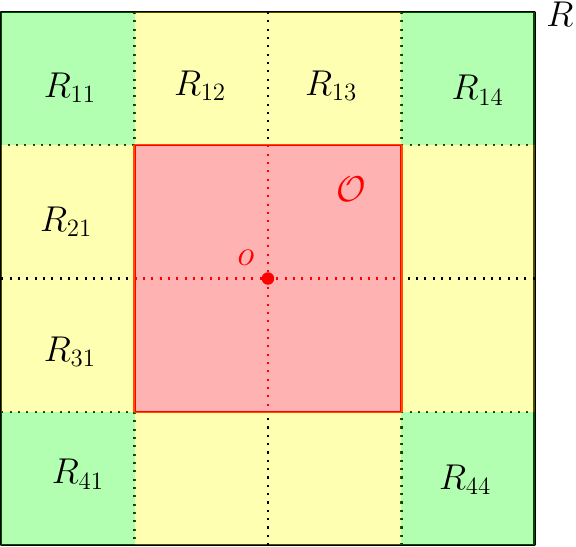}
\caption{Let $\O$ (red) be an optimum square centered at $o$, and $R$ (black) be a square centered at $o$ of side length 2. Here, the green shaded regions denote the sub-squares $R_{ij}$ for $i, j\in\{1,4\}$; the red shaded regions denote the sub-squares $R_{ij}$ for $i, j\in\{2,3\}$, and the yellow shaded regions denote the remaining sub-squares.}
\label{fig:sub_squares_m1}
\end{figure}

\begin{lemma}\label{lema:m_1}
     If $|\P_r|= 1$, i.e., there is exactly one red point, then the Algorithm~\ref{alg:occp} reports at most~$6$ candidate squares that cover all the blue points in $\O$.
\end{lemma}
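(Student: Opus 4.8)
The plan is to compare $\O$ with the concentric square $R$ of side $2$ (as in the figure) and to bound, for each of the four quarter-squares $S\in\{\O_{SW},\O_{SE},\O_{NW},\O_{NE}\}$, the number $n_S$ of candidate squares that Algorithm~\ref{alg:occp} creates while processing blue points of $S$ that are uncovered on arrival; then show $n_{\O_{SW}}+n_{\O_{SE}}+n_{\O_{NW}}+n_{\O_{NE}}\le 6$. The starting observation is that whenever a blue point $u$ lies in $\O$, the unit square $R(u)$ centred at $u$ is contained in $R$, because $u$ is within $L_\infty$-distance $\tfrac12$ of the centre $o$ of $\O$; hence $r\in R(u)$ is only possible when $r\in R$. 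Subdividing $R$ into a $4\times4$ array of sub-squares of side $\tfrac12$, the central four form $\O$, and since $\O$ is $\P_r$-empty the unique red point $r$ either lies outside $R$ or in one of the twelve outer sub-squares (the four corner ones or the eight edge ones of the figure). Finally, for any quarter $S$ of $\O$ and any blue point $u\in S$, the sub-square of $R(u)$ lying inside $\O$ — the one opposite to $S$, e.g.\ $R(u)_{NE}$ when $S=\O_{SW}$ — is $\P_r$-empty, so $r$ can only sit in one of the other three sub-squares of $R(u)$.

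Two sub-claims do the work. (i) If no blue point of $S$ ever has $r\in R(u)$, then $n_S\le1$: the first uncovered blue point $u$ of $S$ has $R(u)$ $\P_r$-empty, so by Section~\ref{sec:all_candidate_square} $\S(u)=\{R(u)\}$, and since $R(u)$ and $\O$ are translates with $u\in S$ we get $R(u)\supseteq S$, whence every later blue point of $S$ is already covered. (ii) If the first uncovered blue point $u_1$ of $S$ has $r\in R(u_1)$, then $n_S\le 2$ and $S$ is covered after $u_1$: indeed $r$ lies in exactly one of the three admissible sub-squares of $R(u_1)$, so in the construction of Section~\ref{sec:all_candidate_square} (Case~4) the two nearest red points $r',r''$ coincide; hence $R_1=R_2=R_3=\emptyset$ and $\S(u_1)\subseteq\{R_4,R_5\}$. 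Moreover, because $r'=r''$, the relevant branch of Lemma~\ref{lema:candidate_squares} (Case~4, or Case~5.1/5.2 when $r\in R(u_1)_{SW}$) yields property~P1: a Type~2 candidate square, i.e.\ one of $R_4,R_5$, contains $S$. So $u_1$ contributes at most two squares and closes $S$.

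It then remains to count by locating $r$. If $r\notin R$, sub-claim~(i) applies to all four quarters, giving at most $4$ squares. If $r$ lies in a corner sub-square of $R$, say the south-west one, then for $u\in\O_{SE}\cup\O_{NW}\cup\O_{NE}$ one checks directly that $r\notin R(u)$ (the left or bottom edge of $R(u)$ already lies weakly to the right of, or above, $r$), so those three quarters contribute $\le1$ each by~(i) and $\O_{SW}$ contributes $\le2$ by~(ii): total $\le5$. If $r$ lies in an edge sub-square of $R$, say one of the two directly below $\O$, then for $u$ in the upper half $\O_{NW}\cup\O_{NE}$ we have $r\notin R(u)$ (it lies below $R(u)$), so each of those contributes $\le1$ by~(i), while $\O_{SW}$ and $\O_{SE}$ contribute $\le2$ each by~(ii): total $\le1+1+2+2=6$. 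The remaining positions of $r$ are symmetric, so in every case Algorithm~\ref{alg:occp} reports at most six candidate squares covering all blue points of $\O$.

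The main obstacle is sub-claim~(ii): one must argue that a single red point forces the algorithm into the branch where $R_1,R_2,R_3$ are empty (so that at most two squares are ever spent on a quarter) and, at the same time, invoke Lemma~\ref{lema:candidate_squares} to guarantee that one of the two surviving squares already covers the whole quarter, so that no further square is charged to it. A little extra care is needed for degenerate placements of $r$ on a dividing line of some $R(u)$; these fall into the same cases once one fixes the convention for when a sub-square is said to contain $r$. One should also double-check the elementary ``$r\notin R(u)$'' inequalities that decide, for each position of $r$ in $R$, exactly which quarters are affected.
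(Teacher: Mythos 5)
Your proposal is correct and follows essentially the same route as the paper: the concentric side-$2$ square partitioned into sixteen half-unit cells, a case analysis on whether the single red point lies outside $R$, in a corner cell, or in an edge cell, and per-quarter counts of $1$ (when $r\notin R(u)$, so $\S(u)=\{R(u)\}\supseteq$ the quarter) or $2$ (the two Type~2 translates from Case~4, one of which covers the quarter), giving totals $4$, $5$, $6$. The only cosmetic difference is that you justify "one of $R_4,R_5$ contains the quarter" by invoking Lemma~\ref{lema:candidate_squares} (property P1), where the paper asserts it directly.
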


\begin{proof}
    Let $\P_r=\{r\}$. Recall that $\O$ is an optimum square centered at a point $o$. Let $R$ be an axis-parallel square centered at $o$ of side length 2. We can partition $R$ into 16 sub-squares of side length $\frac{1}{2}$ as shown in Figure~\ref{fig:sub_squares_m1}. We denote the sub-squares by $R_{ij}$ for $i,j\in [4]$. 
    Note that the sub-square $R_{ij}$, for $i,j\in\{2,3\}$, is red point free.
   Now, depending on the position of the red point $r$, 
    we have the following three cases.
    \begin{itemize}
        \item \emph{\underline{Case 1:}} The red point $r$ does not lie in $R$.
        Then, for any blue point $b\in\O$, the square $R(b)$ never contains a red point. As a result, $\S(b)=\{R(b)\}$. Due to Observation~\ref{obs:main}, we have that Algorithm~\ref{alg:occp} constructs at most~$4$ candidate squares that cover all the blue points in $\O$.
        
        \item \emph{\underline{Case 2:}} The red point $r$ lies in a sub-square $R_{ij}$ for $i,j\in\{1,4\}$ (see Figure~\ref{fig:m1_case2}). W.l.o.g., assume that $r\in R_{41}$; the other cases are similar. Now, if $b\in \O_{SW}=R_{32}$, then the square $R(b)$ may contain~$r$. If $r\notin R(b)$, then $\S(b)=\{R(b)\}$ and clearly $\O_{SW}\subseteq R(b)$. Otherwise, $r\in R(b)$. In that case, as defined in Section~\ref{sec:all_candidate_square} (Case~4), there are exactly two (non-empty) candidate squares of $b$. The first will be a horizontal translated copy $H$ of $R(b)$, while the second will be a vertical translated copy $V$ of $R(b)$. Now, it is easy to see that one of them must contain $\O_{SW}$. Now, consider that $b\in \O\setminus\O_{SW}$. In this case, observe that $R(b)$ never contains $r$. As a result, $\S(b)=\{R(b)\}$, and $R(b)$ contains the sub-square of $\O$, where $b$ lies (due to Observation~\ref{obs:main}). Therefore, Algorithm~\ref{alg:occp} constructs at most~$5$ candidate squares that cover all the blue points in $\O$.

\begin{figure}[htbp]
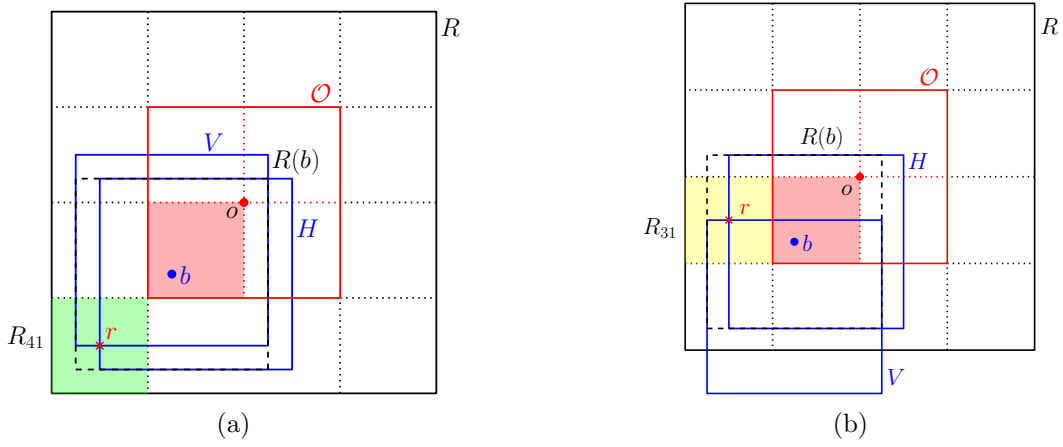

    \centering
    \begin{subfigure}[b]{.49\textwidth}
        \centering
        \includegraphics[page=2, width=60mm]{m_Equal_1.pdf}
        \caption{}
        \label{fig:m1_case2}
    \end{subfigure}
    \hfill
    \begin{subfigure}[b]{.49\textwidth}
        \centering
        \includegraphics[page=3, width=55mm]{m_Equal_1.pdf}
        \caption{}
        \label{fig:m1_case3}
    \end{subfigure}
    \caption{Let $\O$ (red) be an optimum square centered at a point $o$, and $R$ (black) be a square of length 2 centered at $o$. $R(b)$ is the square centered at a blue point $b\in \O_{SW}$. The squares $H$ and $V$ are the horizontal and the vertical translated copies of $R(b)$, respectively. (a) The red point $r$ lies in the sub-square $R_{41}$ of $R$. (b)~The red point $r$ lies in the sub-square $R_{31}$ of $R$.}
    \label{fig:m_1_2nd}
\end{figure}

        \item \emph{\underline{Case 3:}} The red point $r$ lies in $R$, but does not lie in a sub-square $R_{ij}$, where either $i,j\in\{1,4\}$ or $i,j\in\{2,3\}$ (see Figure~\ref{fig:m1_case3}). W.l.o.g., assume that $r\in R_{31}$; the other cases are similar. Now, if $b$ lies in $\O_{SW}=R_{32}$ or in $\O_{NW}=R_{22}$, then the square $R(b)$ may contain~$r$. Suppose, $b\in \O_{SW}$. If $r\notin R(b)$, then $\S(b)=\{R(b)\}$ and clearly $\O_{SW}\subseteq R(b)$. Otherwise, $r\in R(b)$. In that case, again as defined in Section~\ref{sec:all_candidate_square} (Case~4), there are exactly two (non-empty) candidate squares of $b$, and it is easy to see that one of them must contain $\O_{SW}$. Similarly, if $b\in \O_{NW}$, we can show that $b$ has at most two (non-empty) candidate squares, and one of them must contain $\O_{NW}$. Now, consider that $b\in \O_{NE}\cup\O_{SE}$. In this case, observe that $R(b)$ never contains $r$. As a result, $\S(b)=\{R(b)\}$, and $R(b)$ contains the sub-square of $\O$, where $b$ lies (due to Observation~\ref{obs:main}). Therefore, Algorithm~\ref{alg:occp} constructs at most~$6$ candidate squares that cover all the blue points in $\O$.
    \end{itemize}

    Hence, the lemma follows.



\end{proof}

\begin{theorem}\label{theo:upperbound}
    There exists an algorithm for the online class cover problem for squares that achieves a competitive ratio of $\max\{6, 20\log_2 (m)\}$, where $m\ (\geq 1)$ is the number of red points.
\end{theorem}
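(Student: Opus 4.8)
The plan is to run Algorithm~\ref{alg:occp} and to analyze it by charging every square it places to a square of a fixed optimal offline solution. First I would settle feasibility. The input sequence $\I$ is assumed coverable, so for each arriving blue point $u$ some square of $\mathrm{OPT}(\I)$ is $\P_r$-empty and contains $u$ in its interior; in particular $u$ lies (up to symmetry) in the $SW$ sub-square of some optimal square, so Lemma~\ref{lm:set of candidate squares is non-empty} gives $\S(u)\neq\emptyset$. Hence, at the step where $u$ arrives uncovered, Algorithm~\ref{alg:occp} adds at least one nonempty candidate square containing $u$, and by construction every candidate square is $\P_r$-empty; therefore the algorithm always returns a valid class cover.

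Next I would bound $\alg(\I)$. Let $\mathrm{OPT}(\I)$ consist of $\opt(\I)$ squares. Every square the algorithm ever places is placed in the step at which some \emph{uncovered} blue point $u$ arrives, and that $u$ lies in the interior of at least one square $\O\in\mathrm{OPT}(\I)$. Consequently the set of all squares placed by the algorithm is contained in the union, over $\O\in\mathrm{OPT}(\I)$, of the set of squares placed while processing uncovered blue points that lie in $\O$. For $m\geq 2$, Lemma~\ref{lema:20log(m)} bounds each such set by $20\log_2 m$, so $\alg(\I)\leq 20\log_2(m)\cdot\opt(\I)$; for $m=1$, Lemma~\ref{lema:m_1} gives $\alg(\I)\leq 6\,\opt(\I)$. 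Since $20\log_2 m\geq 20>6$ whenever $m\geq 2$, in every case $\alg(\I)\leq\max\{6,20\log_2 m\}\cdot\opt(\I)$, and taking the supremum over all input sequences yields the stated competitive ratio.

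Most of the real work has already been done in Lemma~\ref{lema:5log(m)} (the binary search over staircase squares, showing that a constant number of candidate squares per step makes $O(\log m)$ steps suffice to cover one sub-square of an optimal square), so here the only step needing care is the charging argument: I must make sure no placed square is dropped and that the union bound genuinely covers all of them. This is clean because each placed square is created in response to one specific uncovered blue point, which lies in at least one optimal square; a blue point lying in two optimal squares merely causes overcounting, which only weakens the inequality in the desired direction. Thus the proof reduces to the feasibility observation plus this bookkeeping, together with the two per-optimal-square lemmas.
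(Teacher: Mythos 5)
Your proposal is correct and follows essentially the same route as the paper: charge the squares placed by Algorithm~\ref{alg:occp} to the optimal squares, bounding the per-square cost by Lemma~\ref{lema:20log(m)} when $m\geq 2$ and by Lemma~\ref{lema:m_1} when $m=1$, and sum over $\mathrm{OPT}(\I)$. The extra remarks on feasibility (via Lemma~\ref{lm:set of candidate squares is non-empty}) and on harmless overcounting are fine but do not change the argument.
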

\begin{proof}
    Let $\I$ be an input sequence of blue points. Suppose that $\text{OPT}(\I)$ be an optimal solution given by an offline algorithm, and $\text{ALG}(\I)$ be the solution given by Algorithm~\ref{alg:occp} for the sequence $\I$. Also, let $\O \in \text{OPT}(\I)$. 
    Consider first that $m=1$, i.e., there is exactly one red point. From Lemma~\ref{lema:m_1}, we have that Algorithm~\ref{alg:occp} constructs at most $6$ squares that cover all the blue points in $\I\cap\O$. Therefore,~$|\text{ALG}(\I)| \leq \sum_{\O\in\text{OPT}(\I)} 6 = 6 |\text{OPT}(\I)|$.
    
    Next, we consider $m\geq 2$. From Lemma~\ref{lema:20log(m)}, we have that Algorithm~\ref{alg:occp} constructs at most $20\log_2 (m)$ squares that cover all the blue points in $\I\cap\O$. Therefore, $|\text{ALG}(\I)| \leq \sum_{\O\in\text{OPT}(\I)} 20\log_2 (m) = 20\log_2 (m) |\text{OPT}(\I)|$. Thus, the theorem follows.
\end{proof}


\subsection{Improving the Competitive Ratio Further}\label{sec:improved_comp}
From Lemma~\ref{lema:candidate_squares} and Lemma~\ref{lm: type 1 either or}, we have if a blue point $u\in \O_{SW}$ such that any candidate squares of $u$ does not contain $\O_{SW}$, then their union contains $\O_{NW}\cup \O_{SE}$. The equivalent statement is true when $u$ lies in other sub-squares of $\O$. So, a careful analysis of Algorithm~\ref{alg:occp} reduces the competitive ratio to $10+10\log_2 (m)$ from $20\log_2 (m)$, where  $m\ (\geq 2)$ is the number of red points (see Figure~\ref{fig:improve_ratio}). Let $u_1, u_2, \dots, u_k$ be the largest sub-sequence of the input sequence such that $u_i\in\O$ for $i\in [k]$, and they are uncovered upon arrival. W.l.o.g., we assume that $u_1\in \O_{SW}$. So due to Lemma~\ref{lema:candidate_squares} and Lemma~\ref{lm: type 1 either or}, we have the following two cases.

\begin{figure}[htbp]
\centering
\includegraphics[width=70 mm]{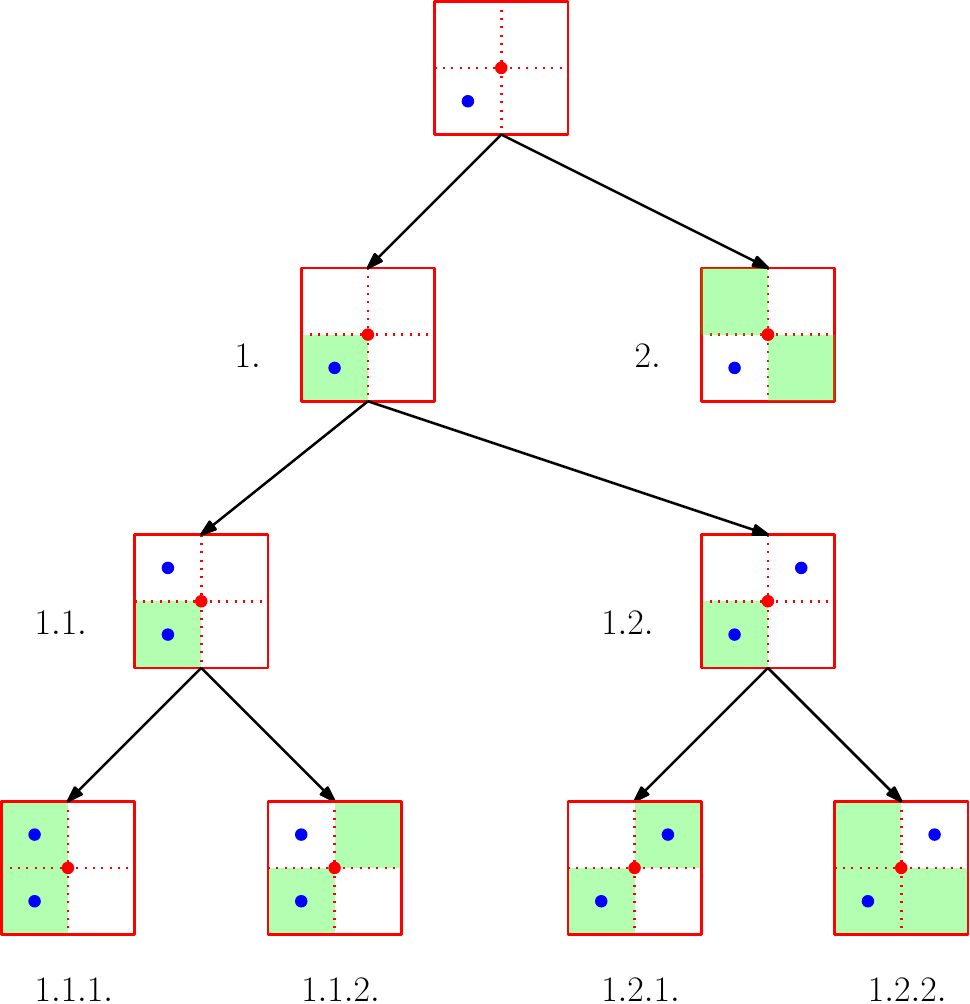}
\caption{Red square is an optimum square. Blue points are the input points. Some candidate squares cover green-shaded regions.}
\label{fig:improve_ratio}
\end{figure}

\begin{enumerate}
    \item A candidate square of $u_1$ contains $\O_{SW}$. Hence, next blue point $u_2$ lies on either $\O_{NW}\cup\O_{SE}$ or $\O_{NE}$.
    \begin{enumerate}
        \item[1.1.] Let $u_2$ lies on $\O_{NW}\cup \O_{SE}$. W.l.o.g., assume that $u_2\in \O_{NW}$. The other case is similar. Then, due to Lemma~\ref{lema:candidate_squares} and Lemma~\ref{lm: type 1 either or}, we have the following two cases.
        \begin{enumerate}
            \item[1.1.1.] A candidate square of $u_2$ contains $\O_{NW}$. In this case, Algorithm~\ref{alg:occp} constructs at most $2\cdot 5\log_2 (m)$ candidate squares that cover $\O_{NE}\cup \O_{SE}$ (due to Lemma~\ref{lema:5log(m)}). Therefore, our algorithm constructs at most $10+10\log_2 (m)$ candidate squares that cover $\O$.
            \item[1.1.2.] Union of the candidate squares of $u_2$ contains $\O_{NE}\cup \O_{SW}$. In this case, Algorithm~\ref{alg:occp} constructs at most $2\cdot 5\log_2 (m)$ candidate squares that cover $\O_{NW}\cup \O_{SE}$ (due to Lemma~\ref{lema:5log(m)}). Therefore, our algorithm constructs at most $10+10\log_2 (m)$ candidate squares that cover~$\O$.
        \end{enumerate}
        \item[1.2.] Let $u_2$ lies $\O_{NE}$. Then, due to Lemma~\ref{lema:candidate_squares} and Lemma~\ref{lm: type 1 either or}, we have the following two cases.
        \begin{enumerate}
            \item[1.2.1.] A candidate square of $u_2$ contains $\O_{NE}$. In this case, Algorithm~\ref{alg:occp} constructs at most $2\cdot 5\log_2 (m)$ candidate squares that cover $\O_{NW}\cup \O_{SE}$ (due to Lemma~\ref{lema:5log(m)}). Therefore, our algorithm constructs at most $10+10\log_2 (m)$ candidate squares that cover~$\O$.
            \item[1.2.2.] The union of the candidate squares of $u_2$ contains $\O_{NW}\cup \O_{SE}$. In this case, Algorithm~\ref{alg:occp} constructs at most $5\log_2 (m)$ candidate squares that cover $\O_{NE}$ (due to Lemma~\ref{lema:5log(m)}). Therefore, our algorithm constructs at most $10+5\log_2 (m)$ candidate squares that cover~$\O$.
        \end{enumerate}
    \end{enumerate}
    \item Union of the candidate squares of $u_1$ contains $\O_{NW}\cup \O_{SE}$. In this case, Algorithm~\ref{alg:occp} constructs at most $2\cdot 5\log_2 (m)$ candidate squares that cover $\O_{SW}\cup \O_{NE}$ (due to Lemma~\ref{lema:5log(m)}). Therefore, our algorithm constructs at most $5+10\log_2 (m)$ candidate squares that cover~$\O$.
\end{enumerate}

Hence, we have the following theorem.

\begin{theorem}\label{theo:upperbound_improved}
    There exists an algorithm for the online class cover problem for squares that achieves a competitive ratio of $10+10\log_2 (m)$, where $m\ (\geq 2)$ is the number of red points.
\end{theorem}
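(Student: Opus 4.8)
The plan is to bolt together three facts already in hand. First, Lemma~\ref{lema:5log(m)} bounds by $5\log_2(m)$ the number of candidate squares that Algorithm~\ref{alg:occp} ever places to cover all blue points lying in a single quadrant of an optimal square $\O$. Second, Lemma~\ref{lema:candidate_squares} and Lemma~\ref{lm: type 1 either or} together give the dichotomy that, for an uncovered blue point $u$ in a quadrant $Q$ of $\O$, either some candidate square of $u$ contains all of $Q$, or the union of the candidate squares of $u$ contains the two quadrants of $\O$ that share an edge with $Q$; as the excerpt notes, this dichotomy holds for each of the four quadrants by symmetry. Third, each uncovered blue point triggers at most five candidate squares. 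The key consequence of the dichotomy is that once a candidate square containing a whole quadrant $Q$ has been placed, no subsequent uncovered point of $\O$ can fall in $Q$, so the uncovered points that still remain are confined to the quadrants not yet ``resolved'' in this way.

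Fix $\O\in\text{OPT}(\I)$ and let $u_1,u_2,\dots,u_k$ be the maximal subsequence of the input that lies in $\O$ and is uncovered on arrival; every candidate square the algorithm uses to cover a point of $\O$ is a candidate square of some $u_i$, so it suffices to bound the total number of these. I would then carry out a short finite case analysis driven by $u_1$ and, when needed, $u_2$; without loss of generality $u_1\in\O_{SW}$. If the union of the candidate squares of $u_1$ already contains $\O_{NW}\cup\O_{SE}$, every later uncovered point of $\O$ lies in $\O_{SW}\cup\O_{NE}$, so by Lemma~\ref{lema:5log(m)} at most $10\log_2(m)$ further candidate squares suffice, giving at most $5+10\log_2(m)$ in total. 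Otherwise some candidate square of $u_1$ contains $\O_{SW}$, and $u_2$ (if it exists) lies in $\O_{NW}$, $\O_{SE}$, or $\O_{NE}$. If $u_2\in\O_{NW}$ (the case $\O_{SE}$ is symmetric), apply the dichotomy to $u_2$: either a candidate square of $u_2$ contains $\O_{NW}$, leaving $\O_{SE}\cup\O_{NE}$ to be covered in at most $10\log_2(m)$ squares, for a total of at most $10+10\log_2(m)$; or the union of the candidate squares of $u_2$ contains $\O_{NE}\cup\O_{SW}$, leaving $\O_{NW}\cup\O_{SE}$ to be covered in at most $10\log_2(m)$ squares, again a total of at most $10+10\log_2(m)$. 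If instead $u_2\in\O_{NE}$: either a candidate square of $u_2$ contains $\O_{NE}$, leaving $\O_{NW}\cup\O_{SE}$ for at most $10\log_2(m)$ squares and a total of at most $10+10\log_2(m)$; or the union of the candidate squares of $u_2$ contains $\O_{NW}\cup\O_{SE}$, leaving only $\O_{NE}$ for at most $5\log_2(m)$ squares and a total of at most $10+5\log_2(m)$. In every branch Algorithm~\ref{alg:occp} uses at most $10+10\log_2(m)$ candidate squares to cover all blue points inside $\O$.

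Summing over the optimal solution then gives $|\text{ALG}(\I)|\le\sum_{\O\in\text{OPT}(\I)}\bigl(10+10\log_2(m)\bigr)=\bigl(10+10\log_2(m)\bigr)\,|\text{OPT}(\I)|$, which is the claimed competitive ratio for $m\ge 2$; for $m=1$ the bound $6$ from Theorem~\ref{theo:upperbound} already handles the problem, so restricting to $m\ge 2$ costs nothing.

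The part I expect to need the most care is the bookkeeping that glues the cases to the lemmas rather than any individual geometric step: one must check that ``a candidate square of some $u_i$ contains quadrant $Q$'' genuinely forces every later uncovered point of $\O$ out of $Q$ (so that Lemma~\ref{lema:5log(m)}, applied to the reduced list of still-unresolved quadrants, is legitimate), and one must be comfortable with the counting being deliberately loose — when $u_2$ itself lies in a quadrant that is later charged its full $5\log_2(m)$, the five candidate squares of $u_2$ get double-counted, which is harmless for an upper bound and is exactly what lets the arithmetic close at $10+10\log_2(m)$ uniformly across branches. The symmetry reductions among the four quadrants (and the matching of ``first uncovered point of $\O$'' with the ``first uncovered point of a quadrant'' used inside Lemma~\ref{lema:5log(m)}) should also be spelled out rather than merely invoked.
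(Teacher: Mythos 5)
Your proposal is correct and follows essentially the same route as the paper: the same dichotomy from Lemma~\ref{lema:candidate_squares} and Lemma~\ref{lm: type 1 either or}, the same case analysis on $u_1$ and $u_2$ (with the quadrant-wise bound of Lemma~\ref{lema:5log(m)} charged to the unresolved quadrants), and the same summation over the optimal squares. The only differences are cosmetic, namely the ordering of the cases and your explicit remark about double-counting, which the paper leaves implicit.
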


\section{Conclusion}\label{sec:conclud}
In this paper, we have discussed the online class cover problem. Our main results are proofs for a lower and upper bound of the problem for axis-parallel unit squares. 
Observe that the upper bound applies to the problem even if the geometric objects are translated copies of a rectangle. 
Notice that the lower and upper bounds of the problem depend on the maximum number of red points that can lie in a square. If this number is bounded by a constant, then so are the lower and upper bounds of the problem. Obtaining the same bounds for unit disks or translates of a convex object will be an interesting question. Since our algorithm and lower bound are based on the deterministic model, it would be important to question whether randomization helps to obtain a better competitive ratio.


\section*{Acknowledgement}
The authors would like to acknowledge Satyam Singh for participating in formulating the problem. We also thank the reviewers for their helpful comments and suggestions, which made the manuscript better.

\section*{Conflict of Interest}
The authors declare that there are no financial and non-financial competing interests that are relevant to the content of this article.

\bibliography{references}

\begin{thebibliography}{10}

\bibitem{AgarwalS98}
Pankaj~K. Agarwal and Subhash Suri.
\newblock Surface approximation and geometric partitions.
\newblock {\em {SIAM} J. Comput.}, 27(4):1016--1035, 1998.

\bibitem{AlonAABN09}
Noga Alon, Baruch Awerbuch, Yossi Azar, Niv Buchbinder, and Joseph Naor.
\newblock The online set cover problem.
\newblock {\em {SIAM} J. Comput.}, 39(2):361--370, 2009.

\bibitem{AschnerKMY13}
Rom Aschner, Matthew~J. Katz, Gila Morgenstern, and Yelena Yuditsky.
\newblock Approximation schemes for covering and packing.
\newblock In {\em {WALCOM:} Algorithms and Computation, 7th International
  Workshop, {WALCOM} 2013, February 14-16, 2013. Proceedings}, volume 7748 of
  {\em Lecture Notes in Computer Science}, pages 89--100. Springer, 2013.

\bibitem{BeregCDPSV12}
Sergey Bereg, Sergio Cabello, Jos{\'{e}}~Miguel
  D{\'{\i}}az{-}B{\'{a}}{\~{n}}ez, Pablo P{\'{e}}rez{-}Lantero, Carlos Seara,
  and Inmaculada Ventura.
\newblock The class cover problem with boxes.
\newblock {\em Comput. Geom.}, 45(7):294--304, 2012.

\bibitem{Borodin98}
Allan Borodin and Ran El-Yaniv.
\newblock {\em Online computation and competitive analysis.}
\newblock Cambridge University Press, 1998.

\bibitem{CannonC04}
Adam Cannon and Lenore Cowen.
\newblock Approximation algorithms for the class cover problem.
\newblock {\em Ann. Math. Artif. Intell.}, 40(3-4):215--224, 2004.

\bibitem{cannon1998approximate}
Adam~H Cannon, Lenore~J Cowen, and Carey~E Priebe.
\newblock Approximate distance classification.
\newblock {\em Computing Science and Statistics}, pages 544--549, 1998.

\bibitem{CaragiannisFKP07}
Ioannis Caragiannis, Aleksei~V. Fishkin, Christos Kaklamanis, and Evi
  Papaioannou.
\newblock Randomized on-line algorithms and lower bounds for computing large
  independent sets in disk graphs.
\newblock {\em Discret. Appl. Math.}, 155(2):119--136, 2007.

\bibitem{CardinalDI21}
Jean Cardinal, Justin Dallant, and John Iacono.
\newblock Approximability of (simultaneous) class cover for boxes.
\newblock In {\em Proceedings of the 33rd Canadian Conference on Computational
  Geometry, {CCCG} 2021, August 10-12, 2021}, pages 149--156, 2021.

\bibitem{ChanZ09}
Timothy~M. Chan and Hamid Zarrabi{-}Zadeh.
\newblock A randomized algorithm for online unit clustering.
\newblock {\em Theory Comput. Syst.}, 45(3):486--496, 2009.

\bibitem{CharikarCFM04}
Moses Charikar, Chandra Chekuri, Tom{\'{a}}s Feder, and Rajeev Motwani.
\newblock Incremental clustering and dynamic information retrieval.
\newblock {\em {SIAM} J. Comput.}, 33(6):1417--1440, 2004.

\bibitem{ChenFKLMMPSSWW07}
Ke~Chen, Amos Fiat, Haim Kaplan, Meital Levy, Jir{\'{\i}} Matousek, Elchanan
  Mossel, J{\'{a}}nos Pach, Micha Sharir, Shakhar Smorodinsky, Uli Wagner, and
  Emo Welzl.
\newblock Online conflict-free coloring for intervals.
\newblock {\em {SIAM} J. Comput.}, 36(5):1342--1359, 2007.

\bibitem{ChenKS09}
Ke~Chen, Haim Kaplan, and Micha Sharir.
\newblock Online conflict-free coloring for halfplanes, congruent disks, and
  axis-parallel rectangles.
\newblock {\em {ACM} Trans. Algorithms}, 5(2):16:1--16:24, 2009.

\bibitem{cowen1997randomized}
Lenore~J Cowen and Carey~E Priebe.
\newblock Randomized nonlinear projections uncover high-dimensional structure.
\newblock {\em Advances in Applied Mathematics}, 19(3):319--331, 1997.

\bibitem{DeJKS24}
Minati De, Saksham Jain, Sarat~Varma Kallepalli, and Satyam Singh.
\newblock Online geometric covering and piercing.
\newblock {\em Algorithmica}, pages 1--27, 2024.

\bibitem{DeKS23}
Minati De, Sambhav Khurana, and Satyam Singh.
\newblock Online dominating set and coloring.
\newblock In {\em Combinatorial Optimization and Applications - 17th
  International Conference, {COCOA} 2023, December 15-17, 2023, Proceedings,
  Part {I}}, volume 14461 of {\em Lecture Notes in Computer Science}, pages
  68--81. Springer, 2023.

\bibitem{DeS24}
Minati De and Satyam Singh.
\newblock Online hitting of unit balls and hypercubes in $\mathbb{R}^d$ using
  points from $\mathbb{Z}^d$.
\newblock {\em Theor. Comput. Sci.}, 992:114452, 2024.

\bibitem{Devinney03}
Jason~Gary DeVinney.
\newblock {\em The class cover problem and its applications in pattern
  recognition}.
\newblock Ph.{D}. dissertation, The Johns Hopkins University, 2003.

\bibitem{DumitrescuGT20}
Adrian Dumitrescu, Anirban Ghosh, and Csaba~D. T{\'{o}}th.
\newblock Online unit covering in {E}uclidean space.
\newblock {\em Theor. Comput. Sci.}, 809:218--230, 2020.

\bibitem{DumitrescuT22}
Adrian Dumitrescu and Csaba~D. T{\'{o}}th.
\newblock Online unit clustering and unit covering in higher dimensions.
\newblock {\em Algorithmica}, 84(5):1213--1231, 2022.

\bibitem{Eidenbenz}
Stephan Eidenbenz.
\newblock Online dominating set and variations on restricted graph classes.
\newblock {\em Technical Report No 380, ETH Library}, 2002.

\bibitem{EvenS14}
Guy Even and Shakhar Smorodinsky.
\newblock Hitting sets online and unique-max coloring.
\newblock {\em Discret. Appl. Math.}, 178:71--82, 2014.

\bibitem{OliverHKSV14}
Oliver G{\"{o}}bel, Martin Hoefer, Thomas Kesselheim, Thomas Schleiden, and
  Berthold V{\"{o}}cking.
\newblock Online independent set beyond the worst-case: Secretaries, prophets,
  and periods.
\newblock In {\em Automata, Languages, and Programming - 41st International
  Colloquium, {ICALP} 2014, July 8-11, 2014, Proceedings, Part {II}}, volume
  8573 of {\em Lecture Notes in Computer Science}, pages 508--519. Springer,
  2014.

\bibitem{ArindamLRSW23}
Arindam Khan, Aditya Lonkar, Saladi Rahul, Aditya Subramanian, and Andreas
  Wiese.
\newblock Online and dynamic algorithms for geometric set cover and hitting
  set.
\newblock In {\em 39th International Symposium on Computational Geometry, SoCG
  2023, June 12-15, 2023}, volume 258 of {\em LIPIcs}, pages 46:1--46:17.
  Schloss Dagstuhl - Leibniz-Zentrum f{\"{u}}r Informatik, 2023.

\bibitem{Mitchell93}
Joseph~SB Mitchell.
\newblock Approximation algorithms for geometric separation problems.
\newblock Technical report, AMS Dept., SUNY Stony Brook, NY, 1993.

\bibitem{Shanjani20}
Sima~Hajiaghaei Shanjani.
\newblock Hardness of approximation for red-blue covering.
\newblock In {\em Proceedings of the 32nd Canadian Conference on Computational
  Geometry, {CCCG} 2020, August 5-7, 2020}, pages 39--48, 2020.

\end{thebibliography}

\end{document}